\documentclass[preprint]{elsarticle}

\usepackage{natbib}
\usepackage{geometry}
\usepackage{fleqn}
\usepackage{graphicx}
\usepackage{hyperref}
\usepackage{amsmath, amsthm, amssymb, amsfonts,mathrsfs}
\usepackage{braket}
\usepackage{color}
\usepackage{bbold}

\title{Irreducibility and regularisation properties of Gaussian quantum Markov semigroups}
\date{}

\author[1]{Franco Fagnola}
\ead[1]{franco.fagnola@polimi.it}
\affiliation[1]{organization={Dipartimento di Matematica, Politecnico di Milano},
addressline={Via Edoardo Bonardi 9},
postcode={20133},
city={Milano},
country={Italy}}
\cortext[1]{franco.fagnola@polimi.it}

\author[1]{Federico Girotti}
\ead[1]{federico.girotti@polimi.it}
\cortext[1]{federico.girotti@polimi.it}

\newtheorem{defi}{Definition}
\newtheorem{lemma}[defi]{Lemma}
\newtheorem{coro}[defi]{Corollary}
\newtheorem{prop}[defi]{Proposition}
\newtheorem{theo}[defi]{Theorem}
\newtheorem{result}[defi]{Result}

\theoremstyle{definition}

\def\hh{\mathfrak{h} }
\def\nn{\mathbb{N}}
\def\CC{\mathbb{C}}
\def\RR{\mathbb{R}}
\def\FS{\Gamma(\CC^d)}
\def\TT{\mathcal{T}}
\def\LL{\mathcal{L}}
\def\NN{\mathcal{N}(\TT)}
\def\hh{\mathfrak{h}}
\def\tr{{\rm tr}}

\begin{document}
\begin{abstract}
We study regularisation and irreducibility properties of Gaussian quantum Markov semigroups (GQMSs) acting on continuous-variable quantum systems. We first identify a natural notion of regularity for operators in this setting, which allows us to formulate and characterise the smoothing effects of GQMSs in terms of algebraic conditions involving the drift and quantum diffusion matrices. These conditions establish a connection with the controllability theory of quantum linear systems and with the structure of decoherence-free subsystems.

We then characterise irreducibility through several equivalent algebraic criteria: one formulated in terms of the drift and quantum diffusion matrices, one in terms of the operators appearing in the generalised GKLS representation of the generator, and a third given by a quantum analogue of H\"ormander's condition. A central and somewhat surprising consequence is that, in contrast with the classical case, irreducibility is strictly stronger than conditions ensuring regularisation. Our results provide an algebraic framework for analysing these properties and lay the groundwork for a broader study of reducible Gaussian quantum Markov semigroups and, more generally, more general relevant quantum Markov semigroups on continuous-variable systems.
\end{abstract}
\maketitle

\makeatletter
\def\ps@pprintTitle{%
  \let\@oddhead\@empty
  \let\@evenhead\@empty
  \let\@oddfoot\@empty
  \let\@evenfoot\@oddfoot
}
\makeatother

\section{Introduction}

Gaussian quantum Markov semigroups (GQMSs) form a class of evolutions on the space of bounded linear operators on the bosonic Fock space $\FS$ (in this work, we restrict our attention to finitely many modes). They play a central role in the study of open quantum systems, providing convenient models for quantum optical experiments, atomic ensembles, quantum memories, and related applications. For this reason, they have been extensively studied in the field of quantum control under the name of linear quantum systems (see \cite{DZ22,NY17} and references therein).

\bigskip The name Gaussian comes from the fact that GQMSs can be characterised as the class of those quantum Markov semigroups that preserve the set of quantum Gaussian states (\cite{PO22}): given an initial quantum Gaussian state $\rho$ with mean $\mathbf{m} \in \RR^{2d}$ and covariance $\mathbf{\Sigma} \in M_{2d}(\RR)$, its evolution under the semigroup is a family of quantum Gaussian states $(\rho_t)_{t \geq 0}$ with parameters $\mathbf{m}_t$ and $\mathbf{\Sigma}_t$ satisfying the following Cauchy problem:
\begin{align}\label{eq:par}
&\frac{d\mathbf{m}_t}{dt}=\mathbf{Z}^T\mathbf{m}_t+\boldsymbol{\zeta}, \quad \mathbf{m}_0=\mathbf{m},\\
&\frac{d\mathbf{\Sigma}_t}{dt}=\mathbf{Z}^T\mathbf{\Sigma}_t + \mathbf{\Sigma}_t \mathbf{Z} + \mathbf{C}, \quad \mathbf{\Sigma}_0=\mathbf{\Sigma}\nonumber\end{align}
for some $\mathbf{Z}, \mathbf{C} \in M_{2d}(\RR)$ and $\boldsymbol{\zeta} \in \RR^{2d}$ characterising the semigroup. The only constraint that the parameters appearing in the evolution in eq. \eqref{eq:par} need to satisfy is the following one: the matrix with complex entries defined as
\begin{equation} \label{eq:qdiffusion} \mathbf{C}_{Z}:=\mathbf{C}- i(\mathbf{Z}^*\mathbf{J}+\mathbf{J}\mathbf{Z}),\end{equation}
to which we will refer as \textit{quantum diffusion matrix} in the following, needs to be positive semidefinite as a (complex) linear operator acting on $\CC^{2d}$ ($\mathbf{J}$ denotes the symplectic matrix); this ensures that the Heisenberg uncertainty principle is not violated. We remark that condition in eq. \eqref{eq:qdiffusion} can be equivalently reformulated in terms of $\mathbf{C}+i(\mathbf{Z}^*\mathbf{J}+\mathbf{J}\mathbf{Z})$ and every statement that we will prove and that involves $\mathbf{C}- i(\mathbf{Z}^*\mathbf{J}+\mathbf{J}\mathbf{Z})$ keeps holding true replacing it with $\mathbf{C}+ i(\mathbf{Z}^*\mathbf{J}+\mathbf{J}\mathbf{Z})$.

Moreover, the infinitesimal generator of a GQMS can be formally written in a generalised Gorini-Kossakowski-Sudarshan-Lindblad (GKSL) form with a quadratic Hamiltonian $H$ and linear jump operators $\{L_\ell\}_{\ell=1}^m$, where quadratic and linear are to be understood in terms of creation and annihilation operators.

From a mathematical point of view, GQMSs are the noncommutative counterpart of Ornstein-Uhlenbeck semigroups (see for instance \cite{LMP20,FPSU24} and references therein), which is the family of classical Markov semigroup corresponding to the solution of the following stochastic differential equation:
\begin{equation} \label{eq:ito} dX_t=(\mathbf{Z}^TX_t+\boldsymbol{\zeta})dt + \mathbf{B}dW_t,\end{equation}
where $\mathbf{Z} \in M_d(\RR), \, \mathbf{B} \in M_{d\times m}(\RR), \, \boldsymbol{\zeta} \in \RR^d$ and $(W_t)_{t \geq 0}$ is a standard $m$-dimensional Brownian motion. Indeed, if $X_0$ is a Gaussian random variable with mean $\mathbf{m}$ and covariance matrix $\boldsymbol{\Sigma}$, then $X_t$ is a Gaussian random variable for every $t \geq 0$ with mean and covariance evolving according to the Cauchy problem in eq. \eqref{eq:par}, where $\mathbf{C}=\mathbf{B}\mathbf{B}^T.$ From eq. \eqref{eq:ito} one can easily understand why $\mathbf{Z}$ and $\mathbf{C}$ are usually referred to as drift and diffusion matrices, respectively. The infinitesimal generator of the semigroup corresponding to $(X_t)_{t \geq 0}$, which we will denote by $(T_t)_{t \geq 0}$, has the following expression on smooth functions:
\begin{equation}\label{eq:OUfp}
L(f)(x)=\langle \nabla f(x),\mathbf{Z}^Tx+\boldsymbol{\zeta}  \rangle+\frac{1}{2}\tr(\mathbf{C}{\rm Hess}f(x)),\end{equation}
where $\nabla f$ and ${\rm Hess} f$ denote the gradient and the Hessian matrix of $f$, respectively.

\bigskip Ornstein-Uhlenbeck processes have been scrupulously investigated and several important features have been well understood and characterised in terms of the drift and the diffusion matrices (\cite{LMP20}), e.g. existence and uniqueness of stationary measures, irreducibility, regularisation properties, the structure of the spectrum, contractivity properties, etc... On the other hand, despite been ubiquitous in physical applications, a similar rigorous mathematical analysis of GQMSs was still lacking until very recent years (except for some pioneering works, see for instance \cite{CFL00,KP04,CS08}), when there have been considerable progresses in this regard: in \cite{AFP21} existence and uniqueness of normal invariant states and irreducibility were completely understood in one mode; the authors of \cite{AFP22} characterised the decoherence-free subalgebra of GQMSs; hypercontractivity and spectral properties when studying the extension of the semigroup acting on square integrable operators with respect to the unique invariant measure have been investigated in \cite{FPSU24,FL25,Li25,SXZ26}; the existence and the structure of normal invariant states together with the long time behaviour when there exists at least a normal invariant state have been studied in \cite{GP26}; finally, GNS symmetric and weak coupling limit GQMS were completely characterised in \cite{BSFQ26} and \cite{PSU26}, respectively. 

Despite the problem has been attacked in several works (\cite{AFP22b,FP22,GP26}), a complete characterisation of irreducibility of the semigroup in terms of $\mathbf{Z}$, $\mathbf{C}$ and $\boldsymbol{\zeta}$ has remained out of reach so far. It has been shown (\cite{AFP21}) that in one mode irreducibility is equivalent to the condition (that we will denote as \textit{$\mathbb{G}$-cond}) that the complex linear span of the following operators
$$\mathbf{1}, \,L_\ell, \, [G,L_\ell], \, [G, [G,L_\ell]], \dots
$$
contains all creation and annihilation operators, where $G$ is defined as $-iH-\frac{1}{2}\sum_{\ell=1}^{m}L^*_\ell L_\ell$ (we postpone the discussion about domain issues later in the article); this has been proved to be equivalent to irreducibility in many modes only at the cost of assuming further technical assumptions (see \cite{FP22}). With a completely different approach, \cite{GP26} showed that if the semigroup admits a normal invariant state, then irreducibility is equivalent to the following condition (which we will call \textit{$\mathbf{C}_{Z}$-cond}):
$$\text{there are no nontrivial $\mathbf{Z}$-invariant subspaces of } \CC^{2d} \text{ contained in } \ker(\mathbf{C}_{Z}).$$
The relationship between the two condition has remained an open question so far.

\bigskip In order to better understand what requirements on the parameters one should expect to be equivalent to irreducibility, it is enlightening to look at the classical case. For the classical Ornstein-Uhlenbeck semigroup one has that
\begin{equation} \label{eq:classcond}
\text{there are no nontrivial $\mathbf{Z}$-invariant subspaces of $\RR^d$ contained in $\ker(\mathbf{C})$}
\end{equation}
if and only if the corresponding semigroup is $\mathscr{L}$-irreducible, where $\mathscr{L}$ stands for the $d$-dimensional Lebesgue measure (see for instance \cite{MTG09} for the notion of irreducibility with respect to a measure). Indeed, one can check that if $X_0=x \in \RR^d$, then $X_t$ has a Gaussian distribution with strictly positive (under condition in eq. \eqref{eq:classcond}) covariance matrix $\boldsymbol{\Sigma}_t$; therefore, for every $x \in \RR^d$, for every Borel set $A \subseteq \RR^{d}$ such that $\mathscr{L}(A)>0$ and for every strictly positive time $t$, one has that
$\mathbb{P}(X_t \in A|X_0=x)>0.$
Moreover, if condition in eq. \eqref{eq:classcond} holds true, the semigroup has also some regularising effects: for every probability density $f$ with respect to the Lebesgue measure, one has that
\begin{equation}\label{eq:clrepresentation}T_{t*}(f)=g_t *S_t(f),\end{equation}

where $S_t(f)(x)=e^{-t\tr(\mathbf{Z})}f(e^{-t\mathbf{Z}^T}x)$ and $g_t$ is the density function of a Gaussian distribution with covariance matrix $\boldsymbol{\Sigma}_t$. We use the convention of denoting by $T_{t*}$ the semigroup at time $t$ acting on probability densities, that is the one corresponding to the Fokker-Plank equation and the describing the evolution of the law of the underlying stochastic process. Therefore, $T_{t*}(f)$ is infinitely differentiable for every positive time $t$. The condition in eq. \eqref{eq:classcond} can be seen to be equivalent to H\"ormander condition for the simple stochastic differential equation in eq. \eqref{eq:ito}, i.e.
\begin{center} 
at every point in $\RR^{d}$ the linear space spanned by the vector fields
\begin{equation} \label{eq:clhormander}A_1, \dots, A_m, \quad [A_i,A_j],\,\text{ with } 0\leq i,j \leq m, \quad  [A_i,[A_j,A_k]],\text{ with } 0\leq i,j,k \leq m,\dots
\end{equation}
is the whole $\RR^d$,
\end{center}
where $[\cdot, \cdot]$ stands for the commutator of two vector fields, $A_0(x)=\mathbf{Z}^Tx+\boldsymbol{\zeta}$, and $A_j(x)=\mathbf{B}_{(j)}$ is the constant vector field equal to the $j$-th column of the matrix $\mathbf{B}$ at every point. Therefore, the regularising effect can be seen as a particular instance of Theorem 2.3.3 in \cite{Nu13}.

\bigskip The classical case represents a solid starting point for our analysis, however the quantum case turns out be far harder to tackle: for instance, the simple reasoning behind the equivalence between irreducibility, regularisation properties and the condition in eq. \eqref{eq:classcond} breaks down in the noncommutative framework. Indeed, several criticalities arise: there are not quantum counterparts of delta measures, the quantum counterpart of $S_t$ fails to have several nice properties, and the convex hull of quantum Gaussian states is not dense in the set of normal states. There is also a deeper reason for the proof to fail to extend to the quantum case: surprisingly, it turns out that irreducibility and regularisation properties are no longer equivalent in this setting.

Let us denote by $\TT_t$ the considered GQMS and by $\TT_{t*}$ its predual, acting on normal states; the first main result of the paper is to show that the condition
\begin{equation} \label{eq:realCzcond} \tag{real $\mathbf{C}_Z$-cond}\text{there are no nontrivial $\mathbf{Z}$-invariant subspaces of } \RR^{2d} \text{ contained in } \ker(\mathbf{C}_{Z}),\end{equation}
which we will denote as \textit{real $\textbf{C}_{Z}$-cond}, is equivalent to the fact that for every initial state $\rho$ and for every strictly positive time $t>0$, $\TT_{t*}(\rho_t)$ is \textit{smooth}, meaning that one can make sense of the following operator $\partial_{R_1}\cdots \partial_{R_m}(\TT_{t*}(\rho_t))$ for every finite collection $R_1,\dots, R_m$ of field operators, where $\partial_{x}$ is the commutator with respect to $x$. This is a natural notion of smoothness in the quantum case, as we explain in section \ref{sec:pre}; moreover, we show in Proposition \ref{prop:Sobolev} that one cannot expect better regularisation properties from GQMSs, e.g. the one considered in \cite{RGM25}. Under condition \textit{real $\textbf{C}_{Z}$-cond}, we show that there exists a representation formula for the semigroup that reads
\begin{equation}\label{eq:convo}
\TT_{t*}(\rho)=g_t*{\cal S}_t(\rho),\end{equation}
where $g_t$ is a Gaussian kernel and ${\cal S}_t$ is the quantum counterpart of $S_t$ (see eq.\eqref{eq:clrepresentation}); such a representation formula is particularly useful for deducing quantitative regularity estimates: for instance, we exploit it to deduce some simple estimates on the Hilbert-Schmidt norm of operators of the form $\partial_R(\TT_{t*}(\rho))$, where $R$ is a field operator. Finally, we prove that \textit{real $\textbf{C}_{Z}$-cond} is also equivalent to the fact that the decoherence-free subalgebra ${\cal N}(\TT)$ is trivial, i.e. equal to $\CC\mathbf{1}$; the decoherence-free subalgebra is a relevant object for the study of long-time behaviour of QMSs and for GQMSs is the biggest Von Neumann algebra on which $(\TT_t)_{t \geq 0}$ acts as a group of $^*$-automorphism. We can therefore informally state the first important contribution of the work as follows (see Theorem \ref{theo:regularisation} for the rigorous statement).

\begin{result}
Let $\TT$ be a Gaussian quantum Markov semigroups; then, the following statements are equivalent:
\begin{enumerate}
    \item real $\mathbf{C}_Z$-cond holds true;
    \item $\TT_{t*}(\rho)$ is infinitely differentiable for all positive times $t$ and initial densities $\rho$;
    \item ${\cal N}(\TT)=\CC \mathbf{1}.$
\end{enumerate}
\end{result}

We remark that real $\mathbf{C}_Z$-cond is related to the notion of controllability considered in \cite{Ya14}: indeed, real $\mathbf{C}_Z$-cond is just a restatement of the fact that the controllability matrix is full rank.

Let us now turn to results regarding irreducibility. The infinitesimal generator of GQMSs is in a generalised GKSL form and it can be (formally) written as
$$\LL(X)=i\partial_H(X)-\frac{1}{2}\sum_{\ell=1}^{m}L_\ell^*L_\ell X -2L_\ell^* X L_\ell + X L_\ell^*L_\ell$$
when acting on suitable operators. With some algebraic manipulations one can write ${\cal L}$ as the sum of a drift part ${\cal L}_1$ and a diffusion part
$${\cal L}_2(X)=-\frac{1}{2}\sum_{\ell=1}^m\partial_{L_\ell^*}\partial_{L_\ell}(X),$$
in the sense that $\LL_1$ (resp. $\LL_2$) encodes all the information regarding $\mathbf{Z}$ and $\boldsymbol{\zeta}$ (resp. $\mathbf{C}$). It is natural, therefore, to define a quantum H\"ormander condition (\textit{H\"or-cond}) in terms of $\LL_1$ and $\partial_{L_\ell}$s (for more details see eq. \eqref{eq:Horcond}). Thanks to the convenient splitting of the infinitesimal generator, we are able to prove the equivalence between \textit{$\mathbb{G}$-cond}, \textit{$\textbf{C}_{Z}$-cond} and \textit{H\"or-cond} (see Theorem \ref{th:equivcond}). Then, noticing that \textit{$\textbf{C}_{Z}$-cond} is stronger than \textit{real $\textbf{C}_{Z}$-cond} and leveraging on the regularisation effect discussed above, we are able to show that irreducibility of the semigroup is equivalent to \textit{$\mathbb{G}$-cond}, \textit{$\textbf{C}_{Z}$-cond} and \textit{H\"or-cond} (see Theorem \ref{theo:main}). Indeed, the main technical obstacle faced in previous works (\cite{AFP22b,FP22}) was due to the fact that it was not clear whether one could find ``smooth vectors'' in the range of subharmonic projections, where by ``smooth vectors'' we mean that one can evaluate on them arbitrary polynomials in creation and annihilation operators. We solve this issue using regularisation effects of GQMSs and the properties of smooth operators proved in Proposition \ref{prop:schwarz}. We briefly sum up the second main contribution of our work.

\begin{result}
For a Gaussian quantum Markov semigroup $\TT$, the following statements are equivalent:
\begin{enumerate}
\item \textit{$\mathbb{G}$-cond} holds true;
\item \textit{$\textbf{C}_{Z}$-cond} holds true;
\item \textit{H\"or-cond} holds true;
\item $\TT$ is irreducible.
\end{enumerate}
\end{result}

Conversely to the classical case, irreducibility is a strictly stronger property than having regularising effects: indeed, we can construct explicit examples of GQMSs satisfying \textit{real $\textbf{C}_{Z}$-cond}, but not \textit{$\textbf{C}_{Z}$-cond} (in fact, we can characterise all such GQMSs as discussed after Corollary \ref{coro:irrimplnt}). Finally, we remark that our results show that irreducibility implies that the decoherence-free subalgebra is trivial (Corollary \ref{coro:irrimplnt}), which is a known fact for uniformly continuous semigroups admitting a faithful normal invariant state (see, for instance, Proposition 4.3 in \cite{DFSU16}), but it is not known to hold in full generality.

The structure of the paper is as follows: in section \ref{sec:pre} we set the notation and recall the concepts and results which are needed to understand this work. Section \ref{sec:main} is divided into two subsections: the first one presents the results regarding regularisation effects of Gaussian quantum Markov semigroups and their relationship with the decoherence-free subalgebra, a representation formula using quantum convolution of the semigroup action, and some properties and results concerning smooth operators; the second part provides a description of the literature regarding irreducibility of GQMSs so far and presents the main results concerning the several equivalent characterisations of irreducibility

\section{Preliminaries and notation} \label{sec:pre}
In this section we will set the notation and recall the main definitions and results that are needed in order to read this work.

\textbf{Bosonic Fock space.} Let $\mathfrak{h}:=\FS$ be the symmetric or bosonic Fock space over $\CC^d$; we recall that this is the closed subspace of the free Fock space
$$
\bigoplus_{n\geq 0} (\CC^d)^{\otimes n}
$$
generated by exponential vectors, i.e. those vectors of the form
$$
e_z=\sum_{n \geq 0} \frac{z^{\otimes n}}{\sqrt{n!}}, \quad z \in \CC^d.
$$
In the physical literature normalized exponential vectors $e^{-\|z\|^2/2}e_z$ are usually called coherent vectors and $\ket{e_0}\bra{e_0}$ is known as the vacuum state. We recall that $\FS$ is isometrically isomorphic to $d$ copies of $\Gamma(\CC)$ via the following correspondence:
\begin{align*}
    \FS &\rightarrow \Gamma(\CC) \otimes \cdots \otimes \Gamma(\CC)\\
    e_z &\mapsto e_{z_1} \otimes \cdots \otimes e_{z_d}
\end{align*}
where $z_i$'s are the coordinates of $z \in \CC^d$ in any orthonormal basis. Every $\Gamma(\CC)$ corresponds to a mode, is isomorphic to $\ell^2(\nn)$ and has its own creation, annihilation and number operators; let $\{e(n_1,\dots,n_d):=e(n_1) \otimes \cdots \otimes e(n_d)\}_{n_1,\dots, n_d \in \nn}$ be the canonical orthonormal basis for $\ell^2(\mathbb{N})^{\otimes d}\simeq \FS$, then the annihilation, creation and number operators corresponding to the $j$-th mode $a_j$, $a^\dagger_j$, $N_j$ act in the following way on the basis elements:
\begin{align*}
&a_je(n_1,\dots,n_d)=\sqrt{n_j}\,e(n_1,\dots,n_{j-1}, n_j-1, n_{j+1}, \dots , n_d),\\
&a^\dagger_je(n_1,\dots,n_d)=\sqrt{n_j+1}\,e(n_1,\dots,n_{j-1}, n_j+1, n_{j+1}, \dots,  n_d),\\
&N_je(n_1,\dots,n_d)=n_j \,e(n_1, \dots,  n_d).
\end{align*}
If any $n_j<0$, then $e(n_1, \dots, n_d)$ has to be understood as $0$. We recall that $N_j=a_j^\dagger a_j$ and that the linear space $D$ of finite linear combinations of the vectors of the canonical basis is an essential domain for all such operators.

Another family of operators of fundamental importance in our analysis is the set of Weyl operators: given $z \in \CC^d$, the corresponding Weyl operator is the unique unitary operator acting in the following way on coherent vectors:
\begin{align} \label{eq:Wexpo}
W(z):\FS &\longrightarrow \FS\\
e^{-\frac{\|w\|^2}{2}}e_w &\longmapsto e^{-i\Im(\langle z,w \rangle)} e^{-\frac{\|w+z\|^2}{2}}e_{w+z}.
\end{align}
Weyl operators satisfy the exponential form of canonical commutation relations, i.e.
\begin{equation} \label{eq.expccr}
W(z+w)=e^{i\Im(\langle z,w \rangle)}W(z)W(w), \quad z,w \in \CC^d.
\end{equation}
Moreover, the set of Weyl operators is ${\rm w}^*$-dense in $B(\hh)$. To any $z \in \CC^d$, using Stone's theorem we can associate the unique self adjoint operator $R(z)$ which is the generator of the strongly continuous group $t \mapsto W(tz)$; in this sense, we will write $W(z)=e^{iR(z)}$. $R(z)$'s are called quadratures or field operators; let $\{f_1,\dots,f_d\}$ be the canonical basis of $\CC^d$, then one can check that
$$
R(f_j)=-\sqrt{2}P_j, \quad R(if_j)=\sqrt{2}Q_j, \quad j=1,\dots, d,$$
where $Q_j$ and $P_j$ are position and momentum observables, respectively, corresponding to the $j$-th mode. In general, one can see that
\begin{equation} \label{eq:fields}\sum_{j=1}^{d}\sqrt{2}\Im(z_j) Q_j-\sqrt{2}\Re(z_j) P_j\subseteq R(z), \quad z=(z_1,\dots, z_d) \in \CC^d,
\end{equation}
where the sum is defined on a common domain for the operators involved (e.g. $D$).

\bigskip \textbf{Symplectic structure of $\CC^d$.} Eq. \eqref{eq:fields} hints that, when one deals with field operators, it might be convenient to consider the real Hilbert space structure of $\CC^d$ as well and the following identification with $\RR^{2d}$:
\begin{align*}
\CC^d &\longrightarrow \RR^{2d}\\
x+iy &\longmapsto \begin{pmatrix} x \\ y\end{pmatrix}.
\end{align*}
The real inner product on $\RR^{2d}$ corresponds to
$$\langle z,w \rangle_\RR:=\Re(\langle z,w \rangle), \quad z,w \in \CC^d.$$
We will use the bold font $\mathbf{z}$ to denote the vector in $\RR^{2d}$ corresponding to $z \in \CC^d$. Given a real linear vector subspace $V$ in $\CC^{d}$ we will denote by $\boldsymbol{{\cal V}}$ its image via the identification above, i.e.
$$
\boldsymbol{{\cal V}}:=\left \{ \mathbf{z} \in \RR^{2d} : \, z \in V\right \}.
$$
Let $A$ be a real linear operator on $\CC^d$, one can always write it in the form
\begin{equation} \label{eq:A1A2}
Az=A_1z+A_2\overline{z},
\end{equation}
where $A_1$, $A_2$ are complex linear operators on $\CC^d$. As an operator on $\RR^{2d}$, $A$ reads as
\begin{equation} \label{eq:boldA}
\mathbf{A}=\begin{pmatrix} \Re(A_1)+\Re(A_2) & \Im(A_2)-\Im(A_1) \\
\Im(A_1)+\Im(A_2) & \Re(A_1)-\Re(A_2)
\end{pmatrix}.
\end{equation}
With an abuse of notation we will denote by $\mathbf{A}$ its complexification as well, acting on $\CC^{2d}$ and we will call spectrum of $\mathbf{A}$, denoted by ${\rm Sp}(\mathbf{A})$, the set of those $z \in \CC$ such that $z-\mathbf{A}$ is not invertible as an operator on $\CC^{2d}$.

The imaginary part of the complex inner product on $\CC^d$, which appears in the commutation relations between Weyl operators (Eq. \eqref{eq.expccr}), is a non-degenerate symplectic form on $\CC^d$, i.e. it is a bilinear form satisfying the following two requirements:
\begin{itemize}
\item $\Im(\langle z,w\rangle=-\Im(\langle w,z\rangle)$, $z,w \in \CC^d$;
\item $\Im(\langle z,w\rangle )=0$ for all $w \in \CC^d$ if and only if $z=0$.
\end{itemize}
We will use the notation $\sigma(z,w):=\Im(\langle z,w \rangle)$. One can immediately check that the symplectic form in $\RR^{2d}$ reads as
$$\sigma(z,w)=\langle \mathbf{z}, \mathbf{J} \mathbf{w}\rangle, \text{ where } \mathbf{J}=\begin{pmatrix} \mathbf{0} & \mathbf{1} \\
-\mathbf{1} & \mathbf{0} \end{pmatrix}.$$
$\mathbf{0}$ (resp. $\mathbf{1}$) is the $d\times d$-matrix with all entries equal to $0$ (resp. the $d$-dimensional identity matrix). $\mathbf{J}$ is called symplectic matrix. With an abuse of notation, for every $z \in \CC^d$ we will use also the notation $W(\mathbf{z})$ (resp. $R(\mathbf{z})$) to denote $W(z)$ (resp. $R(z)$).

\bigskip \textbf{Quantum Fourier transform, quantum convolution and smooth operators.} An expository account of the quantum Fourier transform can be found in \cite[Section 5.4]{Ho82}. Let us denote by $L^1(\hh)$ (respectively $L^2(\hh)$) the space of trace class (resp. Hilbert-Schmidt) operators on $\hh$; we denote by $\| \, \|_1$ and $\| \, \|_2$ the corresponding norms; moreover, $L^2(\RR^{2d},2^{d}\mathbf{dz})$ will denote the space of complex valued square integrable functions on $\RR^{2d}$ with respect to $2^{d}$ times the Lebesgue measure. Any trace class operator $T \in L^1(\hh)$ is uniquely determined by the following function, that we call quantum Fourier transform 
$$\widehat{T}(\mathbf{z})=\tr(TW(\mathbf{z})), \quad \mathbf{z} \in \RR^{2d}.$$
If $T$ is a state, $\widehat{T}$ is also known as quantum characteristic function. We recall that the map $T \mapsto \widehat{T}$ extends uniquely to a unitary map from $L^2(\hh)$ to $L^2(\RR^{2d},2^{d}\mathbf{dz}).$

The notions of smoothness and decay of a function in $L^2(\RR^{2d},2^{d}\mathbf{dz})$ along a direction $\mathbf{w}$ can be expressed in terms of the following families of strongly continuous unitary groups:
$$\widehat{{\cal U}}^{\mathbf{w}}_t(f)(\mathbf{z})=e^{i2\langle \mathbf{w},\mathbf{z} \rangle t}f(\mathbf{z}), \quad \widehat{{\cal V}}^{\mathbf{w}}_t(f)(\mathbf{z})=f(\mathbf{z}+ 2\mathbf{w}).
$$
Indeed, the corresponding infinitesimal generators are given by $i$ times the following selfadjoint operators:
$$\widehat{{\cal H}}_{\mathbf{w}}(f)(\mathbf{z})=2\langle \mathbf{w}, \mathbf{z} \rangle f(\mathbf{z}), \quad  \widehat{{\cal K}}_{\mathbf{w}}(f)(\mathbf{z})=2\partial_{\mathbf{w}}f(\mathbf{z}),$$
where
\begin{align} 
&D(\widehat{{\cal H}}_{\mathbf{w}})=\{f \in L^2(\RR^{2d},2^{d}\mathbf{dz}):\,\langle \mathbf{w}, \mathbf{z} \rangle f(\mathbf{z}) \in L^2(\RR^{2d},2^{d}\mathbf{dz})\},\label{eq:domainHhat}\\
&D(\widehat{{\cal K}}_{\mathbf{w}})=\{f \in L^2(\RR^{2d},2^{d}\mathbf{dz}):\,\partial_{\mathbf{w}}f(\mathbf{z}) \in L^2(\RR^{2d},2^{d}\mathbf{dz})\},
\end{align}
and $\partial_{\mathbf{w}}f(\mathbf{z})$ has to be interpreted as the weak derivative of $f$. Therefore, smoothness of a function $f \in L^2(\RR^{2d},2^{d}\mathbf{dz})$ along $\mathbf{w}$ can be characterised as belonging to $D(\widehat{{\cal K}}_{\mathbf{w}})$, while belonging to $D(\widehat{{\cal H}}_{\mathbf{w}})$ corresponds to a certain decay property along the direction $\mathbf{w}$ as .

It is immediate to find the corresponding groups acting on $L^2(\hh)$, i.e. those strongly continuous unitary groups ${\cal U}^{\mathbf{w}}_t$, ${\cal V}^{\mathbf{w}}_t$ that satisfy the following equations: for every $T \in L^2(\hh)$
\begin{equation} \label{eq:equivalenceofgroups}\widehat{{\cal U}}^{\mathbf{w}}_t (\widehat{T})=\widehat{{\cal U}^{\mathbf{w}}_t(T)}, \quad \widehat{{\cal V}}^{\mathbf{w}}_t (\widehat{T})=\widehat{{\cal V}^{\mathbf{w}}_t(T)}.\end{equation}

Indeed, they are given by
$${\cal U}^{\mathbf{w}}_t(T)=W(-t\mathbf{J}\mathbf{w})TW(-t\mathbf{J}\mathbf{w})^*, \quad {\cal V}^{\mathbf{w}}_t(T)=W(t\mathbf{w})TW(t\mathbf{w}),
$$
which are respectively the semigroups of translations in the phase space and symmetric multiplication by the exponential of $i$ times a field operator. The corresponding infinitesimal generators are informally given by $i$ times $[R(-\mathbf{J}\mathbf{w}),\cdot]$ and $\{R(\mathbf{w}),\cdot\}$, respectively. More precisely, we have that
$${\cal H}_{\mathbf{w}}(T)=[R(-\mathbf{J}\mathbf{w}),T], \quad  {\cal K}_{\mathbf{w}}(T)=\{R(\mathbf{w}),T\},$$
where
\begin{align}\label{eq:domains}
&D({\cal H}_{\mathbf{w}})=\{T \in L^2(\hh):\,T(D(R(-\mathbf{J}\mathbf{w}))\subseteq D(R(-\mathbf{J}\mathbf{w})), \overline{[R(-\mathbf{J}\mathbf{w}),T]} \in L^2(\hh)\},\\
&D({\cal K}_{\mathbf{w}})=\{f \in L^2(\RR^{2d}):\,T \in L^2(\hh):\,T(D(R(\mathbf{w}))\subseteq D(R(\mathbf{w})), \overline{\{R(\mathbf{w}),T\}} \in L^2(\hh)\}.
\end{align}
The operators $[R(-\mathbf{Jw}),T]$ and $\{R(\mathbf{w}), T\}$ whose closure is considered in the previous pair of equations has to be intended as defined on $D(R(-\mathbf{Jw}))$ and $D(R(\mathbf{w}))$, respectively. The characterisation of the domain of ${\cal H}_\mathbf{w}$ appeared in \cite{Co82} and a neat exposition can be found in \cite[Theorem 3.6]{LHB24}; the same proof works for ${\cal K}_\mathbf{w}$ as well.

Follow the same principle as in classical harmonic analysis, where a regularity of a function corresponds to decay properties of its Fourier transform and vice versa, we will refer to smoothness and decay of an operator along a direction $\mathbf{w}$, meaning that it belongs to the domain of ${\cal H}_{\mathbf{w}}$ or ${\cal K}_{\mathbf{w}}$, respectively. This is in agreement with the philosophy of quantum harmonic analysis as introduced in the seminal work \cite{We84}. Indeed, let us recall the definition of Schwarz operators introduced in \cite{KKW16}; in order to do so, we need to introduce some more notation. Let ${\cal S}(\RR^d)$ denote the set of Schwarz functions; we recall that Schwarz functions are those infinitely differentiable functions $\varphi:\RR^d \rightarrow \CC$ such that for any pair of multiindices $\boldsymbol{\alpha}, \boldsymbol{\beta} \in \mathbb{N}^{d}$
$$\sup_{\mathbf{x} \in \RR^{d}}|\mathbf{x}^{\boldsymbol{\alpha}} \partial_{\boldsymbol{\beta}}\varphi(\mathbf{x})|<+\infty,$$
where $\mathbf{x}^{\boldsymbol{\alpha}}=x_1^{\alpha_1}\cdots x_d^{\alpha_d}$ and $\partial_{\boldsymbol{\beta}}=\left (\frac{d}{dx_1}\right )^{\beta_1} \cdots  \left (\frac{d}{dx_d}\right )^{\beta_d}$. It is well known that Schwarz functions constitute a dense linear subspace of $L^2(\RR^{d})$ (with respect to the Lebesgue measure) and that belong to the domain of arbitrarily long products of field operators, i.e. for every $\mathbf{z} \in \RR^{2d}$, ${\cal S}(\RR^d) \subset D(R(\mathbf{z}))$ and $R(\mathbf{z})({\cal S}(\RR^d))\subseteq {\cal S}(\RR^d)$. The following alternative characterisation of Schwarz functions will turn out useful later in this manuscript. Given a selfadjoint operator $(X,D(X))$, we introduce the notation $D(X^\infty):=\bigcap_{k \geq 0}D(X^k)$.
\begin{lemma} \label{lem:schwarz}
The following identity of sets holds true:
$$\bigcap_{j=1}^{d} D(P^\infty_j) \cap D(Q^\infty_j)={\cal S}(\RR^d).$$
\end{lemma}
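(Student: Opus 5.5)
The inclusion ${\cal S}(\RR^d)\subseteq \bigcap_j D(P_j^\infty)\cap D(Q_j^\infty)$ is immediate. Each $Q_j$ acts on ${\cal S}(\RR^d)$ as multiplication by $x_j$, and each $P_j$ acts as $-i\partial_{x_j}$. Both maps send ${\cal S}(\RR^d)$ into itself, and both are restrictions of the corresponding self-adjoint operators. Hence by induction every power $Q_j^k\varphi$ and $P_j^k\varphi$ is defined and lies again in ${\cal S}(\RR^d)$.

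For the reverse inclusion, let $\psi\in\bigcap_j D(P_j^\infty)\cap D(Q_j^\infty)$. The strategy is to show, in order: that $\psi$ lies in the domain of every power of the harmonic oscillator $N=\sum_j N_j$; that its Hermite coefficients therefore decay faster than any polynomial; and that such functions are exactly Schwartz functions. The last step is the classical characterisation of ${\cal S}(\RR^d)$ via Hermite expansions (Simon's $N$-representation theorem).

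The main obstacle is the first step. The hypothesis gives only pure powers $P_j^k\psi$ and $Q_j^k\psi$, whereas powers of $N$ involve mixed monomials. I would handle this as follows.

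First, $\psi\in D(Q_j^{2k})$ for all $j$ means $|x|^{2k}\psi\in L^2$, since $|x|^{2k}\le C_k\sum_j x_j^{2k}$. Likewise $\psi\in D(P_j^{2k})$ for all $j$ means, via Fourier transform, $|\xi|^{2k}\widehat\psi\in L^2$.

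Next I would show that $\langle x\rangle^{a}\partial^{\beta}\psi\in L^2$ for all $a$ and all multi-indices $\beta$. This is an interpolation statement: a function with $\langle x\rangle^{2A}\psi\in L^2$ and $\langle \xi\rangle^{2B}\widehat\psi\in L^2$ has mixed weighted derivatives controlled. It can be proved by a direct inner-product argument. For instance,
\[
\|x_j\partial_k\psi\|^2=\langle \psi, -\partial_k x_j^2\partial_k\psi\rangle ,
\]
and after commuting and applying Cauchy--Schwarz this is bounded by $\|x_j^2\psi\|\,\|\partial_k^2\psi\|$ plus lower-order terms. Iterating such estimates, with an induction on $|\alpha|+|\beta|$, gives $x^\alpha\partial^\beta\psi\in L^2$ for all $\alpha,\beta$. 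Alternatively one can invoke the standard weighted Sobolev interpolation inequality $\|\langle x\rangle^{s\theta}\langle D\rangle^{t(1-\theta)}\psi\|\le C(\|\langle x\rangle^s\psi\|+\|\langle D\rangle^t\psi\|)$.

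Once every $x^\alpha\partial^\beta\psi$ lies in $L^2$, I would conclude in one of two ways. The first is to apply Sobolev embedding: $x^\alpha\partial^\beta\psi\in H^m$ for all $m$ (its derivatives are linear combinations of terms of the same type), so it is continuous and bounded, giving the Schwartz seminorm bounds. The second is to note that $\psi\in D(N^k)$ for all $k$, because $N$ is a polynomial in the $a_j,a_j^\dagger$, that is in the $Q_j,P_j$. Then $\sum_n (1+|n|)^{2k}|\langle e(n),\psi\rangle|^2<\infty$ for every $k$, and the Hermite characterisation yields $\psi\in{\cal S}(\RR^d)$. Either route closes the argument.
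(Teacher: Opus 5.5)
Your plan has the same skeleton as the paper's proof. Both arguments reduce the lemma to showing that every mixed monomial $x^\alpha\partial^\beta\psi$ (equivalently $P^\beta Q^\alpha\psi$) lies in $L^2$, using commutator and Cauchy--Schwarz bounds of the kind you write; the paper's bound on $\|P_j^kQ_jg\|^2$ is the same mechanism. Both then recognise such $\psi$ as Schwartz functions. Your Sobolev-embedding closing step is correct, and it is more self-contained than the paper's appeal to \cite{KKW16}.

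The gap is in how you apply those bounds. The identity $\|x_j\partial_k\psi\|^2=\langle\psi,-\partial_kx_j^2\partial_k\psi\rangle$, and the integrations by parts behind it, presuppose that $x_j\partial_k\psi\in L^2$ and that nothing is lost at infinity. For a general $\psi\in\bigcap_j D(P_j^\infty)\cap D(Q_j^\infty)$ this is exactly what has to be proved, so as written the step is circular. What you actually have are \emph{a priori} estimates, valid for $g\in{\cal S}(\RR^d)$. For such $g$ you even get, with no induction, $\|x^\alpha\partial^\beta g\|^2\le\sum_{\mu}c_{\mu}\|x^{2\alpha-\mu}g\|\,\|\partial^{2\beta-\mu}g\|$, summing over $\mu\le\beta$ with $\mu\le2\alpha$. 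The right-hand side is controlled by the norms $\|Q_j^m g\|$ and $\|P_j^m g\|$ alone.

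The missing ingredient is a density statement. You need a sequence $\psi_n\in{\cal S}(\RR^d)$ with $x_j^k\psi_n\to x_j^k\psi$ and $\partial_j^k\psi_n\to\partial_j^k\psi$ in $L^2$, for all $j$ and $k$ simultaneously. Given it, the a priori bound makes $(x^\alpha\partial^\beta\psi_n)_n$ Cauchy in $L^2$. Since this sequence also converges to $x^\alpha\partial^\beta\psi$ in ${\cal S}'(\RR^d)$, the limit shows $x^\alpha\partial^\beta\psi\in L^2$.

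This density statement is precisely the paper's claim \eqref{eq:approx}, which takes up most of its proof, with $\psi_n=\eta(\cdot/n)\,(\psi*\varphi_n)$:
\begin{itemize}
\item the cutoff only generates terms carrying $n^{-l}(\partial_j^l\eta)(x/n)$;
\item the mollifier commutes with $\partial_j$;
\item the mollifier's failure to commute with $x_j^k$ can be controlled by writing $x_j^k=((x_j-y_j)+y_j)^k$ inside the convolution. This leaves error terms bounded by $\|x_j^l\psi\|_2\,\|y_j^{k-l}\varphi_n\|_1=O(n^{l-k})$ with $l<k$.
\end{itemize}

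Your alternative of quoting a weighted interpolation inequality closes the gap only if you quote it for every $\psi\in L^2$ with finite right-hand side. That is the statement that $\langle x\rangle^s\psi\in L^2$ and $\langle D\rangle^s\psi\in L^2$ imply $x^\alpha\partial^\beta\psi\in L^2$ for $|\alpha|+|\beta|\le s$. This is a legitimate citation, but it is essentially the lemma itself, so the work then sits in the reference rather than in your argument.

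Likewise, your Hermite route needs the (routine) identification of the self-adjoint power $N^k$ with the corresponding differential expression on its maximal domain. The Sobolev-embedding route avoids this.
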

The proof of Lemma \ref{lem:schwarz} can be found in \ref{app:A}. We are now ready to define the space of Schwarz operators as follows:
$${\cal S}(\hh):=\{T \in L^2(\hh): \, \|T\|_{\boldsymbol{\alpha},\boldsymbol{\alpha^\prime},\boldsymbol{\beta},\boldsymbol{\beta^\prime}}<+\infty\, \text{ for all }\boldsymbol{\alpha},\,\boldsymbol{\alpha^\prime},\,\boldsymbol{\beta},\,\boldsymbol{\beta^\prime}\in \mathbb{N}^d\},$$
where $\| \, \|_{\boldsymbol{\alpha},\boldsymbol{\alpha^\prime},\boldsymbol{\beta},\boldsymbol{\beta^\prime}}$ is the seminorm defined as
$$\|T\|_{\boldsymbol{\alpha},\boldsymbol{\alpha^\prime},\boldsymbol{\beta},\boldsymbol{\beta^\prime}}:=\sup\{|\langle P^{\boldsymbol{\beta}}Q^{\boldsymbol{\alpha}}\varphi, TP^{\boldsymbol{\beta^\prime}}Q^{\boldsymbol{\alpha^\prime}}\psi\rangle|,\, \varphi, \psi \in {\cal S}(\RR^d),\|\varphi\|, \, \|\psi\|\leq 1\}.$$

Loosely speaking, multiplication on the left and on the right by arbitrary products of position and momentum operators makes sense on Schwarz operators and by performing such operations one obtains once again a Hilbert-Schmidt operator. Schwarz operators can equivalently be characterised using the following seminorms involving the generators ${\cal H}_\mathbf{w}$ and ${\cal K}_{\mathbf{w}}$:
$$\|T\|^\prime_{\boldsymbol{\alpha},\boldsymbol{\beta}}:=\sup\{|\tr({\cal H}_{\boldsymbol{\alpha}}{\cal K}_{\boldsymbol{\beta}}(\ket{\psi}\bra{\varphi})T)|: \, \|\psi\|, \|\varphi\| \leq 1\},$$
where $\boldsymbol{\alpha}, \boldsymbol{\beta} \in \mathbb{N}^{2d}$, ${\cal H}_{\boldsymbol{\alpha}}={\cal H}^{\alpha_1}_{e_1}\cdots {\cal H}_{e_{2d}}^{\alpha_{2d}}$ and ${\cal K}_{\boldsymbol{\beta}}$ is defined analogously.

There is a last ingredient that we will need in the following: the convolution of a function with respect to an operator. Given an operator $T \in L^2(\hh)$, we explained why $W(-\mathbf{J}\mathbf{x})TW(\mathbf{J}\mathbf{x})$ can be seen as the translation of $T$ by $\mathbf{x}$; therefore, the following definition of convolution should look quite natural: given $f \in L^1(\RR^{2d})$ we define
$$f*T=\int_{\RR^{2d}}f(\mathbf{x})W(-\mathbf{J}\mathbf{x})TW(\mathbf{J}\mathbf{x}).$$
As shown in \cite{We84}, $f*T$ is an Hilbert-Schmidt (resp. trace class) operator if $T$ is an Hilbert-Schmidt (resp. trace class) operator and it turns out that such a definition of convolution shares several useful properties with the classical counterpart.

\bigskip \textbf{Gaussian quantum Markov semigroups.} A Quantum Markov Semigroup (QMS) $\TT:=\{\TT_t\}_{t \geq 0}$ is a ${\rm w}^*$-continuous semigroup of completely positive, identity preserving, ${\rm w}^*$-continuous maps on $B(\hh)$. The predual semigroup $\TT_* = \{\TT_{t*}\}_{t\geq0}$ acts on trace class operators and is a strongly continuous, completely positive, trace preserving semigroup. A quantum Markov semigroup is called Gaussian if it maps Gaussian states into Gaussian states; the class of Gaussian quantum Markov semigroups (GQMSs) can be completely characterised either through their explicit action on Weyl operators or through their generator. Let $L_\ell$, $H$ be the operators on $\hh$ defined on the domain $D$ by the following expressions:
\begin{align}
&H=\sum_{k,j=1}^{d} \left ( \Omega_{jk} a^\dagger_j a_k + \frac{\kappa_{jk}}{2}a_j^\dagger a_k^\dagger+\frac{\overline{\kappa}_{jk}}{2}a_j a_k\right )+\frac{1}{2}\sum_{j=1}^{d} \zeta_j a_j^\dagger + \overline{\zeta}_j a_j,\\ \label{eq:hami}
&L_\ell=\sum_{j=1}^{d} \overline{v}_{\ell j}a_j+u_{\ell j}a_j^\dagger, \quad \ell=1,\dots, m
\end{align}
where $\Omega \in M_d(\CC)$ is hermitian, $\kappa \in M_d(\CC)$ is symmetric, $\zeta \in \CC^d$, $m \leq 2d$ and $U,V \in M_{m\times d}(\CC)$. One can show that $H$ and $L_\ell$ are closable and we will denote their closure with the same name, with a slight abuse of notation. For all $x \in B(\hh)$ consider the following quadratic form with domain $D\times D$
\begin{equation} \label{eq:fg} \begin{split}\mathfrak{L}(x)[\xi^\prime, \xi] &= i \langle H\xi^\prime, x\xi\rangle -i \langle \xi^\prime, xH\xi \rangle\\
&-\frac{1}{2}\sum_{\ell=1}^{m}(\langle \xi^\prime,xL^*_\ell L_\ell\xi \rangle - 2 \langle L_\ell\xi^\prime,x L_\ell\xi \rangle +\langle L^*_\ell L_\ell\xi^\prime,x\xi \rangle). \end{split}\end{equation}

This is a natural way to make sense of a Gorini, Kossakowski, Lindblad-Sudarshan
(GKLS) representation of the generator in a generalized form since operators $L_\ell$, $H$ are unbounded. The following result ensures that the form generator in Eq. \eqref{eq:fg} generates a quantum Markov semigroup and provides its action on Weyl operators; its proof can
be found in \cite{AFP22},
Theorem 2 in Appendix A and Theorem 2.4

\begin{theo}
There exists a unique quantum Markov semigroup, $\TT$ such that, for all $x \in B(\hh)$ and $\xi, \xi^\prime \in D$, the function $t \mapsto \langle \xi^\prime, \TT_t(x) \xi \rangle$ is differentiable and
$$
\frac{d}{dt} \langle \xi^\prime, \TT_t(x) \xi \rangle = \mathfrak{L}(\TT_t(x))[\xi^\prime,\xi], \quad  \forall t \geq 0.
$$
Moreover,
\begin{equation} \label{eq:aW}
\TT_t(W(\mathbf{z}))=\exp\left (-\frac{1}{2}\int_0^t \langle e^{sZ}z,Ce^{sZ}z \rangle_\RR ds + i \int_0^t \langle \zeta,e^{sZ}z \rangle_\RR ds\right )W(e^{tZ}z),
\end{equation}
where 
\begin{align}
&Zz=[(U^T\overline{U}-V^T\overline{V})/2+i \Omega]z+ [(U^TV-V^TU)/2+i \kappa]\overline{z},\label{eq:Z}\\
&Cz=(U^T\overline{U}+V^T\overline{V})z + (U^TV+V^TU)\overline{z} \label{eq:C}.
\end{align}
\end{theo}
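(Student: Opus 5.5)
The plan is to build $\TT$ as the minimal quantum Markov semigroup of the form generator $\mathfrak{L}$, show that it is identity preserving, and then identify its action on Weyl operators by checking that the right-hand side of eq. \eqref{eq:aW} solves the same weak evolution equation.

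\textbf{Existence and uniqueness.}
\begin{enumerate}
\item Set $G=-iH-\frac{1}{2}\sum_\ell L_\ell^*L_\ell$ on $D$. Since $G$ is quadratic in creation and annihilation operators, its closure generates a strongly continuous contraction semigroup $P_t$ on $\hh$. One checks dissipativity on $D$ and density of $D$ in the range of $\lambda-G$, using the number-operator scale $D(N^{k})$.
\item Apply the standard construction (Davies, Chebotarev--Fagnola). The iteration
$$\TT_t^{(n+1)}(x)=P_t^*xP_t+\int_0^t P_{t-s}^*\sum_\ell L_\ell^*\,\TT_s^{(n)}(x)\,L_\ell P_{t-s}\,ds$$
converges monotonically to the minimal semigroup $\TT^{\min}$. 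This semigroup is ${\rm w}^*$-continuous and completely positive, and it satisfies the weak equation on $D\times D$.
\item Prove $\TT^{\min}_t(\mathbf 1)=\mathbf 1$ with the Chebotarev--Fagnola Lyapunov criterion, taking $\Phi=N+\mathbf 1$. All $L_\ell$ are linear and $H$ is quadratic, so the formal computation gives $\mathfrak{L}(N)=$ (a quadratic polynomial in the $a_j,a_j^\dagger$). Hence $\mathfrak{L}(N+\mathbf 1)\le b(N+\mathbf 1)$ as forms on $D$ for some $b>0$.
\item The same criterion yields uniqueness. Any ${\rm w}^*$-continuous semigroup of contractive normal maps satisfying the weak equation dominates the minimal one, and conservativity then forces equality. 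Equivalently, $\lambda-\mathfrak{L}$ has trivial kernel on $B(\hh)$.
\end{enumerate}

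\textbf{Action on Weyl operators.}
\begin{enumerate}
\item Define $\widetilde\TT_t(W(z))$ by the right-hand side of eq. \eqref{eq:aW}. Its semigroup property follows from $e^{(t+s)Z}=e^{tZ}e^{sZ}$ and additivity of the integrals.
\item Compute $\mathfrak{L}(W(z))[\xi',\xi]$ for $\xi,\xi'\in D$. The vectors in $D$ are analytic for all fields, and $W(z)^*a_jW(z)=a_j+c_j(z)$ with $c_j(z)$ a scalar. Therefore every term in eq. \eqref{eq:fg} reduces to $W(z)$ multiplied by a polynomial of degree at most two in the fields, and the result is
$$\mathfrak{L}(W(z))=\Big(-\tfrac12\langle z,Cz\rangle_\RR+i\langle\zeta,z\rangle_\RR+ iR(Zz)\Big)W(z)$$
in the sense of forms. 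The quadratic contributions cancel, and the real and imaginary parts of the coefficients regroup exactly into the matrices $Z$ and $C$ of eqs. \eqref{eq:Z}--\eqref{eq:C}.
\item Differentiate $t\mapsto\langle\xi',\widetilde\TT_t(W(z))\xi\rangle$ and compare with the formula above evaluated at $e^{tZ}z$. The derivative of $W(e^{tZ}z)$ is, on $D$, $iR(Ze^{tZ}z)W(e^{tZ}z)$ up to the scalar phase coming from eq. \eqref{eq.expccr}. This phase is absorbed by the $\langle\zeta,\cdot\rangle$ term, which shows that $\widetilde\TT$ satisfies the weak equation.
\item To compare with $\TT$, use Duhamel's formula on $D\times D$. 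The difference $\TT_t(W(z))-\widetilde\TT_t(W(z))$ solves the homogeneous equation with zero initial datum, so its Laplace transform lies in $\ker(\lambda-\mathfrak{L})$, which is trivial. Hence the two agree on Weyl operators.
\item Since Weyl operators are ${\rm w}^*$-total and $\TT_t$ is normal, eq. \eqref{eq:aW} determines $\TT$ completely.
\end{enumerate}

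\textbf{Main obstacle.} I expect the hardest step to be the conservativity and uniqueness argument, in particular showing that the Lyapunov inequality for $N+\mathbf 1$ holds in the form sense required by the criterion. This needs control of $L_\ell P_t\xi$ and of $D(N^{1/2})$ regularity for the contraction semigroup generated by $G$. I would handle it by working on $D(N)$ and using the estimates $\|a_j\xi\|,\|a_j^\dagger\xi\|\le\|(N+\mathbf 1)^{1/2}\xi\|$.

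A secondary technical point arises in the Duhamel step. Since $W(z)$ does not preserve $D$, one must verify that $\widetilde\TT_t(W(z))$ maps $D$ into $\bigcap_k D(N^{k})$. This holds because Weyl operators preserve the space of analytic vectors of $N$.
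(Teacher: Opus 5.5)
Your plan has the same architecture as the argument the paper relies on. The paper does not prove this statement itself; it cites Theorem 2 in Appendix A and Theorem 2.4 of \cite{AFP22}. That route is: build the minimal semigroup, prove conservativity with the Chebotarev--Fagnola criterion using the number operator, use uniqueness of bounded weak solutions, and identify the action on Weyl operators by checking that the explicit expression solves the weak equation. The existence and uniqueness half is fine as a sketch. The step that is wrong as written is the Weyl computation.

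\textbf{The ordering error.} Your formula $\mathfrak{L}(W(z))=\big(-\tfrac12\langle z,Cz\rangle_\RR+i\langle\zeta,z\rangle_\RR+iR(Zz)\big)W(z)$ has the wrong operator ordering. From eq. \eqref{eq.expccr} one gets $W(z)R(w)W(z)^*=R(w)-2\Im\langle z,w\rangle$. The correct identity on $D\times D$ is
\[
\mathfrak{L}(W(z))=\Big(-\tfrac12\langle z,Cz\rangle_\RR+i\langle\zeta,z\rangle_\RR\Big)W(z)+\tfrac{i}{2}\big\{R(Zz),W(z)\big\},\qquad \tfrac{i}{2}\big\{R(Zz),W(z)\big\}=iR(Zz)W(z)-i\,\Im\langle z,Zz\rangle\,W(z).
\]
For example, with one mode and $H=\Omega a^\dagger a$, a direct computation gives $i[H,W(z)]=iR(i\Omega z)W(z)-i\Omega|z|^2W(z)$.

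\textbf{Why the phase cannot be absorbed by $\zeta$.} The true derivative is $\frac{d}{dt}W(e^{tZ}z)=\frac{i}{2}\{R(Ze^{tZ}z),W(e^{tZ}z)\}$ on $D$. With your formula for $\mathfrak{L}$, the two sides of the weak equation therefore differ by $i\,\Im\langle z_t,Zz_t\rangle=i\,\sigma(z_t,Zz_t)$, where $z_t=e^{tZ}z$. This term is quadratic in $z$ and comes from $\Omega$ and $\kappa$. It cannot be absorbed by $\langle\zeta,\cdot\rangle_\RR$, which is linear in $z$ and does not involve $Z$. For $\zeta=0$ nothing absorbs it. The real resolution is that your formula for $\mathfrak{L}(W(z))$ is off by exactly this phase. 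Once you keep the symmetric ordering on both sides, the derivative matches $\mathfrak{L}(W(e^{tZ}z))$ with no leftover phase.

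\textbf{The regrouping into eqs. \eqref{eq:Z}--\eqref{eq:C}.} You assert this regrouping but do not carry it out; it goes as follows.
\begin{itemize}
\item Write $W(z)^*L_\ell W(z)=L_\ell+\lambda_\ell(z)$ with $\lambda_\ell(z)=\sum_j(\overline{v}_{\ell j}z_j+u_{\ell j}\overline{z}_j)$. The dissipative part of $\mathfrak{L}(W(z))$ is then $W(z)\big(-\tfrac12\sum_\ell|\lambda_\ell(z)|^2-\tfrac12\sum_\ell(\lambda_\ell(z)L_\ell^*-\overline{\lambda_\ell(z)}L_\ell)\big)$.
\item The sum $\sum_\ell|\lambda_\ell(z)|^2$ equals $\langle z,Cz\rangle_\RR$.
\item The second term equals $W(z)\,iR(\widetilde Zz)$, where $\widetilde Z$ is the $U,V$-part of $Z$. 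Since the Hermitian block and the antisymmetric block give $\Im\langle z,\widetilde Zz\rangle=0$, this term is already in symmetric form.
\item The Hamiltonian part gives $\frac{i}{2}\{R(i\Omega z+i\kappa\overline{z}),W(z)\}+i\langle\zeta,z\rangle_\RR W(z)$.
\end{itemize}
With this correction the rest of your argument goes through. The Laplace transform of $\TT_t(W(z))-\widetilde\TT_t(W(z))$ lies in $\ker(\lambda-\mathfrak{L})=\{0\}$, and ${\rm w}^*$-totality of the Weyl operators finishes the proof.
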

One can check that $Z$ and $C$ satisfy the following inequality, which ensures that Heisenberg uncertainty relations are preserved:
$$\mathbf{C}_{Z}:=\mathbf{C}-i(\mathbf{Z}^*\mathbf{J}+\mathbf{J}\mathbf{Z}) \geq 0,$$
where the positivity condition is the one for complex linear operators on $\CC^{2d}$. For the sake of a more compact presentation, let us introduce the following notation:
$$\mathbf{C}_t:=\int_0^{t} e^{s\mathbf{Z}^*}\mathbf{C}e^{s\mathbf{Z}} ds, \quad \boldsymbol{\zeta}_t:=\int_0^te^{s\mathbf{Z}^*}\boldsymbol{\zeta}ds.$$
The action on the semigroup on Weyl operators becomes
$$\TT_t(W(\mathbf{z}))=e^{-\frac{\langle \mathbf{z},\mathbf{C}_t\mathbf{z}\rangle}{2}+i\langle\boldsymbol{\zeta}_t, \mathbf{z}\rangle}W(e^{t\mathbf{Z}}\mathbf{z}), \quad t \geq 0.$$

A central object in the study of asymptotic properties of $\TT$ is the decoherence-free subalgebra $\NN$, which is defined in the following way:
$$\NN:=\{x \in B(\hh):\TT_t(x^*x)=\TT_t(x^*)\TT_t(x), \, \TT_t(xx^*)=\TT_t(x)\TT_t(x^*), \, t \geq 0\}.$$
The decoherence-free subalgebra is the biggest Von Neumann subalgebra on which $\TT$ acts as a group of $*$-automorphisms (see \cite[Proposition 3]{GP26}) and it can be described in a neat way in terms of Weyl operators.

\begin{prop}[Corollary 14, \cite{AFP22}] \label{prop:decofree}
Let $\boldsymbol{{\cal V}}\subseteq \RR^{2d}$ be the biggest $\mathbf{Z}$-invariant subspaces in $\ker(\mathbf{C})$, then
$$\NN=\{W(\mathbf{z}):\, \mathbf{z} \in \boldsymbol{{\cal V}}\}^{\prime \prime}.$$  
\end{prop}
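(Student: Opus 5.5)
The plan is to characterise $\NN$ directly through the explicit action of $\TT$ on Weyl operators, eq. \eqref{eq:aW}, and then use the ${\rm w}^*$-density of Weyl operators together with the defining multiplicativity property of $\NN$.

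\emph{Step 1: which Weyl operators lie in $\NN$.} Since $W(\mathbf{z})$ is unitary, $W(\mathbf{z}) \in \NN$ if and only if $\TT_t(W(\mathbf{z}))^*\TT_t(W(\mathbf{z}))=\mathbf{1}$ for all $t$ (and the same with the factors reversed). By eq. \eqref{eq:aW} this amounts to $e^{-\langle \mathbf{z},\mathbf{C}_t\mathbf{z}\rangle}=1$, i.e. $\int_0^t\langle e^{s\mathbf{Z}}\mathbf{z},\mathbf{C}e^{s\mathbf{Z}}\mathbf{z}\rangle ds=0$ for all $t$. Because $\mathbf{C}\geq 0$ (it is the real part of $\mathbf{C}_Z\ge0$), this is equivalent to $e^{s\mathbf{Z}}\mathbf{z}\in\ker\mathbf{C}$ for all $s\ge 0$. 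Differentiating at $s=0$ gives $\mathbf{Z}^k\mathbf{z}\in\ker\mathbf{C}$ for all $k$, and conversely. So the set of such $\mathbf{z}$ is ${\rm span}\{\mathbf{Z}^k\mathbf{z}\}$-closed, i.e. exactly the largest $\mathbf{Z}$-invariant subspace $\boldsymbol{{\cal V}}$ contained in $\ker\mathbf{C}$. Since $\NN$ is a von Neumann algebra, this gives the inclusion $\{W(\mathbf{z}):\mathbf{z}\in\boldsymbol{{\cal V}}\}''\subseteq\NN$.

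\emph{Step 2: reverse inclusion.} First I would show that $\TT_t$ restricted to $\{W(\mathbf{z}):\mathbf{z}\in\boldsymbol{{\cal V}}\}''$ is an automorphism group, since $W(\mathbf{z})\mapsto e^{i\langle\boldsymbol{\zeta}_t,\mathbf{z}\rangle}W(e^{t\mathbf{Z}}\mathbf{z})$ preserves the CCR. The key point is that $\mathbf{z}\in\ker\mathbf{C}\cap\boldsymbol{{\cal V}}$ forces $\sigma$-compatibility through positivity of $\mathbf{C}_Z$: if $\mathbf{C}\mathbf{z}=0$, then $\langle \mathbf{z},(\mathbf{Z}^*\mathbf{J}+\mathbf{J}\mathbf{Z})\mathbf{z}\rangle$ vanishes appropriately, so $e^{t\mathbf{Z}}$ is symplectic on $\boldsymbol{{\cal V}}$.

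For $x\in\NN$, I would use the standard fact that $\NN$ is contained in the commutant of the operators that generate dissipation. Formally, $x\in\NN$ implies $[L_\ell,x]=0$ and $[L_\ell^*,x]=0$, together with the invariance of $\NN$ under $e^{-itH}\cdot e^{itH}$ modulo the $L_\ell$, so $x$ commutes with $\delta_H^n(L_\ell)$ for all $n$. Rigorously, I would test the identity $\TT_t(x^*x)=\TT_t(x^*)\TT_t(x)$ through its derivative at $t=0$ on $D$, which yields the dissipation form $\sum_\ell\|[L_\ell,x]\xi\|^2=0$. The linear operators $L_\ell$, $\delta_H^n(L_\ell)$ and their adjoints are field operators $R(\mathbf{w})$, and their span corresponds, via eqs. \eqref{eq:Z}--\eqref{eq:C}, to the smallest $\mathbf{Z}^*$-invariant subspace containing ${\rm Ran}\,\mathbf{C}$. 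Hence $x$ commutes with every $W(\mathbf{w})$ for $\mathbf{w}$ in that space. That space is $\mathbf{J}\boldsymbol{{\cal V}}^{\perp}$-type, i.e. the symplectic complement of $\boldsymbol{{\cal V}}$, so $x\in\{W(\mathbf{w}):\mathbf{w}\in\boldsymbol{{\cal V}}^{\sigma}\}'=\{W(\mathbf{z}):\mathbf{z}\in\boldsymbol{{\cal V}}\}''$ by the duality theorem for Weyl algebras.

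The main obstacle is Step 2. It requires rigorously deriving $[L_\ell,x]=0$ with unbounded $L_\ell$ (handled via the form generator on $D$ and the resulting quadratic-form identity), and correctly matching the linear algebra: the symplectic complement of the minimal $\mathbf{Z}^*$-invariant subspace containing ${\rm Ran}\,\mathbf{C}$ must equal $\boldsymbol{{\cal V}}$, for which I would use $\mathbf{J}\mathbf{Z}+\mathbf{Z}^*\mathbf{J}$ vanishing appropriately on $\ker\mathbf{C}$.
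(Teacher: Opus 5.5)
Your Step 1 is correct: $\TT_t(W(\mathbf{z}))^*\TT_t(W(\mathbf{z}))=e^{-\langle\mathbf{z},\mathbf{C}_t\mathbf{z}\rangle}\mathbf{1}$. Hence $W(\mathbf{z})\in\NN$ iff $\mathbf{z}\in\ker(\mathbf{C}_t)$ for all $t>0$, i.e.\ $\mathbf{z}\in\boldsymbol{{\cal V}}$ by Lemma~\ref{lem:kerct}. Since $\NN$ is an intersection of multiplicative domains of normal unital CP maps, it is a von Neumann algebra, so $\{W(\mathbf{z}):\mathbf{z}\in\boldsymbol{{\cal V}}\}''\subseteq\NN$.

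The paper does not prove this statement; it imports it from \cite{AFP22}. Its real content is the reverse inclusion, and there your argument has a genuine gap. The architecture is sound: commutation of $x\in\NN$ with $L_\ell$, $L_\ell^*$ and their $H$-evolutions, identification of the field directions with $\boldsymbol{{\cal V}}^{\sigma}$, then Weyl duality. But the step carrying it is only formal, and your proposed rigorous version fails. To differentiate $\TT_t(x^*x)=\TT_t(x^*)\TT_t(x)$ and read off $\sum_\ell\|[L_\ell,x]\xi\|^2=0$, you need two things you do not have:
\begin{itemize}
\item $x\xi\in D(L_\ell)$ for $\xi\in D$, which for arbitrary $x\in\NN$ is part of what must be proved;
\item norm differentiability of $t\mapsto\TT_t(x)\xi$, whereas the generator theorem only controls matrix elements $\langle\xi',\TT_t(y)\xi\rangle$.
\end{itemize}
Likewise, ``invariance under $e^{-itH}\cdot e^{itH}$ modulo the $L_\ell$'' must become an actual proof that $\TT_t(x)=e^{itH}xe^{-itH}$ on $\NN$. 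For this, use $\TT$-invariance of $\NN$: commutation of $\TT_t(x)$ with $L_\ell,L_\ell^*$ kills the dissipative part of $\mathfrak{L}(\TT_t(x))$. Then apply uniqueness for the weak Heisenberg equation on a domain invariant under $e^{itH}$, e.g.\ Schwartz vectors.

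One way to obtain the commutation is to differentiate instead the bimodule identity $\TT_t(x^*W(\mathbf{w}))=\TT_t(x^*)\TT_t(W(\mathbf{w}))$ between coherent vectors, where the Weyl factor is explicitly norm differentiable. Write $W(\mathbf{w})L_\ell W(\mathbf{w})^*=L_\ell-\lambda_\ell(\mathbf{w})\mathbf{1}$ with $\lambda_\ell(\mathbf{w})=\sum_j(\overline{v}_{\ell j}w_j+u_{\ell j}\overline{w}_j)$. Then $\sum_\ell\lambda_\ell(\mathbf{w})(\langle x\xi',L_\ell^*\eta\rangle-\langle xL_\ell\xi',\eta\rangle)=0$ for all coherent $\xi',\eta$ and all $\mathbf{w}$. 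Taking the $L_\ell$ linearly independent, this gives $xL_\ell\subseteq\overline{L_\ell}\,x$ on coherent vectors. Exchanging $x$ and $x^*$ handles $L_\ell^*$.

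The linear algebra also needs fixing. Since $\langle\mathbf{z},\mathbf{C}\mathbf{z}\rangle=\sum_\ell|\lambda_\ell(\mathbf{z})|^2$, $\ker\mathbf{C}$ is exactly the set of $\mathbf{z}$ for which $W(\mathbf{z})$ commutes with all $L_\ell$. So the real span of the directions of the field operators $(L_\ell+L_\ell^*)/2$, $(L_\ell-L_\ell^*)/2i$ is $(\ker\mathbf{C})^{\sigma}=\mathbf{J}\,{\rm Ran}(\mathbf{C})$, not ${\rm Ran}(\mathbf{C})$. For instance, for $d=1$ and $L=a+a^\dagger=R(if_1)$ one gets $\mathbf{C}={\rm diag}(4,0)$: ${\rm Ran}(\mathbf{C})$ is the real axis while $L$ points along the imaginary one.

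Moreover, iterated commutators with $H$ close this space under the Hamiltonian part $\mathbf{Z}_H$ of $\mathbf{Z}$, the map $z\mapsto i\Omega z+i\kappa\overline{z}$ in \eqref{eq:Z}, not under $\mathbf{Z}$ or $\mathbf{Z}^*$. To identify the symplectic complement of the resulting space with $\boldsymbol{{\cal V}}$ you need two facts:
\begin{itemize}
\item[(i)] $\mathbf{Z}_H$ is infinitesimally symplectic, so this complement is the largest $\mathbf{Z}_H$-invariant subspace of $\ker\mathbf{C}$;
\item[(ii)] $\mathbf{Z}=\mathbf{Z}_H$ on $\ker\mathbf{C}$. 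This follows from \eqref{eq:Z}--\eqref{eq:C}: the dissipative part of $Z$ is $\frac12(U^T(\overline{U}z+V\overline{z})-V^T(\overline{V}z+U\overline{z}))$, and $\ker\mathbf{C}=\{z:\overline{V}z+U\overline{z}=0\}$.
\end{itemize}
The vanishing of the form $\mathbf{Z}^*\mathbf{J}+\mathbf{J}\mathbf{Z}$ on $\ker\mathbf{C}$, which you plan to use, only yields that $e^{t\mathbf{Z}}$ is symplectic on $\boldsymbol{{\cal V}}$. It does not give (ii).
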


\bigskip \textbf{Irreducibility.} We say that a QMS $\TT$ is irreducibile if there exists no nontrivial orthogonal projections $p$ such that
$\TT_t(p) \geq p$ for every $t \geq 0$; such a $p$ is called subharmonic projection. This is the natural quantum extension of the notion of irreducibility for classical Markov semigroups and, as in the classical case, the name irreducibility is motivated by the fact that one can show (see Proposition 2.5 in \cite{FG25}) that any subharmonic projection $p$ reduces the semigroup in the following sense: if one considers a state $\rho$ such that ${\rm supp}(\rho)\subseteq {\rm supp}(p)$, then for every $t \geq 0$ one has
$${\rm supp}(\TT_{t*}(\rho))\subseteq {\rm supp}(p).$$

We recall that for every positive semidefinite operator $x$, ${\rm supp}(x)$ is defined as the orthogonal complement of $\ker(x)$. We refer the interested reader to the recent review that we wrote (\cite{FG25}) for further details. If a QMS $\TT$ admits a generalised GKLS-form, as Gaussian semigroups do, there is a characterisation of subharmonic projections in terms of the operators appearing in the generator (see Theorem III.1 in \cite{FR02}); we will present the result in the specific case of GQMSs in order to avoid the introduction of some technical hypotheses which are always satisfied by GQMSs. Let us introduce the operator defined as
\begin{equation} \label{eq:G}
G:=-iH-\frac{1}{2}\sum_{\ell=1}^{m}L_\ell^*L_\ell\end{equation}
on $D$; one can show that it is closable and its closure, denoted by the same symbol with a slight abuse of notation, generates a strongly continuous contraction semigroup $P=(P_t)_{t \geq 0}$ on $\hh$. Moreover, $D(G) \subseteq D(L_\ell)$ for every $\ell=1,\dots, m$. (see Appendix A in \cite{AFP22}). 

\begin{theo}[Theorem III.1, \cite{FR02}] \label{theo:subharmonic}
A projection $p$ is subharmonic for a GQMS $\TT$ if and only if ${\rm supp}(p)$ is invariant for the operators $P_t$ for every $t \geq 0$ and
$$L_\ell u =pL_\ell u, \quad \ell=1,\dots, m,$$
for every $u \in D(G)\cap {\rm supp}(p).$
\end{theo}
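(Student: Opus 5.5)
The plan is to pass to the complementary projection $q:=\mathbf{1}-p$. Since $\TT_t(\mathbf{1})=\mathbf{1}$, the projection $p$ is subharmonic if and only if $\TT_t(q)\le q$ for all $t\ge0$. Because $\TT_t(q)\ge 0$, this is in turn equivalent to
$$\langle u,\TT_t(q)u\rangle=0\quad\text{for all } u\in{\rm supp}(p),\ t\ge0.$$
Indeed, if this vanishing holds then $\TT_t(q)^{1/2}u=0$ on ${\rm supp}(p)$, so $\TT_t(q)$ lives on ${\rm ran}(q)$, and $\TT_t(q)\le\mathbf{1}$ forces $\TT_t(q)\le q$. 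The whole proof then reduces to controlling the scalar functions $t\mapsto\langle u,\TT_t(q)u\rangle$.

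The main tool will be the minimal semigroup (Davies/Chebotarev) construction associated with $G$ and the $L_\ell$'s. Define recursively
$$\TT^{(0)}_t(x)=P_t^*xP_t,\qquad
\TT^{(n+1)}_t(x)=P_t^*xP_t+\int_0^t P_s^*\sum_{\ell=1}^m L_\ell^*\,\TT^{(n)}_{t-s}(x)\,L_\ell P_s\,ds,$$
where the integral is understood in the form sense on $D(G)$, which is legitimate because $D(G)\subseteq D(L_\ell)$. For $x\ge0$ this sequence increases to the minimal solution $\TT^{\min}_t(x)$. The uniqueness statement in the Theorem quoted from \cite{AFP22}, which in particular gives $\TT^{\min}_t(\mathbf{1})=\mathbf{1}$, identifies $\TT^{\min}$ with $\TT$. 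A by-product is the inequality $P_t^*xP_t\le\TT_t(x)$ for $x\ge0$.

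\emph{Necessity.} Assume $p$ is subharmonic and let $u\in{\rm supp}(p)$. From $\|qP_tu\|^2=\langle u,P_t^*qP_tu\rangle\le\langle u,\TT_t(q)u\rangle=0$ we get $qP_tu=0$, so ${\rm supp}(p)$ is $P_t$-invariant. Next take $u\in D(G)\cap{\rm supp}(p)$. Extending the form-generator identity from $D$ to $D(G)$ by the standard core argument, the function $\varphi(t):=\langle u,\TT_t(q)u\rangle$ is differentiable at $0$ with
$$\varphi'(0)=\langle Gu,qu\rangle+\langle qu,Gu\rangle+\sum_{\ell}\|qL_\ell u\|^2=\sum_\ell\|qL_\ell u\|^2,$$
where the first two terms vanish because $qu=0$. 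Since $\varphi\equiv0$, this gives $qL_\ell u=0$ for every $\ell$, which is the required condition.

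\emph{Sufficiency.} Assume the invariance and $L_\ell$-conditions hold. We prove by induction on $n$ that $\langle u,\TT^{(n)}_t(q)u\rangle=0$ for all $u\in D(G)\cap{\rm supp}(p)$. The term $\langle P_tu,qP_tu\rangle$ vanishes by invariance. For the integral term, $P_su$ again lies in $D(G)\cap{\rm supp}(p)$, because $P_s$ preserves $D(G)$, so $L_\ell P_su\in{\rm supp}(p)$ by hypothesis. Moreover $\TT^{(n)}_{t-s}(q)\ge0$ vanishes on such vectors, as the inductive hypothesis extends from the dense subset to all of ${\rm supp}(p)$ by boundedness. Passing to the monotone limit gives $\langle u,\TT_t(q)u\rangle=0$ on $D(G)\cap{\rm supp}(p)$.

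To conclude we need that $D(G)\cap{\rm supp}(p)$ is dense in ${\rm supp}(p)$. This holds because the restriction of $P$ to the invariant closed subspace ${\rm supp}(p)$ is a strongly continuous semigroup whose generator is the part of $G$ in ${\rm supp}(p)$, and the domain of that generator is dense. Boundedness of $\TT_t(q)$ then extends the vanishing to all of ${\rm supp}(p)$, which by the first paragraph means $p$ is subharmonic.

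The main obstacle is the domain bookkeeping. We must justify the form-sense integral equation for the iterates on $D(G)$ rather than on $D$, and justify differentiating $\langle u,\TT_t(q)u\rangle$ for $u\in D(G)$ rather than $u\in D$. The first attempt would be to use that $D$ is a core for $G$ together with the relative bound of $L_\ell$ by $G$ proved in Appendix A of \cite{AFP22}, approximating $u\in D(G)$ by vectors of $D$ in graph norm.
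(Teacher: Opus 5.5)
Your argument is correct and is essentially the standard minimal-semigroup proof of Theorem III.1 in \cite{FR02}; the paper does not reprove it but imports it, relying on \cite{AFP22} for the facts that $D(G)\subseteq D(L_\ell)$ and that the GQMS coincides with the minimal semigroup built from $G$ and the $L_\ell$'s. The one point to phrase more carefully is that identification: it rests on the conservativity $\TT^{\min}_t(\mathbf{1})=\mathbf{1}$ of the minimal semigroup, which is established in \cite{AFP22} and should be cited directly, since the uniqueness statement as quoted does not by itself yield it.
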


We recall that semigroup theory ensures that $D(G)\cap {\rm supp}(p)$ is dense in ${\rm supp}(p)$.

\section{Main results} \label{sec:main}

Along this section $\TT$ will always denote a Gaussian quantum Markov semigroup.

\subsection{Regularisation properties and decoherence-free subalgebra.}

In the previous section we defined what we mean in this work by smooth Hilbert-Schmidt operator and we showed that it is equivalent to its quantum Fourier transform to have some decay properties. If one starts with a given initial state $\rho$, the density matrix at time $t$, i.e. $\TT_{t*}(\rho)$, has a quantum characteristic function that reads as follows:
\begin{equation} \label{eq:evolution}\widehat{\rho}_t(\mathbf{z})=e^{-\frac{\langle \mathbf{z},\mathbf{C}_t\mathbf{z}\rangle}{2}+i\langle\boldsymbol{\zeta}_t, \mathbf{z}\rangle}\widehat{\rho}(e^{t\mathbf{Z}}\mathbf{z}).\end{equation}
The key observation when studying regularisation properties of GQMSs is that the action of the semigroup introduces an exponential factor in front of the characteristic function of the initial density operator which causes a smoothing effect along the directions which are in ${\rm supp}(\mathbf{C}_t)$. The following Lemma shows that directions in which there is a smoothing effect do not depend on time and connects them to the decoherence-free subalgebra and \ref{eq:realCzcond}.

\begin{lemma} \label{lem:kerct}
For every $t >0$, the following linear subspaces of $\RR^{2d}$ coincide:
\begin{enumerate}
    \item $\ker(\mathbf{C}_t)$,
    \item the biggest $\mathbf{Z}$-invariant linear subspace in $\ker(\mathbf{C})$,
    \item the biggest $\mathbf{Z}$-invariant linear subspace in $\ker(\mathbf{C}_{Z})$.
\end{enumerate}
\end{lemma}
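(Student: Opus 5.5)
We prove (1)$=$(2) by a classical controllability-type argument, and (2)$=$(3) by comparing the real and imaginary parts of $\mathbf{C}_Z$ on real subspaces. Throughout, $\mathbf{Z}$ and $\mathbf{C}$ are real matrices, $\mathbf{C}\geq 0$ is symmetric, and $\mathbf{C}_t=\int_0^t e^{s\mathbf{Z}^*}\mathbf{C}e^{s\mathbf{Z}}ds$ is real symmetric positive semidefinite.

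\textbf{(1)$=$(2).} Let $\boldsymbol{{\cal V}}$ denote the biggest $\mathbf{Z}$-invariant subspace in $\ker(\mathbf{C})$. It exists because sums of $\mathbf{Z}$-invariant subspaces contained in $\ker(\mathbf{C})$ are again such subspaces. By positivity,
$$\mathbf{z}\in\ker(\mathbf{C}_t)\iff \langle \mathbf{z},\mathbf{C}_t\mathbf{z}\rangle=\int_0^t\|\mathbf{C}^{1/2}e^{s\mathbf{Z}}\mathbf{z}\|^2ds=0 .$$
Since the integrand is continuous and nonnegative, this holds iff $\mathbf{C}e^{s\mathbf{Z}}\mathbf{z}=0$ for all $s\in[0,t]$. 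The map $s\mapsto \mathbf{C}e^{s\mathbf{Z}}\mathbf{z}$ is real analytic, so vanishing on $[0,t]$ with $t>0$ is equivalent to vanishing of all its derivatives at $0$, i.e. $\mathbf{C}\mathbf{Z}^k\mathbf{z}=0$ for all $k\geq0$. This is in turn equivalent to vanishing for all $s\in\RR$. Hence
$$\ker(\mathbf{C}_t)=\bigcap_{k\geq0}\ker(\mathbf{C}\mathbf{Z}^k),$$
which in particular does not depend on $t$. The right-hand side is $\mathbf{Z}$-invariant and contained in $\ker(\mathbf{C})$, so it is contained in $\boldsymbol{{\cal V}}$. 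Conversely, any $\mathbf{Z}$-invariant subspace of $\ker\mathbf{C}$ is contained in it, since $\mathbf{Z}^k\mathbf{z}$ stays in that subspace and is therefore killed by $\mathbf{C}$. So (1) and (2) coincide.

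\textbf{(2)$=$(3).} We show that for a real subspace $\boldsymbol{{\cal W}}$ that is $\mathbf{Z}$-invariant,
$$\boldsymbol{{\cal W}}\subseteq\ker(\mathbf{C}_Z)\iff\boldsymbol{{\cal W}}\subseteq\ker(\mathbf{C}).$$
Write $\mathbf{C}_Z=\mathbf{C}-i\mathbf{A}$ with $\mathbf{A}=\mathbf{Z}^T\mathbf{J}+\mathbf{J}\mathbf{Z}$. The matrix $\mathbf{A}$ is real and antisymmetric, since $\mathbf{J}^T=-\mathbf{J}$. For real $\mathbf{z}$, the vector $\mathbf{C}_Z\mathbf{z}=\mathbf{C}\mathbf{z}-i\mathbf{A}\mathbf{z}$ vanishes iff both $\mathbf{C}\mathbf{z}=0$ and $\mathbf{A}\mathbf{z}=0$. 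So the direction ($\Rightarrow$) is immediate.

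For ($\Leftarrow$), we need $\mathbf{A}\mathbf{z}=0$ for $\mathbf{z}\in\boldsymbol{{\cal W}}\subseteq\ker\mathbf{C}$. This is the main step, and it uses positivity of $\mathbf{C}_Z$. For $\mathbf{z}\in\ker(\mathbf{C})$ real and any $\mathbf{w}\in\CC^{2d}$, consider the vector $\mathbf{w}+\lambda\mathbf{z}$ with $\lambda\in\CC$. Since $\langle\mathbf{z},\mathbf{C}_Z\mathbf{z}\rangle=0$ (because $\mathbf{C}\mathbf{z}=0$ and $\langle\mathbf{z},\mathbf{A}\mathbf{z}\rangle=0$ by antisymmetry), positivity of the quadratic form in $\lambda$ forces $\mathbf{C}_Z\mathbf{z}=0$. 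This is the standard fact that a null vector of a positive semidefinite form lies in its kernel. Hence $\ker(\mathbf{C})\cap\RR^{2d}\subseteq\ker(\mathbf{C}_Z)$, and the ($\Leftarrow$) direction follows without even using $\mathbf{Z}$-invariance.

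It follows that the $\mathbf{Z}$-invariant real subspaces in $\ker\mathbf{C}$ and in $\ker\mathbf{C}_Z$ are the same, so their biggest elements agree. The only subtle point is this positivity argument showing $\mathbf{A}$ vanishes on real null vectors of $\mathbf{C}$; it relies on the complex positivity $\mathbf{C}_Z\geq0$ and not merely on real positivity of $\mathbf{C}$.
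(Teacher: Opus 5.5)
Your proof is correct. It differs from the paper's mainly in how the comparison with $\mathbf{C}_Z$ is organised.

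The paper first observes that $\int_0^t e^{s\mathbf{Z}^*}(\mathbf{Z}^*\mathbf{J}+\mathbf{J}\mathbf{Z})e^{s\mathbf{Z}}\,ds$ has vanishing quadratic form on real vectors. From this it deduces $\ker(\mathbf{C}_t)=\ker\bigl(\int_0^t e^{s\mathbf{Z}^*}\mathbf{C}_Z e^{s\mathbf{Z}}\,ds\bigr)\cap\RR^{2d}$. It then applies the same Gramian lemma you use for (1)$=$(2) twice, once with $\mathbf{C}$ and once with $\mathbf{C}_Z$, to get (1)$=$(2) and (1)$=$(3).

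You instead obtain (2)$=$(3) from the pointwise identity $\ker(\mathbf{C})\cap\RR^{2d}=\ker(\mathbf{C}_Z)\cap\RR^{2d}$. This follows from antisymmetry of $\mathbf{Z}^T\mathbf{J}+\mathbf{J}\mathbf{Z}$ and the fact that a null vector of a positive semidefinite form lies in its kernel.

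The two arguments use the same ingredients. Real antisymmetric matrices have zero quadratic form on real vectors, and $\mathbf{C}_Z$ is positive on $\CC^{2d}$; the paper uses this positivity implicitly when it passes from equal quadratic forms to equal kernels.

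What your version buys is a stronger, more elementary fact that needs neither $\mathbf{Z}$-invariance nor time integration. It also shows why the real and complex conditions differ. On real vectors $\ker(\mathbf{C})$ and $\ker(\mathbf{C}_Z)$ agree. Genuinely complex vectors, however, can lie in $\ker(\mathbf{C}_Z)$ without lying in $\ker(\mathbf{C})$; an example is $(i,1)^T$ in the paper's one-mode creation-type example.

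What the paper's route buys is uniformity: a single lemma handles both $\mathbf{C}$ and $\mathbf{C}_Z$. Run on $\CC^{2d}$, the same lemma also describes the complex invariant subspaces of $\ker(\mathbf{C}_Z)$ that matter for irreducibility.
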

\begin{proof}
First of all, let us remark that
$$\int_0^t e^{s\mathbf{Z}^*}(\mathbf{Z}^*\mathbf{J} + \mathbf{J}\mathbf{Z})e^{s \mathbf{Z}}ds=e^{t\mathbf{Z}} \mathbf{J} e^{t\mathbf{Z}}-\mathbf{J};$$
therefore for every $\mathbf{z}
\in \RR^{2d}$ one has
$$\langle \mathbf{z}, \int_0^t e^{s\mathbf{Z}^*}(\mathbf{Z}^*\mathbf{J} + \mathbf{J}\mathbf{Z})e^{s \mathbf{Z}}ds \,\mathbf{z} \rangle=\langle e^{t \mathbf{Z}}\mathbf{z}, \mathbf{J} e^{t\mathbf{Z}}\mathbf{z} \rangle -\langle \mathbf{z}, \mathbf{J} \mathbf{z} \rangle=0.$$
Hence, one has that $\ker(\mathbf{C}_t)$ coincide with
$$\ker \left ( \int_0^t e^{s\mathbf{Z}^*}\mathbf{C}_{Z}e^{s \mathbf{Z}}ds\right )\cap \RR^{2d}.$$

In order to conclude it is enough to show that for every positive semidefinite matrix $\mathbf{A}$ and for every matrix $\mathbf{B}$,
$$\ker\left ( \int_0^t e^{s\mathbf{B}^*} \mathbf{A} e^{s \mathbf{B}}ds \right )$$
coincides with the biggest $\mathbf{B}$-invariant subspace in $\ker(\mathbf{A})$; indeed, the statement follows considering $\mathbf{A}=\mathbf{C}$ ($\mathbf{A}=\mathbf{C}_Z$) and $\mathbf{B}=\mathbf{Z}$. This is well known, but we report here the proof for completeness. Let us fix $\mathbf{z} \in \RR^{2d}$, then
\begin{align*}&\langle \mathbf{z},\int_0^t e^{s\mathbf{B}^*} \mathbf{A} e^{s \mathbf{B}}ds \mathbf{z} \rangle=0 \Leftrightarrow  \langle e^{s \mathbf{B}} \mathbf{z},  \mathbf{A} e^{s \mathbf{B}}\mathbf{z} \rangle=0, \, 0\leq s \leq t\\
&\Leftrightarrow e^{s\mathbf{B}} \mathbf{z} \in \ker(\mathbf{A}), \, 0 \leq s \leq t \Leftrightarrow \mathbf{B}^k \mathbf{z} \in \ker(\mathbf{A}), \, k \geq 0.\end{align*}
The second implication is due to the continuity and positivity of $s \mapsto \langle e^{s \mathbf{B}} \mathbf{z},  \mathbf{A} e^{s \mathbf{B}}\mathbf{z} \rangle$, while the last one can be obtained in one sense differentiating and in the other using the Taylor series of the matrix exponential function.
\end{proof}

 We recall that we denoted by $\boldsymbol{{\cal V}}\subseteq \RR^{2d}$ the subspace corresponding to the equivalent characterisations provided in the previous Lemma. We are ready to state the main theorem concerning regularisation properties of GQMSs.

\begin{theo}[Regularisation property of GQMSs] \label{theo:regularisation}
Let us consider any initial density $\rho$, then for every $t>0$ the following holds true:
\begin{equation} \label{eq:regularisation}
\TT_{t*}(\rho) \in \bigcap_{\mathbf{w} \in \boldsymbol{{\cal V}}^\perp}D({\cal H}_\mathbf{w}^\infty).
\end{equation}

Moreover, the following statements are equivalent:
\begin{enumerate}
    \item $\boldsymbol{{\cal V}}=\{0\}$;
     \item the decoherence-free subalgebra is trivial, i.e. $\NN=\CC \mathbf{1}$;
     \item for every initial density $\rho$ one has that
     $$\TT_{t*}(\rho) \in D({\cal H}^\infty_{\mathbf{w}}), \quad  \text{for every }t>0, \mathbf{w} \in \RR^{2d};$$
     \item $\mathbf{C}_t>0$ and for every initial density $\rho$ and positive time $t>0$, one has $\TT_{t*}(\rho)=g_t*{\cal S}_t(\rho)$, where
        $$g_t(\mathbf{x})=\frac{1}{(2\pi \det(\mathbf{C}_t))^d}e^{-\frac{\langle \mathbf{x}-\boldsymbol{\zeta}_t,\mathbf{C}_t^{-1}(\mathbf{x}-\boldsymbol{\zeta}_t)\rangle}{2}} \text{ and }\widehat{{\cal S}_t(\rho)}(\mathbf{z}):=\widehat{\rho}(e^{t\mathbf{Z}}\mathbf{z}).$$
\end{enumerate}
\end{theo}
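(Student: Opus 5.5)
The whole argument will be carried out on the Fourier side, using the unitary quantum Fourier transform $T\mapsto\widehat T$ from $L^2(\hh)$ onto $L^2(\RR^{2d},2^d\mathbf{dz})$ together with the intertwining relations \eqref{eq:equivalenceofgroups}. Under these, ${\cal H}_{\mathbf{w}}$ corresponds to multiplication by $2\langle\mathbf{w},\mathbf{z}\rangle$. Hence $T\in D({\cal H}^\infty_{\mathbf{w}})$ if and only if $\langle\mathbf{w},\mathbf{z}\rangle^k\widehat T(\mathbf{z})\in L^2$ for all $k\ge 0$.

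To prove \eqref{eq:regularisation}, take $\widehat{\rho}_t$ from \eqref{eq:evolution}. Since $\rho$ is a state, $|\widehat\rho(e^{t\mathbf{Z}}\mathbf{z})|\le 1$, so
$$|\widehat{\rho}_t(\mathbf{z})|\le e^{-\langle\mathbf{z},\mathbf{C}_t\mathbf{z}\rangle/2}.$$
By Lemma \ref{lem:kerct}, $\ker(\mathbf{C}_t)=\boldsymbol{{\cal V}}$, so $\mathbf{C}_t\ge\lambda_t P_{\boldsymbol{{\cal V}}^\perp}$ with $\lambda_t>0$ its smallest nonzero eigenvalue. For $\mathbf{w}\in\boldsymbol{{\cal V}}^\perp$ we have $|\langle\mathbf{w},\mathbf{z}\rangle|\le\|\mathbf{w}\|\,\|P_{\boldsymbol{{\cal V}}^\perp}\mathbf{z}\|$. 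Thus $\langle\mathbf{w},\mathbf{z}\rangle^k\widehat\rho_t$ is bounded by $\|\mathbf{w}\|^k$ times a polynomial times a Gaussian in the $\boldsymbol{{\cal V}}^\perp$-coordinates, which is bounded. Squared integrability also has to be checked along $\boldsymbol{{\cal V}}$, where the Gaussian gives no decay. There I will use the Plancherel bound: $\widehat\rho_t\in L^2$, and $|\widehat\rho_t(\mathbf{z})|^2\le e^{-\lambda_t\|P_{\boldsymbol{{\cal V}}^\perp}\mathbf{z}\|^2}|\widehat\rho_t(\mathbf{z})|$. So the simpler route is to write $\langle\mathbf{w},\mathbf{z}\rangle^k\widehat\rho_t=\big(\langle\mathbf{w},\mathbf{z}\rangle^ke^{-\lambda_t\|P_{\boldsymbol{{\cal V}}^\perp}\mathbf{z}\|^2/4}\big)\cdot\big(e^{-\langle\mathbf{z},\mathbf{C}_t\mathbf{z}\rangle/4}\widehat\rho(e^{t\mathbf{Z}}\mathbf{z})\big)$. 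The first factor is bounded. The second factor equals $\widehat{\TT_{t*}(\rho)}$ with $\mathbf{C}_t$ replaced by $\mathbf{C}_t/2$, up to phase. I expect it to be in $L^2$ because it is the characteristic function of a trace-class operator: the Gaussian channel with half the noise, applied after the symplectic-like map. Justifying this is the main technical obstacle. If $e^{t\mathbf{Z}}$ is not symplectic, the "half-noise" object need not be a legitimate channel. As a fallback, I would take the semigroup property $\TT_{t*}=\TT_{t/2*}\TT_{t/2*}$ and apply the exponential factor of the last half-step to $\widehat{\TT_{t/2*}(\rho)}$, which is already in $L^2$. Then $\widehat{\rho}_t(\mathbf{z})=e^{-\langle\mathbf{z},\mathbf{C}_{t/2}\mathbf{z}\rangle/2+i\cdots}\widehat{\rho_{t/2}}(e^{t\mathbf{Z}/2}\mathbf{z})$, and the change of variables $\mathbf{z}\mapsto e^{t\mathbf{Z}/2}\mathbf{z}$ has bounded Jacobian. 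This yields the $L^2$ bound with the polynomial weight absorbed by the Gaussian on $\boldsymbol{{\cal V}}^\perp$, using $\ker\mathbf{C}_{t/2}=\boldsymbol{{\cal V}}$.

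For the equivalences, (1)$\Leftrightarrow$(2) follows from Proposition \ref{prop:decofree} combined with Lemma \ref{lem:kerct}. If $\boldsymbol{{\cal V}}=\{0\}$, the generated algebra is $\CC\mathbf{1}$. Conversely, if $\boldsymbol{{\cal V}}\ne\{0\}$, it contains the nonscalar $W(\mathbf{z})$. (1)$\Rightarrow$(3) is \eqref{eq:regularisation} with $\boldsymbol{{\cal V}}^\perp=\RR^{2d}$.

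For (3)$\Rightarrow$(1), argue by contradiction. Pick $0\ne\mathbf{v}\in\boldsymbol{{\cal V}}$, so $\mathbf{C}_t\mathbf{v}=0$ and $e^{t\mathbf{Z}}\boldsymbol{{\cal V}}\subseteq\boldsymbol{{\cal V}}$. Choose $\rho$ whose characteristic function decays only slowly along $\boldsymbol{{\cal V}}$, for instance a convex mixture of coherent-type translates. Equivalently, choose $\rho$ with $\rho\notin D({\cal H}_{\mathbf{v}^\prime})$ for a suitable $\mathbf{v}^\prime$ built from $e^{t\mathbf{Z}}$ restricted to $\boldsymbol{{\cal V}}$. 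On $\boldsymbol{{\cal V}}$, the factor $e^{-\langle\mathbf{z},\mathbf{C}_t\mathbf{z}\rangle/2}$ has modulus one, so slow decay is preserved. The cleanest choice is a state given by Gaussian smoothing, in the $\boldsymbol{{\cal V}}^\perp$ directions only, of a state whose characteristic function along $\boldsymbol{{\cal V}}$ is $(1+|\mathbf{z}|^2)^{-s}$ with small $s$. Then $\langle\mathbf{w},\mathbf{z}\rangle^k\widehat\rho_t\notin L^2$ for $\mathbf{w}=(e^{t\mathbf{Z}})^{T}\mathbf{v}$-type directions and large $k$.

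(1)$\Leftrightarrow$(4): if $\mathbf{C}_t>0$, then $g_t$ is an honest Gaussian density. By Werner's convolution theorem, $\widehat{g*T}=\widehat g\cdot\widehat T$ with the symplectic-Fourier convention, and $\widehat g_t(\mathbf{z})=e^{-\langle\mathbf{z},\mathbf{C}_t\mathbf{z}\rangle/2+i\langle\boldsymbol{\zeta}_t,\mathbf{z}\rangle}$. Comparing with \eqref{eq:evolution} then gives the formula, by injectivity of the Fourier transform; the precise normalisation and sign conventions of $g_t$ are to be checked. Conversely, (4) includes $\mathbf{C}_t>0$, which gives $\boldsymbol{{\cal V}}=\ker\mathbf{C}_t=\{0\}$ by Lemma \ref{lem:kerct}.
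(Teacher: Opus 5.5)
Apart from one implication, your route is the paper's: the inclusion for $\mathbf w\in\boldsymbol{\cal V}^\perp$ on the Fourier side via Lemma~\ref{lem:kerct}, $1.\Leftrightarrow2.$ via Proposition~\ref{prop:decofree}, $1.\Rightarrow3.$, and $1.\Leftrightarrow4.$ via Werner's convolution theorem and injectivity. The ``main technical obstacle'' you raise in the first part is not one. The factor you worry about satisfies $|e^{-\langle\mathbf z,\mathbf C_t\mathbf z\rangle/4}\widehat\rho(e^{t\mathbf Z}\mathbf z)|\le|\widehat\rho(e^{t\mathbf Z}\mathbf z)|$, and the right-hand side is in $L^2$ by Plancherel and a linear change of variables. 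Your fallback is just this observation applied to $\rho_{t/2}$. Your doubt about $g_t$ is justified: since $W(-\mathbf J\mathbf x)TW(\mathbf J\mathbf x)$ has transform $e^{\pm 2i\langle\mathbf x,\mathbf z\rangle}\widehat T$, the matching Gaussian has covariance $\mathbf C_t/4$, mean $\pm\boldsymbol\zeta_t/2$ and the usual normalisation. This does not affect the logic.

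\emph{The gap is $3.\Rightarrow1.$} The paper only asserts that $2.\Leftrightarrow3.$ follows from the regularisation inclusion, which yields $2.\Rightarrow3.$ alone, so a counterexample really is needed. Your sketch does not supply one, for three reasons.

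\begin{itemize}
\item Mixtures of translates of a state $\sigma$ satisfy $|\widehat\rho|\le|\widehat\sigma|$. So mixtures of coherent states have $|\widehat\rho(\mathbf z)|\le e^{-|\mathbf z|^2/2}$ and already lie in $\bigcap_{\mathbf w}D({\cal H}^\infty_{\mathbf w})$; they can never violate $3.$
\item Prescribing $\widehat\rho$ ``along $\boldsymbol{\cal V}$'' says nothing here, because $\boldsymbol{\cal V}\neq\RR^{2d}$ is Lebesgue-null. Showing $\langle\mathbf w,\mathbf z\rangle^k\widehat\rho_t\notin L^2(\RR^{2d})$ requires a lower bound on $|\widehat\rho|$ on a neighbourhood of $\boldsymbol{\cal V}$ where the transverse Gaussian is bounded below.
\item If $\boldsymbol{\cal V}$ contains a symplectic pair, the restriction of $\widehat\rho$ to that plane is the characteristic function of a one-mode reduced state, hence square integrable there. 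This rules out $(1+|\mathbf z|^2)^{-s}$ for small $s$.
\end{itemize}
The final non-integrability claim is asserted, not derived.

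One way to close the gap (your choice of direction is the right one) is as follows. Fix $0\neq\mathbf u\in\boldsymbol{\cal V}$, and put $\mathbf w=e^{t\mathbf Z^T}\mathbf u$ and $\mathbf y=e^{t\mathbf Z}\mathbf z$. Then, for a constant $c_t>0$,
\[
\|{\cal H}_{\mathbf w}(\rho_t)\|_2^2=c_t\int_{\RR^{2d}}|\langle\mathbf u,\mathbf y\rangle|^2e^{-\langle\mathbf y,\mathbf C_t'\mathbf y\rangle}|\widehat\rho(\mathbf y)|^2\,d\mathbf y,\qquad \mathbf C_t'=e^{-t\mathbf Z^T}\mathbf C_te^{-t\mathbf Z},
\]
with $\ker\mathbf C_t'=e^{t\mathbf Z}\boldsymbol{\cal V}=\boldsymbol{\cal V}$.

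Complete $\mathbf u$ to a symplectic basis $\mathbf u,\mathbf f_1,\dots$. Let $\rho_\epsilon$ be the squeezed vacuum whose characteristic function in the coordinates $\mathbf y=a\mathbf u+b\mathbf f_1+\mathbf y''$ is $e^{-(\epsilon a^2+b^2/\epsilon+|\mathbf y''|^2)/2}$. Since $\langle\mathbf y,\mathbf C_t'\mathbf y\rangle$ does not depend on $a$, integrating first in $a$ gives a factor $\ge\frac{\sqrt\pi}{2}\|\mathbf u\|^4\epsilon^{-3/2}$. The remaining $(b,\mathbf y'')$-integral is $\gtrsim\epsilon^{1/2}$, so the whole integral is $\gtrsim\epsilon^{-1}\to\infty$.

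Centred Gaussian characteristic functions are positive. Hence $\rho=\sum_n2^{-n}\rho_{\epsilon_n}$ satisfies $|\widehat\rho|\ge2^{-n}\widehat{\rho_{\epsilon_n}}$. Choosing $\epsilon_n\to0$ fast enough gives $\rho_t\notin D({\cal H}_{\mathbf w})$, which violates $3.$ already at $k=1$. Alternatively, apply the closed graph theorem to $T\mapsto{\cal H}_{\mathbf w}(\TT_{t*}(T))$ on $L^1(\hh)$.
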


The proof of Theorem \ref{theo:regularisation} can be found in \ref{app:schwarzII}. An interesting line of research would be trying to understand alternative characterisations and the consequences of being a smooth operator; below we limit ourself to present few simple properties and to compare the notion of regularity that we considered with other ones appeared in the literature. The following result provides a useful property of smooth operators, namely that they preserve the set of Schwarz functions; moreover, leveraging on the useful representation of the semigroup in terms of the convolution with a Gaussian density given in item $4.$ of the previous Theorem, we show how to obtain quantitative estimates on the Hilbert-Schmidt norm of directional derivatives of states regularised by the semigroup. We believe that such a representation formula might have a broader use.

\begin{prop} \label{prop:schwarz}
    If ${\cal V}=\{0\}$, then for every $t>0$ and for every initial density $\rho$ one has
    \begin{enumerate}
        \item $\TT_{t*}(\rho) ({\cal S}(\RR^{d})) \subseteq {\cal S}(\RR^{d});$
        \item for every $\mathbf{w} \in \RR^{2d}$
        $$\|{\cal H}_{\mathbf{w}}(\TT_{t*}(    \rho))\|_2\leq e^{-t\tr(\mathbf{Z})}\|\mathbf{w}\|\|\mathbf{C}_t^{-1/2}\|\mathbb{E}[\|\mathbf{X}\|]\|\rho\|_2,$$
        where $\mathbf{X}$ is a standard multivariate Gaussian random vector in $\RR^{2d}$, $\|X\|$ and $\|\mathbf{w}\|$ denote the usual Euclidean norms, and $\|\mathbf{C}_t^{-1/2}\|$ is the induced operator norm.
    \end{enumerate}
\end{prop}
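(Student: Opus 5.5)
For item 1, I will use Theorem \ref{theo:regularisation}: since $\boldsymbol{{\cal V}}=\{0\}$, $T:=\TT_{t*}(\rho)$ belongs to $D({\cal H}^\infty_{\mathbf{w}})$ for every $\mathbf{w}\in\RR^{2d}$. Choosing $\mathbf{w}$ so that $R(-\mathbf{J}\mathbf{w})$ is a multiple of $Q_j$ or $P_j$, I get that $T$ maps $D(Q_j)$ into $D(Q_j)$, $D(P_j)$ into $D(P_j)$, and that the iterated commutators $[Q_j,\cdot]$, $[P_j,\cdot]$ applied to $T$ are closable with Hilbert–Schmidt closures, in particular bounded.

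I will prove by induction on $k$ that, for every $\varphi\in{\cal S}(\RR^d)$ and every monomial $X$ of degree $k$ in the $Q_j$'s and $P_j$'s, $T\varphi\in D(X)$ and
$$XT\varphi=\sum \binom{\cdot}{\cdot}\, \overline{\partial_{X_1}\cdots\partial_{X_r}(T)}\, X^\prime\varphi .$$
The sum runs over the ways of moving the factors of $X$ through $T$; $X^\prime$ is the product of the remaining factors and acts on the Schwarz vector $\varphi$; and the commutator terms are bounded. The inductive step follows from the domain invariance $T(D(R))\subseteq D(R)$ and from the identity $R\,S\psi=S R\psi+[R,S]\psi$ for $\psi\in D(R)$. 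For this identity I need each iterated commutator $S$ to satisfy the same domain invariance. That is exactly what membership of $\partial^{j}(T)$ in $D({\cal H}_{\mathbf{w}})$ gives, via the domain characterisation in eq. \eqref{eq:domains}.

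It follows that $T\varphi\in\bigcap_j D(P_j^\infty)\cap D(Q_j^\infty)$, and Lemma \ref{lem:schwarz} then yields $T\varphi\in{\cal S}(\RR^d)$. The main delicacy here is the domain bookkeeping for mixed products of $P$'s and $Q$'s, since $D({\cal H}^\infty_{\mathbf{w}})$ is stated one direction at a time. I will handle it by applying the one-direction statement repeatedly to the operators $\partial_{X_1}\cdots\partial_{X_r}(T)$. These operators are again in the relevant domains, because $\widehat{T}$ is a Gaussian times a bounded function, so any polynomial times $\widehat{T}$ is in $L^2$, and $\widehat{\cal H}_{\mathbf{w}}$ is a multiplication operator.

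For item 2, I will work on the Fourier side using the representation from Theorem \ref{theo:regularisation}, item 4:
$$\widehat{T}(\mathbf{z})=e^{-\langle\mathbf{z},\mathbf{C}_t\mathbf{z}\rangle/2+i\langle\boldsymbol{\zeta}_t,\mathbf{z}\rangle}\,\widehat{{\cal S}_t(\rho)}(\mathbf{z}),$$
which is the Fourier side of $g_t*{\cal S}_t(\rho)$. Since ${\cal H}_{\mathbf{w}}$ commutes with translations and the convolution is a superposition of translations, I get ${\cal H}_{\mathbf{w}}(g_t*S)=(\partial_{\mathbf{w}}g_t)*S$ up to constant factors, with $S={\cal S}_t(\rho)$. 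By Werner's Young inequality, $\|f*S\|_2\le\|f\|_{L^1}\|S\|_2$. I will compute $\|\partial_{\mathbf{w}}g_t\|_{L^1}=\mathbb{E}|\langle \mathbf{C}_t^{-1}(\mathbf{Y}-\boldsymbol{\zeta}_t),\mathbf{w}\rangle|$ with $\mathbf{Y}\sim N(\boldsymbol{\zeta}_t,\mathbf{C}_t)$. Writing $\mathbf{Y}-\boldsymbol{\zeta}_t=\mathbf{C}_t^{1/2}\mathbf{X}$, this equals $\mathbb{E}|\langle\mathbf{C}_t^{-1/2}\mathbf{X},\mathbf{w}\rangle|\le\|\mathbf{w}\|\|\mathbf{C}_t^{-1/2}\|\mathbb{E}\|\mathbf{X}\|$.

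Finally, I need $\|{\cal S}_t(\rho)\|_2$. By unitarity of the quantum Fourier transform, it equals $\|\widehat\rho(e^{t\mathbf{Z}}\cdot)\|_{L^2}$, and the change of variables gives $e^{-t\tr(\mathbf{Z})/2}\|\rho\|_2$. This yields a factor $e^{-t\tr(\mathbf{Z})/2}$; if the paper's normalisation of ${\cal H}_{\mathbf{w}}$ or of the kernel differs, the constant adjusts. The expected obstacle is justifying that ${\cal H}_{\mathbf{w}}$ passes inside the Bochner integral, i.e. that it is closed and $\mathbf{x}\mapsto\partial_{\mathbf{w}}g_t(\mathbf{x})\,W(-\mathbf{J}\mathbf{x})SW(\mathbf{J}\mathbf{x})$ is integrable. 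I would instead verify the identity directly on the Fourier side: there both sides are the multiplication of $\widehat{T}$ by $2\langle\mathbf{w},\mathbf{z}\rangle$, and the Fourier transform of $\partial_{\mathbf{w}}g_t$ equals that multiplier times $\widehat{g_t}$.
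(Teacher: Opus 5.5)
Your argument follows the paper's proof essentially step by step. For item 1, you read off domain invariance from $\TT_{t*}(\rho)\in D({\cal H}^\infty_{\mathbf w})$ via \eqref{eq:domains}, run a binomial induction on powers of a field operator, and conclude with Lemma~\ref{lem:schwarz}. For item 2, you use the representation $g_t*{\cal S}_t(\rho)$, turn ${\cal H}_{\mathbf w}$ into convolution with $\partial_{\mathbf w}g_t$, apply Werner's Young inequality, and change variables. The mixed-monomial bookkeeping you anticipate is not needed. Lemma~\ref{lem:schwarz} only asks for $\bigcap_j D(P_j^\infty)\cap D(Q_j^\infty)$, so your induction for powers of a single $R$ suffices, and that is what the paper does.

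On the constant, your factor $e^{-t\tr(\mathbf Z)/2}$ is the correct one, and you should not hedge it. The left-hand side is intrinsic, so no choice of normalisation for ${\cal H}_{\mathbf w}$ or $g_t$ can turn the exponent $-t\tr(\mathbf Z)/2$ into $-t\tr(\mathbf Z)$. The paper's proof writes $\|\widehat\rho(e^{t\mathbf Z}\cdot)\|_2=\det(e^{-t\mathbf Z})\|\widehat\rho\|_2$, dropping the square root, and the inequality as printed can fail when $\tr(\mathbf Z)>0$.

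Here is a counterexample. Take one mode, $H=0$, $L=\sqrt{\gamma}\,a^\dagger$. Then $\mathbf Z=\frac{\gamma}{2}\mathbf 1$, $\mathbf C=\gamma\mathbf 1$ (so $\boldsymbol{\cal V}=\{0\}$), $\mathbf C_t=(e^{\gamma t}-1)\mathbf 1$, and $\tr(\mathbf Z)=\gamma$. The vacuum evolves into the thermal state $\rho_t=(1-q)q^N$ with $q=1-e^{-\gamma t}$. From $[a,\rho_t]=(q-1)\rho_t a$ one gets $\|{\cal H}_{\mathbf w}(\rho_t)\|_2=\sqrt{2}\,\|\mathbf w\|/(2e^{\gamma t}-1)$. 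The printed right-hand side equals $\sqrt{\pi/2}\,\|\mathbf w\|\,e^{-\gamma t}(e^{\gamma t}-1)^{-1/2}$, which is strictly smaller as soon as $e^{\gamma t}\ge 4$. Replacing $e^{-\gamma t}$ by $e^{-\gamma t/2}$ restores the inequality.

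One further normalisation remark concerns the kernel. Translation by $\mathbf x$ multiplies the quantum Fourier transform by $e^{\pm 2i\langle\mathbf x,\mathbf z\rangle}$. Hence the kernel whose transform is $e^{-\langle\mathbf z,\mathbf C_t\mathbf z\rangle/2+i\langle\boldsymbol{\zeta}_t,\mathbf z\rangle}$ has covariance $\mathbf C_t/4$, not $\mathbf C_t$, and the convolution route actually costs an extra factor $2$.

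You can bypass the convolution entirely and bound directly on the Fourier side:
$\|{\cal H}_{\mathbf w}(\rho_t)\|_2\le \sup_{\mathbf z}2|\langle\mathbf w,\mathbf z\rangle|e^{-\langle\mathbf z,\mathbf C_t\mathbf z\rangle/2}\,\|{\cal S}_t(\rho)\|_2=2e^{-1/2}\|\mathbf C_t^{-1/2}\mathbf w\|\,e^{-t\tr(\mathbf Z)/2}\|\rho\|_2$.
Since $2e^{-1/2}<\sqrt{\pi/2}\le\mathbb{E}[\|\mathbf X\|]$, this yields the corrected statement (with $e^{-t\tr(\mathbf Z)/2}$) directly.
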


The proof of Proposition \ref{prop:schwarz} can be found in \ref{app:schwarzII}. We remark that the behaviour of ${\bf C}_t$ when $t$ approaches $0$ is completely understood (see for instance Theorem III.2 in \cite{SP23}). We remark that the representation formula in item $2.$ is the quantum counterpart of eq. \eqref{eq:convo}; while the convolution is still a well behaved operation in the noncommutative case, the same does not hold for ${\cal S}_t$, since it fails to be a $^*$-automorphism on $L^1(\hh)$ (using quantum Bochner theorem - Proposition 3.4 in \cite{We84} - one can easily come up with a state which is mapped to an Hilbert-Schmidt operator which is not a state). This issue causes several elementary proofs for classical Ornstein-Uhlenbeck semigroups to fail to extend to the noncommutative setting.

We recall that another notion of regularity for states on the Bosonic Fock space appeared in \cite{RGM25}; for simplicity, let us consider just one mode. The authors define the $k$-th Bosonic Sobolev space for $k \in \nn$ as the set
$$W^{k}:=\{x \in L^1(\hh): x=(\mathbf{1}+N)^{-k/2}y(\mathbf{1}+N)^{-k/2} \text{ for some } y \in L^1(\hh)\}$$
endowed with the norm
$$\|x\|_{W^{k}}:=\|(\mathbf{1}+N)^{k/2}x(\mathbf{1}+N)^{k/2}\|_1,$$
which turns it into a Banach space; by $N$ we denote the number operator. They call a QMS $\TT$ $k$-Sobolev regularising if for every $t>0$
$$\sup_{\|x\|_1\leq 1}\|\TT_t(x)\|_{W^{k}}<+\infty$$
and they say that $\TT$ is Sobolev regularising if it is $k$-Sobolev regularising for every $k \geq 1$; for instance the QMS with generator having only one jump operator of the form $L=a^k$ for any $k>1$ is Sobolev regularising (see Lemma 4.1 in \cite{RGM25}). However, as the following lemma shows, one cannot expect any GQMS to be Sobolev regularising.

\begin{prop} \label{prop:Sobolev}
    GQMSs cannot be $1$-Sobolev regularising.
\end{prop}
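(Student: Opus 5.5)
The plan is to test the $1$-Sobolev norm on a family of coherent states whose evolved first moments blow up, and to control the number operator from below by the square of a field operator. We may work with $d$ modes, since the one-mode restriction in the definition plays no role. It suffices to show that for one (indeed every) $t>0$ the quantity $\sup_{\|x\|_1\le 1}\|\TT_{t*}(x)\|_{W^1}$ is infinite. Here $\TT_{t*}$ is the predual acting on trace class operators, which is the map relevant to the definition.

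\emph{Reduction to the mean photon number.} For a state $\sigma$ we have $(\mathbf{1}+N)^{1/2}\sigma(\mathbf{1}+N)^{1/2}\ge 0$. Its trace norm therefore equals its trace, so $\|\sigma\|_{W^1}=1+\tr(N\sigma)$, interpreted as $+\infty$ when $\sigma\notin W^1$. It is thus enough to exhibit states $\rho_n$ with $\tr(N\,\TT_{t*}(\rho_n))\to+\infty$. Next I bound $N$ from below by fields. For every $\mathbf{u}\in\RR^{2d}$, eq. \eqref{eq:fields} gives $R(\mathbf{u})^2\le c\|\mathbf{u}\|^2(\mathbf{1}+N)$ as quadratic forms on $D$, with $c$ a universal constant. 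This follows from $Q_j^2+P_j^2=2N_j+\mathbf{1}$ together with the Cauchy--Schwarz inequality. Combining this with Jensen/Cauchy--Schwarz, $\tr(\sigma R(\mathbf{u}))^2\le \tr(\sigma R(\mathbf{u})^2)$, we obtain
$$\tr(\sigma R(\mathbf{u}))^2\le c\|\mathbf{u}\|^2\,(1+\tr(N\sigma)).$$
Hence unboundedness of a single first moment of $\TT_{t*}(\rho_n)$ forces unboundedness of the mean photon number.

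\emph{Computing the evolved first moment.} The first moment is obtained by differentiating the characteristic function. By eq. \eqref{eq:evolution},
$$-i\frac{d}{ds}\widehat{\rho}_t(s\mathbf{u})\Big|_{s=0}=\langle\boldsymbol{\zeta}_t,\mathbf{u}\rangle+\tr\big(\rho\,R(e^{t\mathbf{Z}}\mathbf{u})\big),$$
because the Gaussian factor has vanishing derivative at $0$. This requires $\rho$ to have finite second moments, which holds for coherent states. I would then derive finiteness of the first moment of $\rho_t$, or argue directly via the moment bound for Gaussian states, since $\TT_{t*}$ maps Gaussian states to Gaussian states. Now take $\rho_n$ to be the coherent state $W(\mathbf{v}_n)\ket{e_0}\bra{e_0}W(\mathbf{v}_n)^*$. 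Its first moments are $\tr(\rho_n R(\mathbf{y}))=2\sigma(\cdot,\cdot)$-type linear expressions in $\mathbf{v}_n$, up to a convention constant. Choosing $\mathbf{v}_n$ so that $\tr(\rho_n R(e^{t\mathbf{Z}}\mathbf{u}))=n$ is possible because $e^{t\mathbf{Z}}\mathbf{u}\neq0$ and the symplectic form is nondegenerate. Then $\tr(\TT_{t*}(\rho_n)R(\mathbf{u}))=\langle\boldsymbol{\zeta}_t,\mathbf{u}\rangle+n\to\infty$, and the inequality above yields $\|\TT_{t*}(\rho_n)\|_{W^1}\to\infty$ with $\|\rho_n\|_1=1$.

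\emph{Main obstacle.} The main difficulty is justifying the differentiation of the characteristic function and the identification of the derivative with $\tr(\sigma R(\mathbf{u}))$ for the evolved state. Equivalently, it is showing that $\TT_{t*}(\rho_n)$ has finite second moments, which is needed for the Jensen step. I would handle this by noting that $\TT_{t*}(\rho_n)$ is a Gaussian state: its characteristic function is $e^{-\langle\mathbf{z},\mathbf{C}_t\mathbf{z}\rangle/2-\|e^{t\mathbf{Z}}\mathbf{z}\|^2/4+i\langle\cdot\rangle}$, which is smooth. Smoothness gives all moments, and the standard fact $\tr(\sigma R(\mathbf{u})^2)=-\frac{d^2}{ds^2}\widehat\sigma(s\mathbf{u})|_{s=0}$ holds for such states. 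If $\tr(N\TT_{t*}(\rho_n))$ were infinite we would already be done, so no issue arises in that direction.
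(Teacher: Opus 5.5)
Your proof is correct, and it takes a genuinely different route from the paper's.

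\textbf{The paper's route.} The paper argues qualitatively through the decay generators. Since $D(N^{1/2})\subseteq D(R(\mathbf{w}))$, every $x\in W^1$ lies in $\bigcap_{\mathbf{w}}D({\cal K}_{\mathbf{w}})$, so its characteristic function is weakly differentiable in every direction. The map $\TT_{t*}$ acts on $\widehat{\rho}$ only by multiplying with a smooth nonvanishing Gaussian factor and composing with the invertible map $e^{t\mathbf{Z}}$, so it cannot create such differentiability. An initial state $\rho\notin D({\cal K}_{\bar{\mathbf{w}}})$ then gives a contradiction.

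\textbf{Your route.} You attack the uniform bound in the definition directly. For states, $\|\cdot\|_{W^1}=1+\tr(N\,\cdot)$. The form bound $R(\mathbf{u})^2\le c\|\mathbf{u}\|^2(\mathbf{1}+N)$ controls squared first moments by the photon number. The first moments of $\TT_{t*}(\rho)$ are the affine image $\langle\boldsymbol{\zeta}_t,\mathbf{u}\rangle+\tr(\rho R(e^{t\mathbf{Z}}\mathbf{u}))$ of those of $\rho$, and displaced coherent states make these arbitrarily large.

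\textbf{What each approach buys.} Yours is elementary and quantitative: an explicit family whose mean photon number grows at least quadratically in the displacement. It needs none of the domain characterisations of ${\cal K}_{\mathbf{w}}$. It also sidesteps a delicate point in the paper's argument. Dividing by the Gaussian factor $e^{-\langle\mathbf{z},\mathbf{C}_t\mathbf{z}\rangle/2}$ only transfers \emph{local} regularity, because its inverse grows. So one really has to choose $\rho$ whose characteristic function is not even locally weakly differentiable, not merely $\rho\notin D({\cal K}_{\bar{\mathbf{w}}})$. The paper's route, in turn, fits its smoothing/decay dichotomy: GQMSs regularise along the ${\cal H}_{\mathbf{w}}$ but never improve decay along the ${\cal K}_{\mathbf{w}}$. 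It also directly produces a single state not mapped into $W^1$. Since $\|x\|_1\le\|x\|_{W^1}$, the closed graph theorem shows this is equivalent to your unboundedness conclusion.

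\textbf{A simplification.} The ``main obstacle'' you flag is not actually needed, because your closing case distinction already suffices. If $\tr(N\sigma)<\infty$, the form bound gives $\tr(\sigma R(\mathbf{u})^2)<\infty$. Then dominated convergence shows $s\mapsto\widehat{\sigma}(s\mathbf{u})$ is differentiable at $0$ with derivative $i\tr(\sigma R(\mathbf{u}))$. So no appeal to Gaussianity of $\TT_{t*}(\rho_n)$ is required. Incidentally, with the paper's conventions the vacuum contributes $e^{-\|e^{t\mathbf{Z}}\mathbf{z}\|^2/2}$, not $e^{-\|e^{t\mathbf{Z}}\mathbf{z}\|^2/4}$.
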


The proof in one mode can be found in  \ref{app:Sobo}; the extension to many modes is immediate. On the other hand, GQMSs can be Sobolev preserving (see \cite{GMR24}). We conclude this section pointing out a connection between the condition $\boldsymbol{{\cal V}}=\{0\}$ and regularity theory for second order differential operators. One can see from Eq. \eqref{eq:evolution} that the characteristic function of suitable initial states $\rho$ satisfies the following partial differential equation:
\begin{equation} \label{eq:cfevolution}
\partial_t\widehat{\rho}_t(\mathbf{z})=\left ( -\frac{1}{2} \langle \mathbf{z},\mathbf{C}\mathbf{z}\rangle +i \langle \boldsymbol{\zeta}, \mathbf{z} \rangle \right )\widehat{\rho}_t(\mathbf{z})+\langle \nabla\widehat{\rho}_t(\mathbf{z}),\mathbf{Z}\mathbf{z} \rangle.
\end{equation}
Let us consider the Wigner function corresponding to $\rho_t$, i.e. the inverse (classical) Fourier transform of $\widehat{\rho}(\mathbf{z})$; for those states such that $\hat{\rho}(\mathbf{z})$ is integrable, the Wigner function reads a follows:
$$\rho^{\rm w}_t(\xi,\eta)=\frac{1}{\pi^{2d}}\int_{\RR^{2d}}e^{\sqrt{2}i(\eta x -\xi y)}\widehat{\rho}(x+iy)dxdy.$$
In terms of the Wigner function, eq. \eqref{eq:cfevolution} translates into
\begin{equation}\label{eq:wevo}
    \partial\rho^{\rm w}_t(\xi,\eta)=\frac{1}{2}\tr(H\rho^{\rm w}_t(\xi,\eta)\mathbf{C}^{\rm w}) -{\rm div}( \rho^{\rm w}_t(\xi,\eta)\cdot (\mathbf{Z}^{\rm w T}(\xi,\eta)+\boldsymbol{\zeta}^{\rm w})),
\end{equation}
where
$$\mathbf{C}^{\rm w}=\frac{1}{2}\begin{pmatrix} -{\bf1 }& \mathbf{0}\\
{\bf 0} & {\bf 1}\end{pmatrix}\mathbf{C}\begin{pmatrix} -{\bf1 }& \mathbf{0}\\
{\bf 0} & {\bf 1}\end{pmatrix}, \quad \mathbf{Z}^{\rm w}=\begin{pmatrix} -{\bf1 }& \mathbf{0}\\
{\bf 0} & {\bf 1}\end{pmatrix}\mathbf{Z}\begin{pmatrix} -{\bf1 }& \mathbf{0}\\
{\bf 0} & {\bf 1}\end{pmatrix}, \quad \boldsymbol{\zeta}^{\rm w}=\begin{pmatrix} -{\bf1 }& \mathbf{0}\\
{\bf 0} & {\bf 1}\end{pmatrix}\boldsymbol{\zeta}/\sqrt{2}.$$
Eq. \eqref{eq:wevo} is sometimes referred to as the quantum Fokker-Plank equation (see for instance \cite{AF08}) and, for the particular models we are studying, it coincides with the Fokker-Plank equation of a classical Ornstein-Uhlenbeck process (eq. \eqref{eq:OUfp}). One can immediately see that the condition $\boldsymbol{\cal V}=\{0\}$ is equivalent to the (classical) H\"ormander condition for eq. \eqref{eq:wevo}. We remark that the equivalence between H\"ormander condition for eq. \eqref{eq:wevo} and the fact that $\mathbf{C}_t>0$ for all $t \geq 0$ was already proved in Theorem III.2 \cite{SP23} following a different path.

\bigskip \textit{Irreducibility.} Partial results concerning irreducibility of GQMSs were obtained in \cite{AFP21,FP22,AFP22b,GP26}. In this section we will recall them and provide a unifying picture, proving different equivalent characterisations of irreducibility in terms of $\mathbf{Z}$, $\mathbf{C}$ and $\boldsymbol{\zeta}$, and $H$ and $L_\ell$'s. 

Let us present the first condition whose relationship with respect to irreducibility is investigated in \cite{AFP21,FP22,AFP22b}. Notice that one can identify jump operators $L_\ell$ with vectors in $\CC^{2d}$ in the following way:
$$L_\ell=\sum_{j=1}^{d}\overline{v}_{\ell j}a_j+u_{\ell j}a_j^\dagger \Leftrightarrow \ket{l_\ell}:=\begin{pmatrix}\overline{V}^{(\ell)T}\\
U^{(\ell)T}\end{pmatrix}.$$
One can consider formally or in the weak sense the commutator $\partial_G$ (the operator $G$ was defined in eq. \eqref{eq:G}) applied to any linear combination of $a_j$ and $a^\dagger_j$ for $j=1,\dots, d$, which turns out to be a linear expression in creation, annihilation and identity operators. Therefore, ignoring multiple of the identity operator, we can associate to $\partial_G$  a unique linear operator on $\CC^{2d}$ that we denote $\mathbb{G}$.

The first condition is the following one:

\begin{equation} \tag{$\mathbb{G}$-cond} \label{eq:Gcond}
{\rm span}_{\CC}\{\mathbb{G}^{k}\ket{l_\ell}: \, \ell=1, \dots, m, \, k\geq 0\}=\CC^{2d}.
\end{equation}

We denote by $N$ the total number operator, i.e. $N=\sum_{j=1}N_j$ where $N_j$ have been defined in the introduction. In the mentioned works the following implications were shown:
\begin{enumerate}
\item if $d=1$, then the irreducibility of $\TT$ is equivalent to ($\mathbb{G}$-cond);
\item if $d>1$ and $D(G)=D(N)$, then ($\mathbb{G}$-cond) implies the irreducibility of $\TT$;
\item if $d>1$ and ${\rm span}_{\CC}\{\ket{l_\ell}: \, \ell=1, \dots, m, \}=\CC^{2d}$, then $\TT$ is irreducible;
\item if $d>1$, $D(G)=D(N)$ and another technical condition holds (Hypothesis FQN in \cite{FP22}), then irreducibility of $\TT$ implies \eqref{eq:Gcond}.
\end{enumerate}
We remark that condition ($\mathbb{G}$-cond) is inspired by the fact that in finite dimensional systems such requirement is equivalent to irreducibility (see for instance Theorem 4.2 and Lemma C.2 in \cite{FG25}). Moreover, it was pointed out by the authors that \eqref{eq:Gcond} is reminescent of H\"ormander's condition, since they both involve iterated commutants; in the following, we will clarify such a relationship.

From a completely different perspective, \cite{GP26} (see Proposition 15) showed that, if $\TT$ admits a normal faithful invariant state, then irreducibility is equivalent to following alternative condition:
\begin{equation}  \tag{$\mathbf{C}_{Z}$-cond} \label{eq:CZcond}
\text{there are no no nontrivial $\mathbf{Z}$-invariant subspaces of $\CC^{2d}$ contained in $\ker(\mathbf{C}_{Z})$.}
\end{equation}

Finally, let us introduce the last condition, which can be considered as a noncommutative counterpart of H\"ormander condition for hypoellipticity; all the expressions that we will write in the following make sense when evaluated on suitable operators, for instance on Schwarz ones. First of all, we write the generator of $\TT$ as the sum of a drift and a diffusion parts; given a possibly unbounded operator $X$, we introduce the following notation when it makes sense:
$$\partial_X(Y):=[X,Y], \quad  R_X(Y):=YX, \quad L_X(Y):=XY$$
and $\partial^*_X:=\partial_{X^*}$, $R^*_X=R_{X^*}$, $L^*_X=L_{X^*}$; the notation is motivated by the fact that the last set of operators corresponds formally to the Hilbert-Schmidt adjoints of the previously introduced operators.

Schwarz operators belong to the domain of the ${\rm w}^*$-infinitesimal generator of $\TT$, which we will denote by $\LL$, and
$$\LL(X)=i\partial_H(X)-\frac{1}{2}\sum_{\ell=1}^{m}L_\ell^*L_\ell X -2L_\ell^* X L_\ell + X L_\ell^*L_\ell, \qquad X \in {\cal S}(\hh).$$
Notice that we can rewrite
$$-L_\ell^*L_\ell X +2L_\ell^* X L_\ell -X L_\ell^*L_\ell=-\partial_{L_\ell}^*\partial_{L_\ell}(X)+R_{L_\ell}\partial_{L_\ell}^*(X)-R^*_{L_\ell}\partial_{L_\ell}(X),$$
therefore one has that $\LL$ is equal to the sum of the drift term
$$\LL_1(X)=i\partial_H(X)+\frac{1}{2}\sum_{\ell=1}^{m}(R_{L_\ell}\partial_{L_\ell}^*(X)-R_{L_\ell}^*\partial_{L_\ell}(X))$$
and the diffusion term
$$\LL_2(X)=-\frac{1}{2}\sum_{\ell=1}^m\partial^*_{L_\ell}\partial_{L_\ell}(X).$$
As in the classical case, if we apply (formally or in a weak sense) $\LL_1$ to any field operator, we obtain again a field operator; moreover, one can check (see \ref{app:l1l2}) that the linear operator corresponding to the action of $\LL_1$ on field operators is equivalent to applying the drift matrix $\mathbf{Z}$ and translating by $\boldsymbol{\zeta}$. On the other hand, applying $\LL_2$ to field operators, we obtain $0$. In the noncommutative case, commutators can be seen as the counterpart of vector fields, therefore $\LL_2$ can be seen as the sum of squares of vector fields. We are ready to introduce the noncommutative version of H\"ormander condition:
\begin{equation}\tag{H\"or-cond} \label{eq:Horcond}
    {\rm span}_{\CC}\{\partial_{L_\ell}, [\LL_1,\partial_{L_\ell}], [\LL_1,[\LL_1,\partial_{L_\ell}]], \dots\}_{\ell=1}^{m}={\rm span}_{\CC}\{\partial_{a_j}, \partial^*_{a_j}\}.
\end{equation}

It turns out that all three conditions we presented are equivalent.

\begin{theo} \label{th:equivcond}
Conditions \eqref{eq:Gcond}, \eqref{eq:CZcond} and \eqref{eq:Horcond} are equivalent.
\end{theo}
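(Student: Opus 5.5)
All three conditions will be reduced to statements of Kalman/controllability type about a single linear operator on $\CC^{2d}$ and a single finite family of vectors (or a single positive matrix). The operator will be $\mathbb{G}$ (equivalently the linear part of $\LL_1$, up to duality). The family will be the $\ket{l_\ell}$'s, or equivalently ${\rm ran}(\mathbf{C}_Z)$.

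\textbf{Step 1: dictionary between operators and vectors.} Identify the linear span of $a_j,a^\dagger_j$ with $\CC^{2d}$ and the derivations $\partial_X$, for $X$ in that span, with the same space, modulo scalars. Then compute on field operators, formally on the domain $D$ or on Schwarz operators:
\[
[\LL_1,\partial_X]=\partial_{\LL_1\text{-transported }X}.
\]
This uses that $\LL_2$ kills fields and $\LL_1$ is a derivation up to terms involving $R_{L_\ell}$. The computation will express $\mathbb{G}$ and the action of $\LL_1$ in terms of $\mathbf{Z}$ (or $\mathbf{Z}^*$, $\mathbf{J}\mathbf{Z}\mathbf{J}^{-1}$) through \eqref{eq:Z}. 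Concretely, I expect three facts:
\begin{itemize}
\item $\partial_G(X)=[G,X]$ for $X$ linear gives $\mathbb{G}=\mathbf{J}\mathbf{Z}^{*}\mathbf{J}^{-1}$ or a similar conjugate/transpose of $\mathbf{Z}$, up to a sign and a complex conjugation fixed by the convention $R(f_j)=-\sqrt2 P_j$;
\item $[\LL_1,\partial_{L}]$ corresponds to the same operator applied to $\ket{l}$, since $[\partial_H,\partial_L]=\partial_{[H,L]}$ and the dissipative corrections in $\LL_1$ reproduce $-\frac12\sum[L_\ell^*L_\ell,\cdot]$ on linear elements;
\item $\mathbf{C}_Z$ is, up to the same change of basis, the Gram operator $\sum_\ell\ket{l_\ell}\bra{l_\ell}$, or its conjugate version. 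The computation $\mathbf{C}\pm i(\mathbf{Z}^*\mathbf{J}+\mathbf{J}\mathbf{Z})$ from \eqref{eq:Z}, \eqref{eq:C} should give a rank-$m$ positive matrix built from $U,V$.
\end{itemize}

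\textbf{Step 2: \eqref{eq:Gcond} $\Leftrightarrow$ \eqref{eq:Horcond}.} By Step 1 the map $X\mapsto\partial_X$ is linear and injective on $\CC^{2d}$ (modulo identity). It intertwines iterated $[\LL_1,\cdot]$ with powers of $\mathbb{G}$, possibly up to a triangular change, since $\LL_1$ differs from $\partial_G$ only by terms that act on $\partial_X$ as lower order. Hence the two spans correspond bijectively, and one equals $\CC^{2d}$ iff the other equals ${\rm span}\{\partial_{a_j},\partial^*_{a_j}\}$.

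\textbf{Step 3: \eqref{eq:Gcond} $\Leftrightarrow$ \eqref{eq:CZcond}.} Use the standard duality. The smallest $\mathbb{G}$-invariant subspace containing all $\ket{l_\ell}$ has orthogonal complement equal to the largest $\mathbb{G}^*$-invariant subspace contained in $\bigcap_\ell \ket{l_\ell}^\perp=\ker\bigl(\sum_\ell \ket{l_\ell}\bra{l_\ell}\bigr)$. With Step 1 this becomes the largest $\mathbf{Z}$-invariant subspace in $\ker\mathbf{C}_Z$, after transporting by the invertible change of basis (e.g. $\mathbf{J}$ or complex conjugation). The paper's remark that everything holds with $\mathbf{C}+i(\cdots)$ in place of $\mathbf{C}-i(\cdots)$ lets me absorb a possible conjugation: conjugating a $\mathbf{Z}$-invariant subspace of $\ker(\mathbf{C}-i\cdots)$ gives one of $\ker(\mathbf{C}+i\cdots)$, since $\mathbf{Z}$ and $\mathbf{C}$ are real.

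\textbf{Main obstacle.} The main obstacle is Step 1: getting the precise algebraic identities with correct conjugations. These are $\mathbb{G}\simeq \mathbf{Z}^*$ (or $\mathbf{Z}^T$) under the identification $L\leftrightarrow \ket{l}$, and $\mathbf{C}_Z\simeq$ the Gram matrix of the $\ket{l_\ell}$ in the matching basis. One must also make the commutator of the superoperator $\LL_1$ with $\partial_{L_\ell}$ rigorous on Schwarz operators. I would first do the computation for $d=1$ with a single $L=\bar v a+u a^\dagger$ to fix the signs, then extend by linearity in $\ell$ and in modes.
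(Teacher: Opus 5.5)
\textbf{There is a genuine gap.} It sits in your Step 1, and your Step 3 inherits it.

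\textbf{The identity in Step 1 is false.} You expect $\mathbb{G}\simeq\mathbf{Z}^*$ (or $\mathbf{Z}^T$, $\mathbf{J}\mathbf{Z}^*\mathbf{J}^{-1}$, up to sign and conjugation). This is not true, and no choice of conventions repairs it. Take $d=1$, $H=0$, $L=a$. Then $\mathbf{Z}=-\frac12\mathbf{1}$, whereas $[G,a]=\frac12 a$ and $[G,a^\dagger]=-\frac12 a^\dagger$. So $\mathbb{G}$ has spectrum $\{\pm\frac12\}$, and it is not similar to $\pm\mathbf{Z}$ or to any transpose, adjoint or conjugate of it.

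The reason is that $G$ lacks the sandwich term $L_\ell^*XL_\ell$ of the generator. For $X$ in the linear span of the $a_j,a_j^\dagger$ one has
\[
-\partial_G(X)=\LL_1(X)+\sum_{\ell=1}^m[L_\ell,X]\,L_\ell^*,
\]
so $-\mathbb{G}$ equals the drift plus a correction of rank at most $m$. Consequently the final transport in Step 3 (``this becomes the largest $\mathbf{Z}$-invariant subspace in $\ker\mathbf{C}_Z$'') has no justification.

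Your second bullet is also wrong as stated. From $\LL_1=-\partial_G-\frac12\sum_\ell(L^*_{L_\ell}+R^*_{L_\ell})\partial_{L_\ell}$ one gets
\[
[\LL_1,\partial_X]=-\partial_{[G,X]}+\sum_{\ell=1}^m[X,L_\ell^*]\,\partial_{L_\ell}.
\]

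\textbf{The missing idea for $\mathbb{G}$ versus $\mathbf{C}_Z$.} The correction vanishes exactly on $B_0=\bigcap_\ell\ker\partial_{L_\ell}$, which is the subspace corresponding to $\ker\mathbf{C}_Z$ (your third bullet is essentially right). Hence $\partial_G=-\LL_1$ on $B_0$, and the two operators have the same invariant subspaces contained in $B_0$. Now use that $\partial_G$ is a derivation, i.e. $\partial_{\partial_G^{k+1}(L_\ell)}=[\partial_G,\partial_{\partial_G^k(L_\ell)}]$. This turns \eqref{eq:Gcond} into ``no nonzero $\partial_G$-invariant subspace of $B_0$''. Combined with the previous observation, it gives \eqref{eq:Gcond}$\Leftrightarrow$\eqref{eq:CZcond}, which is the paper's argument.

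Your duality must also be taken with respect to the commutator pairing $(X,Y)\mapsto[X,Y]$, for which $\mathbb{G}$ is skew, not the Hermitian one. With $\mathbb{G}^*$ and $\bigcap_\ell\ket{l_\ell}^\perp$ you would need exactly that skewness to get back to $\mathbb{G}$ acting on $B_0$.

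\textbf{Step 2 can be made rigorous, and your ``triangular'' intuition is right.} By the second display, $[\LL_1,\cdot\,]$ and $-\partial_{[G,\cdot\,]}$ differ on derivations by a map with range in ${\rm span}\{\partial_{L_\ell}\}$. By induction, the $k$-th Krylov spans in \eqref{eq:Gcond} and \eqref{eq:Horcond} therefore coincide for every $k$; this is the invariance of the reachable subspace under feedback. To reach \eqref{eq:CZcond} from there, note that $[\LL_1,\partial_X](Y)=-[X,\LL_1(Y)]$ for linear $Y$. So $[\LL_1,\cdot\,]$ acts on ${\rm span}\{\partial_{a_j},\partial^*_{a_j}\}$ through the commutator-transpose of the drift (the paper finds $-\mathbb{Z}^T$). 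Kalman duality, with $B_0\leftrightarrow\ker\mathbf{C}_Z$, then gives \eqref{eq:Horcond}$\Leftrightarrow$\eqref{eq:CZcond} directly.
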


The proof can be found in \ref{app:C}. Moreover, the equivalent conditions  in Theorem \ref{th:equivcond} characterise  also irreducibility of $\TT.$ 

\begin{theo} \label{theo:main}
The following statements are equivalent:
\begin{enumerate}
\item there are no nontrivial $\mathbf{Z}$-invariant subspaces of $\CC^{2d}$ contained in $\ker(\mathbf{C}_{Z})$;
\item $\TT$ is irreducibile.
\end{enumerate}
\end{theo}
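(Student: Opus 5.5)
Both directions will use Theorem \ref{th:equivcond}, which lets me work with whichever of \eqref{eq:Gcond} and \eqref{eq:CZcond} is more convenient, together with the characterisation of subharmonic projections in Theorem \ref{theo:subharmonic}.

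\textbf{(2)$\Rightarrow$(1).} I argue by contraposition. Suppose a nontrivial $\mathbf{Z}$-invariant subspace of $\CC^{2d}$ lies in $\ker(\mathbf{C}_Z)$. By Theorem \ref{th:equivcond}, \eqref{eq:Gcond} then fails, so there is a nonzero vector $\ket{v}\in\CC^{2d}$ orthogonal to all $\mathbb{G}^k\ket{l_\ell}$. This vector corresponds to a linear combination $A$ of creation and annihilation operators. The orthogonality should translate into the statement that $A$ (or its adjoint) commutes with every $L_\ell$ up to scalars and is mapped by $\partial_G$ into a multiple of itself plus the identity.

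From such an $A$ I will build a subharmonic projection. The plan is to take the closed subspace $\ker(A-\lambda)$, or, after a Bogoliubov/Gaussian unitary change of coordinates, a subspace of the form $\ket{e_0}\otimes\Gamma(\CC^{d-1})$ in a suitable mode decomposition. I then check the two conditions of Theorem \ref{theo:subharmonic}: invariance under $P_t$ and $L_\ell u\in{\rm supp}(p)$. Equivalently, I can exhibit a projection $p$ with $\TT_t(p)\ge p$ directly from the action \eqref{eq:aW} on Weyl operators in the direction of $\ker(\mathbf{C}_Z)$. Either way, $p$ is nontrivial, so $\TT$ is reducible.

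\textbf{(1)$\Rightarrow$(2).} Since \eqref{eq:CZcond} implies \eqref{eq:realCzcond}, Lemma \ref{lem:kerct} gives $\boldsymbol{{\cal V}}=\{0\}$. Let $p$ be a nonzero subharmonic projection, and fix a state $\rho$ with ${\rm supp}(\rho)\subseteq{\rm supp}(p)$. For $t>0$, $\rho_t=\TT_{t*}(\rho)$ still has support in ${\rm supp}(p)$, and by Theorem \ref{theo:regularisation} and Proposition \ref{prop:schwarz} it maps ${\cal S}(\RR^d)$ into ${\cal S}(\RR^d)$. Hence ${\rm supp}(p)$ contains nonzero Schwarz vectors $u=\rho_t\varphi$. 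The set $\mathcal{D}_p$ of such vectors is dense in ${\rm supp}(p)$, because $\rho_t\to\rho$ and $\rho$ can be chosen so that its range is dense in ${\rm supp}(p)$.

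On $\mathcal{D}_p\subseteq{\cal S}(\RR^d)\subseteq D(G)$ I will use Theorem \ref{theo:subharmonic} to get $L_\ell u\in{\rm supp}(p)$ and $Gu\in{\rm supp}(p)$. The latter follows by differentiating $P_tu\in{\rm supp}(p)$. I then want to iterate: the commutators $[G,L_\ell]$ and the higher ones $\mathbb{G}^k$ applied to $u$ should stay in ${\rm supp}(p)$. To make this legitimate, I need the relevant vectors to remain in a space on which $G$ and $L_\ell$ act and which stays inside ${\rm supp}(p)$. My approach is to note that $L_\ell\rho_t\varphi$ and $G\rho_t\varphi$ are again of the form (smooth operator)(Schwarz vector), using the time-shifted states $\rho_s$ and the regularisation along all directions. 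By \eqref{eq:Gcond}, which holds by Theorem \ref{th:equivcond}, every $a_j$ and $a_j^\dagger$ then leaves ${\rm supp}(p)\cap{\cal S}$ invariant, so the field operators do as well. Consequently ${\rm supp}(p)$ is invariant under the Weyl operators, since their generators preserve a dense core consisting of analytic vectors. By irreducibility of the Weyl (CCR) representation, $p=\mathbf{1}$.

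The main obstacle is the invariance step: showing rigorously that $[G,L_\ell]u$ and the higher iterates lie in ${\rm supp}(p)$. This needs ${\rm supp}(p)\cap{\cal S}(\RR^d)$ to be invariant under both $G$ and $L_\ell$, not merely mapped into ${\rm supp}(p)$. I would first try to take $\mathcal{D}_p={\rm supp}(p)\cap{\cal S}(\RR^d)$ and prove directly that $L_\ell$ and $G$ map it into ${\cal S}(\RR^d)$, which is automatic since these are polynomial operators, and into ${\rm supp}(p)$, which is what Theorem \ref{theo:subharmonic} provides on $D(G)$. With that, the iteration reduces to pure algebra, and the only remaining analytic input is the density of $\mathcal{D}_p$, which comes from the regularisation theorem. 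A secondary difficulty is the explicit construction of the reducing projection in the converse direction.
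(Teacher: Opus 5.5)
\textbf{Direction (2)$\Rightarrow$(1): the construction is missing.} You never actually produce the subharmonic projection. The candidate you name, $\ker(A-\lambda)$ (or a displaced/squeezed vacuum $\ket{e_0}\otimes\Gamma(\CC^{d-1})$), works only when the kernel element is of annihilation type. Take an eigenvector $\mathbf{z}=\mathbf{v}+i\mathbf{w}$ of $\mathbf{Z}$ in the invariant subspace of $\ker(\mathbf{C}_Z)$; there are three cases.
\begin{itemize}
\item If $\mathbf{z}$ is real, $A$ is essentially the field operator $R(\mathbf{v})$, which has no eigenvectors. For example, with $d=1$, $H=0$, $L=a+a^\dagger$ one gets $A=Q$, and the subharmonic projections are spectral projections $\chi_E(Q)$, e.g.\ onto $L^2((0,\infty))$.
\item If $\langle\mathbf{v},\mathbf{J}\mathbf{w}\rangle\neq 0$, $A$ can be of creation type. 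For $L=a^\dagger$ one gets $A=a^\dagger$ and $\ker(a^\dagger-\lambda)=\{0\}$, while the subharmonic projection is $\mathbf{1}-\ket{e_0}\bra{e_0}$.
\item If $\langle\mathbf{v},\mathbf{J}\mathbf{w}\rangle=0$, then $R(\mathbf{v})+iR(\mathbf{w})$ is normal with purely continuous spectrum, so again there are no eigenvectors.
\end{itemize}
The paper treats exactly this trichotomy. In the real and isotropic cases, ${\rm span}_\RR\{\mathbf{v},\mathbf{w}\}\subseteq\ker(\mathbf{C})$, so $\TT$ acts on the joint spectral projections as a deterministic affine flow, $\TT_t(\chi_E)=\chi_{E_t}$. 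One then picks $E$ with $E\subseteq E_t$: a ball around the fixed point, a half-space, or the complement of a ball, depending on the eigenvalue and on $\boldsymbol{\zeta}$. In the symplectic case, ${\rm span}_\RR\{\mathbf{v},\mathbf{w}\}$ need not meet $\ker(\mathbf{C})$ at all (for $L=a$ one has $\mathbf{C}>0$). So no commutative construction ``from Weyl operators in the direction of $\ker(\mathbf{C}_Z)$'' is available there. The paper instead restricts $\TT$ to the one-mode Weyl algebra over ${\rm span}_\RR\{\mathbf{v},\mathbf{w}\}$ and invokes the one-mode result of \cite{AFP21} together with Theorem \ref{th:equivcond}. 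A minor point: Hermitian orthogonality to all $\mathbb{G}^k\ket{l_\ell}$ gives $\mathbb{G}^*$-invariance, not $\mathbb{G}$-invariance. To get an $A\in B_0$ that is an eigenvector of $\partial_G$, use the commutator pairing, as in the $B_0$-reformulation in the proof of Theorem \ref{th:equivcond}.

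\textbf{Direction (1)$\Rightarrow$(2): the last step is unjustified.} Your algebraic part is sound, and it neatly replaces the paper's differentiation of $P_{t_1}L_{\ell_1}P_{t_2-t_1}\cdots L_{\ell_n}P_{t_n}\ket{\psi}$. Since ${\cal S}(\RR^d)\subseteq D(G)$, Theorem \ref{theo:subharmonic} and differentiation of $P_tu$ show that ${\rm supp}(p)\cap{\cal S}(\RR^d)$ is invariant under $L_\ell$ and $G$. Hence it is invariant under all $\partial_G^k(L_\ell)$ and, by \eqref{eq:Gcond}, under all $a_j,a_j^\dagger$; density of $\mathcal{D}_p$ is not even needed.

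However, invariance of a dense set of Schwarz vectors of a closed subspace $K$ under all $Q_j,P_j$ does not make $K$ Weyl-invariant. For $d=1$ and $K=L^2((0,\infty))$, the set $K\cap{\cal S}(\RR)$ is dense in $K$ and invariant under $Q$ and $P$, yet $K\neq\hh$; moreover $K$ contains no nonzero analytic vector for $P$. Schwarz vectors need not be analytic, and Proposition \ref{prop:schwarz} gives only Schwarz regularity, so your ``dense core of analytic vectors'' has not been produced.

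What is missing is an analyticity input from the dynamics. For instance, set $\rho_t=\TT_{t*}(\rho)$. The Gaussian factor $e^{-\langle\mathbf{z},\mathbf{C}_t\mathbf{z}\rangle/2}$ in $\widehat{\rho}_t$, with $\mathbf{C}_t>0$, gives $\|{\cal H}_{\mathbf{w}_1}\cdots{\cal H}_{\mathbf{w}_k}(\rho_t)\|_2\le c_0c^k\sqrt{(k+d)!}$ for unit $\mathbf{w}_i$. This bound makes $\rho_t\ket{\varphi}$, for $\varphi$ a finite-particle vector, and all its polynomial images entire vectors for every field operator. Then the closed span $V\subseteq{\rm supp}(p)$ of $\{q(a,a^\dagger)\rho_t\ket{\varphi}\}$ is invariant under all Weyl operators, so $V=\hh$ by irreducibility of the CCR representation. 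The paper reaches the same point ($V=\hh$ for a single $\ket{\psi}=\TT_{t*}(\rho)\ket{\varphi}$) and closes it by appealing to the argument of \cite{AFP22b}.
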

We defer the proof to \ref{app:maintheo}. As an immediate consequence of Theorems \ref{theo:regularisation} and \ref{theo:main}, one obtains the following, which in general is known to hold for any uniformly continuous QMS with a faithful normal invariant state (Theorems 4.2 and 5.8 in \cite{FG25} and Theorem 1 in \cite{CJ20}).

\begin{coro} \label{coro:irrimplnt}
    If $\TT$ is irreducible, then $\NN=\CC\mathbf{1}$.
\end{coro}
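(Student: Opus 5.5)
By Theorem~\ref{theo:main}, irreducibility of $\TT$ is equivalent to \eqref{eq:CZcond}, and by Theorem~\ref{theo:regularisation} the triviality of $\NN$ is equivalent to $\boldsymbol{{\cal V}}=\{0\}$. The plan is therefore to prove the purely linear-algebraic implication
$$\text{\eqref{eq:CZcond}} \;\Longrightarrow\; \boldsymbol{{\cal V}}=\{0\},$$
that is, the complex condition implies the real one.

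Suppose $\boldsymbol{{\cal V}}\neq\{0\}$. By Lemma~\ref{lem:kerct}, $\boldsymbol{{\cal V}}$ is the biggest $\mathbf{Z}$-invariant real subspace of $\RR^{2d}$ contained in $\ker(\mathbf{C}_Z)$. Here $\ker(\mathbf{C}_Z)\cap\RR^{2d}$ is understood with $\mathbf{C}_Z$ acting on the complexification, which is how the third item of the lemma is phrased.

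Consider the complexification $\boldsymbol{{\cal V}}_\CC:=\boldsymbol{{\cal V}}+i\boldsymbol{{\cal V}}\subseteq\CC^{2d}$. It is nonzero. It is $\mathbf{Z}$-invariant because $\mathbf{Z}$ is a real matrix: $\mathbf{Z}(\mathbf{x}+i\mathbf{y})=\mathbf{Z}\mathbf{x}+i\mathbf{Z}\mathbf{y}$. It is contained in $\ker(\mathbf{C}_Z)$ because the complexified $\mathbf{C}_Z$ acts complex-linearly, so $\mathbf{C}_Z(\mathbf{x}+i\mathbf{y})=\mathbf{C}_Z\mathbf{x}+i\mathbf{C}_Z\mathbf{y}=0$ for $\mathbf{x},\mathbf{y}\in\boldsymbol{{\cal V}}$. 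Thus $\boldsymbol{{\cal V}}_\CC$ is a nontrivial $\mathbf{Z}$-invariant complex subspace of $\ker(\mathbf{C}_Z)$, contradicting \eqref{eq:CZcond}.

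Hence, if $\TT$ is irreducible, then $\boldsymbol{{\cal V}}=\{0\}$, and item 2 of Theorem~\ref{theo:regularisation} gives $\NN=\CC\mathbf{1}$.

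The only delicate point is the bookkeeping between the real kernel appearing in Lemma~\ref{lem:kerct} and the complex kernel in \eqref{eq:CZcond}. Because $\mathbf{C}_Z$ is not a real matrix, one must check that the real-kernel subspace is taken with respect to the complex-linear action of $\mathbf{C}_Z$, so that the complexification argument is legitimate. With that convention the argument is immediate, and I expect no real obstacle.
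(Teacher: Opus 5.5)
Your proof is correct and follows the paper's route. Theorem~\ref{theo:main} turns irreducibility into \eqref{eq:CZcond}, which implies $\boldsymbol{{\cal V}}=\{0\}$, and Theorem~\ref{theo:regularisation} then gives $\NN=\CC\mathbf{1}$; the paper only remarks that the complex condition is stronger than the real one, whereas you spell this out by complexifying $\boldsymbol{{\cal V}}$ to $\boldsymbol{{\cal V}}+i\boldsymbol{{\cal V}}$.
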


The proof of Theorem \ref{theo:main} together with the analysis conducted in \cite{AFP21} shows that irreducibility is a stronger condition than $\NN=\CC\mathbf{1}$ and that every semigroup with trivial decoherence-free subalgebra which is not irreducibile admits an invariant subalgebra which is isomorphic to the algebra of bounded linear operators on a one-mode Fock space, i.e. $\Gamma(\CC)$ on which it has an action which is unitarily equivalent to one of the following two alternatives: the one-mode Hamiltonian does not have any squeezing term, i.e. is of the form
$$H=\Omega a^\dagger a +\frac{\zeta}{2}a^\dagger+\frac{\overline{\zeta}}{2}a,$$
and there is a single jump operator which is a multiple either of the creation operator, or the annihilation operator, i.e.
$$L=u a^\dagger, \text{ or }L=\overline{v} a.$$
Indeed, in the case of a creation-type jump operator one has
$$\mathbf{Z}=\begin{pmatrix} |u|^2/2 & -\Omega\\
\Omega & |u|^2/2\end{pmatrix}, \quad \mathbf{C}=|u|^2\mathbf{1}, \quad   \mathbf{C}_{Z,-} =\begin{pmatrix} 1 & -i \\
i & 1\end{pmatrix}$$
and one immediately sees that $\mathbf{C}>0$, while $\ker(\mathbf{C}_{Z,-})=\CC (i,1)^T$, which is $\mathbf{Z}$-invariant and corresponds to complex multiples of the creation operator. On the other hand, in the case of annihilation-type jump operator one has
$$\mathbf{Z}=\begin{pmatrix} -|v|^2/2 & -\Omega\\
\Omega & -|u|^2/2\end{pmatrix}, \quad \mathbf{C}=|v|^2\mathbf{1}, \quad   \mathbf{C}_{Z,-} =\begin{pmatrix} 1 & i \\
-i & 1\end{pmatrix}$$
and $\mathbf{C}>0$, while $\ker(\mathbf{C}_{Z,-})=\CC (-i,1)^T$, which is $\mathbf{Z}$-invariant and corresponds to complex multiples of the annihilation operator. 

\section{Conclusion}

In this work, we first identified the appropriate notion of regularity for operators in this setting, providing a natural way to formulate and analyse the smoothing effects produced by the GQMSs. Within this framework, we characterised the regularisation effect through algebraic constraints involving the drift and quantum diffusion matrices. These conditions establish a direct connection with the controllability theory of quantum linear systems, while also revealing a close relationship with the structure of decoherence-free subsystems.

We further investigated irreducibility and characterised it through several equivalent algebraic conditions : one expressed in terms of the drift and quantum diffusion matrices, another one in terms of the operators appearing in the generalised GKLS form of the generator and a third one which consists in a quantum counterpart of H\"ormander's condition.

A surprising central consequence of this analysis is that irreducibility is strictly stronger than regularisation properties, conversely to the classical case. While the condition characterising regularisation properties coincides with controllability for the quantum linear system and with hypoellipticity of the equation satisfied by the Wigner function of the initial state, it would be interesting to find an interpretation in control theory and in the theory of partial differential equations for the stronger irreducibility assumption.

The results presented here should therefore be viewed as a starting point for a broader analysis of quantum Markov semigroups acting on continuous variable systems. In particular, they provide a basis for the systematic study of reducible GQMSs and suitable perturbations of irreducible GQMSs. The second task inevitably demands a deeper study of the notion of smooth operator that we considered in this work.

\section*{Acknowledgement}

The authors have been partially supported by the MUR grant Dipartimento di Eccellenza
2023–2027 of Dipartimento di Matematica, Politecnico di Milano. F.G. is a member of the INdAM-GNAMPA group.

\appendix

\section{Proof of Lemma \ref{lem:schwarz}} \label{app:A}

\begin{proof}[Proof of Lemma \ref{lem:schwarz}]
The inclusion ${\cal S}(\RR^d) \subseteq \bigcap_{j=1}^{d} D(P^\infty_j) \cap D(Q^\infty_j)$ follows by the characterisation of Schwarz functions given, for instance, in Lemma 2.1 in \cite{KKW16}; let us now prove the reverse inclusion. Since position and momentum operators corresponding to different modes commute, it suffices to show that for every $f \in \bigcap_{j=1}^{d} D(P^\infty_j) \cap D(Q^\infty_j)$ and $j=1, \dots, d$, one has
$$P_jf \in D(Q_j^\infty) \text{ and } Q_jf \in D(P_j^\infty).$$

We claim that it is possible to find a sequence $(f_n)_{n \geq 0} \subset {\cal S}(\RR^d)$ such that
\begin{equation}\label{eq:approx}\lim_{n \rightarrow +\infty} P_j^kf_n=P_j^k f, \text{ and } \lim_{n \rightarrow +\infty} Q_j^kf_n=Q_j^k f, \quad \forall j=1, \dots ,d , \, k \geq 0.\end{equation}

If this is the case, in order to show that $Q_j f$ belongs to the domain of $P^k_j$ for some $k\geq 0$, we can show that $(P_j^kQ_jf_n)_{n \geq 0}$ is a Cauchy sequence (since $P_j^k$ is a closed operator). Notice that for every $g \in {\cal S}(\RR^d)$
$$\|P_j^kQ_j g\|^2=\langle g,Q_jP^{2k}_jQ_j g \rangle=\langle g,P_j^{2k}Q_j^2 g \rangle + 2ki\langle g, P_j^{2k} Q_jg \rangle \leq \|P_j^{2k}g\|\|Q^2_j g\|+ 2k\|P^{2k}_j g\|\|Q_j g\|,$$
where we used the canonical commutation relation $[Q_j,P_j]\subseteq2i\mathbf{1}.$ Therefore Eq. \eqref{eq:approx} implies that $(P^kQ_jf_n)_{n \geq 0}$ is a Cauchy sequence. Analogously one can prove that $P_j f$ belongs to the domain of $Q^k_j$. 

\bigskip Let us prove the claim. Let us consider  a mollifier $\varphi$, i.e. a smooth function with compact support such that $\int_{\RR^d}\varphi(x)dx=1$ and $\epsilon^{-d} \varphi(x/\epsilon)$ convergence to a Dirac delta in the space of tempered distributions; moreover, let us consider a smooth compactly supported function $\eta$ such that it takes values in $[0,1]$ and it is equal to $1$ in an open neighbourhood of the origin. The approximating sequence is given by $f_n=\eta_n \cdot( f *\varphi_n)$, where
$$\varphi_n(x)=n^{d}\varphi(nx), \quad \eta_n(x)=\eta(x/n).$$
Indeed, one can immediately check that $f_n \xrightarrow[n\rightarrow +\infty]{} f$ in $L^2(\RR^d)$. Moreover, for every $j=1,\dots, d$, $k \geq 1$ one has that
\[\begin{split}
    \partial_j^kf_n(x)=&\sum_{l=0}^{k}\binom{k}{l}\partial_j^l\eta_n(x)\partial_j^{k-l}(f*\varphi_n)\\
    &=\sum_{l=0}^{k}\binom{k}{l}n^{-l}\partial_j^l\eta(x/n)(\partial_j^{k-l}f)*\varphi_n\xrightarrow[n \rightarrow +\infty]{} \partial^k_j f,
\end{split}\]
where $\partial_j$ denotes the partial derivative along the $j$-th vector of the canonical basis. In the last equality we used the fact that, since $\partial_j^{k-l}f$ is well defined for all $l$s, $\partial_j^{k-l}(f*\varphi_n)=(\partial_j^{k-l}f)*\varphi_n$. Indeed,
$$\lim_{n \rightarrow +\infty}\eta(x/n)(\partial_j^{k}f)*\varphi_n=\partial_j^{k}f $$
in $L^2(\RR^{d})$ because $(\partial_j^{k}f)*\varphi_n$ converges to $\partial_j^{k}f$ in $L^2(\RR^{d})$ and for every $n \geq 1$, $\sup_{x \in \RR^{d}}\eta(x/n) =\sup_{x \in \RR^{d}}\eta(x)$. Morevoer, for $l=1, \dots, k$
$$\lim_{n \rightarrow +\infty}n^{-l}\partial^l_j\eta(x/n)(\partial_j^{k-l}f)*\varphi_n=0$$
in $L^2(\RR^{d})$, because $n^{-l}\partial^l_j\eta(x/n)$ converges to $0$ in $L^\infty(\RR^{d})$ and $\|(\partial_j^{k-l}f)*\varphi_n\|_2 \leq \|\partial^{k-l}_j f\|_2.$ Moreover,
\[\begin{split}
    \|x_j^k(f_n-f)\|_2^2\leq &\int_{\RR^d}x_j^{2k}\eta^2_n(x)\left | \int_{\RR^d}(f(x-y)-f(x))\varphi_n(y)dy\right |^2dx+ \quad (I)\\
    &\int_{\RR^d}x_j^{2k}(1-\eta_n(x))^2|f(x)|^2 dx \quad (II).
\end{split}\]
Notice that $(II)$ goes to $0$ for $n$ approaching $\infty$ by dominated convergence and
\[\begin{split}
    (I)=&\int_{\RR^d}x_j^{2k}\eta^2_n(x)\left | \int_{\RR^d}(f(x-y/n)-f(x))\varphi(y)dy\right |^2dx\leq\\
    &\int_{\RR^d}\varphi^2(y) \int_{\RR^d}x_j^{2k}\eta^2_n(x)|f(x-y/n)-f(x)|^2dxdy,
\end{split}\]
which converges to $0$ as well for $n$ going to $\infty$. Indeed, $\partial_j^k f$ is defined in $L^2(\RR^d)$ for every $k \geq 1$, then it is well known that $f$ admits a continuous representative; therefore, for every $y \in \RR^d$,
$$\lim_{n \rightarrow +\infty}\int_{\RR^d}x_j^{2k}\eta^2_n(x)|f(x-y/n)-f(x)|^2dx=0$$
by dominated convergence ($x_j^kf(x)$ and $f$ belong to $L^2(\RR^d)$. By dominated convergence again, we can conclude: indeed, $\varphi$ is compactly supported and for every $y \in \RR^d$
\[\begin{split}
    &\int_{\RR^d}x_j^{2k}\eta^2_n(x)|f(x-y/n)-f(x)|^2dx \leq  \\
    &\|x_j^k f(x)\|^2_2+\sum_{l=0}^{2k}\binom{2k}{l}|y_j/n|^{2k-l}\int_{\RR^d}|x_j-y_j/n|^{l}|f(x-y_j/n)|^2dx.
\end{split}\]
If $l$ is even
$$\int_{\RR^d}|x_j-y_j/n|^{l}|f(x-y_j/n)|^2dx=\|x_j^{l/2}f(x)\|_2^2,$$
if $l$ is odd
$$\int_{\RR^d}|x_j-y_j/n|^{l}|f(x-y_j/n)|^2dx \leq \left (\max_{\{x:x_j^{l}\geq x_j^{l+1}\}}|x_j|^{l}\right )\|f\|_2^2+\|x_j^{(l+1)/2}f(x)\|_2^2.$$
\end{proof}

\section{Proof of Theorem \ref{theo:regularisation} and Proposition \ref{prop:schwarz}} \label{app:schwarzII}

\begin{proof}[Proof of Theorem \ref{theo:regularisation}]
    Eq. \eqref{eq:regularisation} follows from the fact that if $\mathbf{w} \in \boldsymbol{\cal V}^\perp={\rm supp}(\mathbf{C}_t)$ (the equality between the linear spaces follows from Lemma \ref{lem:kerct}), then one can easily check that $\hat{\rho}_t(\mathbf{z})\in D(\widehat{{\cal H}}_{\mathbf{w}}^\infty)$ (see eq. \eqref{eq:domainHhat}) and, therefore, $\rho_t \in D({\cal H}_{\mathbf{w}}^\infty)$ thanks to the intertwining relation in eq. \eqref{eq:equivalenceofgroups}

    .

    \bigskip $1. \Leftrightarrow2.$ is an immediate consequence of Proposition \ref{prop:decofree}.

    \bigskip $2. \Leftrightarrow 3.$ follows from eq. \eqref{eq:regularisation}.

    \bigskip $4. \Rightarrow 1.$ follows from the definition of $\boldsymbol{{\cal V}}$.

    \bigskip $1. \Rightarrow 4.$ Notice that for every $t \geq 0$ and $\rho$ one has
$$\widehat{\TT_{t*}(\rho)}(\mathbf{z})=\widehat{g}_t(\mathbf{z})\cdot \widehat{\rho}(e^{t\mathbf{Z}}\mathbf{z})=\widehat{g}_t(\mathbf{z})\cdot \widehat{{\cal S}_t(\rho)}=\widehat{g_t*{\cal S}_t(\rho)},$$
where we used that the quantum Fourier transform turns convolution into product (see Proposition 3.4 in \cite{We84}) and the statement follows from injectivity of the quantum Fourier transform.
\end{proof}

\begin{proof}[Proof of Proposition \ref{prop:schwarz}]
$1.$ We will use the characterisation of ${\cal S}(\RR^{d})$ given in Lemma \ref{lem:schwarz} and we will prove by induction that $\rho_t:=\TT_{t*}(\rho)$ is such that
$$\rho_t(D(P_j^k)) \subseteq D(P_j^k), \quad \rho_t(D(Q_j^k)) \subseteq D(Q_j^k)$$
for every $j=1,\dots, d$ and $k \geq 1.$ Let us consider any $\mathbf{w} \in \{-\mathbf{f}_j/\sqrt{2}, \mathbf{f}_{2j}/\sqrt{2}\}_{j =1}^{d}$, where $\{\mathbf{f}_j\}_{j=1}^{2d}$ is the canonical basis of $\RR^{2d}$; we recall that $R(-\mathbf{f}_j/\sqrt{2})=P_j$ and $R(\mathbf{f}_{2j}/\sqrt{2})=Q_j$. Theorem \ref{theo:regularisation} ensures that $\rho_t \in D({\cal H}_{\mathbf{w}}^\infty)$; if $\varphi \in D(R(\mathbf{w})^{k})$ it is easy to see that for every $k \geq 1$, 

$${\cal H}_\mathbf{w}^k(\rho_t)\varphi=\sum_{l=0}^{k} \alpha_{lk} R(\mathbf{w})^l\rho_t R(\mathbf{w})^{k-l}\varphi$$
for some real coefficients $\alpha_{kl}.$ It follows by induction from the characterisation of $D({\cal H}_{\mathbf{w}})$ (see eq. \eqref{eq:domains}).

Using that $\rho_t \in D({\cal H}_{\mathbf{w}})$, we get that $\rho_t (D(R(\mathbf{w}))) \subseteq D(R(\mathbf{w}))$. Let us assume that $\rho_t (D(R(\mathbf{w})^k)) \subseteq D(R(\mathbf{w})^k)$ for some $k \geq 1$, let us show that this implies that $\rho_t (D(R(\mathbf{w})^{k+1})) \subseteq D(R(\mathbf{w})^{k+1})$. Indeed, by $\rho_t \in D({\cal H}_\mathbf{w}^{k+1})$, one has that
${\cal H}_\mathbf{w}^k(\rho_t)(D(R(\mathbf{w})) \subseteq D(R(\mathbf{w}))$, therefore for every $\varphi \in D(R(\mathbf{w})^{k+1})$ one has
$${\cal H}_\mathbf{w}^k(\rho_t)\varphi=\sum_{l=0}^{k} \alpha_{lk} R(\mathbf{w})^l \rho_t R(\mathbf{w})^{k-l}\varphi=\psi \in D(R(\mathbf{w})).$$
Notice that $R(\mathbf{w})^{k-l}\varphi \in D(R(\mathbf{w})^{l+1})$ and, for $l <k$, $\rho_t R(\mathbf{w})^{k-l}\varphi \in D(R(\mathbf{w})^{l+1})$. Therefore, for $l <k$, $R(\mathbf{w})^l \rho_t R(\mathbf{w})^{k-l}\varphi \in D(R(\mathbf{w}))$, hence
$$\alpha_{kk}R(\mathbf{w})^k \rho_t \varphi=\underbrace{\psi}_{\in D(R(\mathbf{w}))}-\sum_{l=0}^{k-1}\alpha_{lk}\underbrace{D(R(\mathbf{w})^l \rho_t D(R(\mathbf{w})v^{k-l}\varphi}_{\in D(R(\mathbf{w})}\in D(R(\mathbf{w})),$$
which means that $\rho_t(D(R(\mathbf{w})^{k+1}))\subseteq D(R(\mathbf{w})^{k+1})$ and we are done.

\bigskip $2.$ Due to item $4.$ in Theorem \ref{theo:regularisation}, one has that for every $t >0$ and initial state $\rho$
$${\cal H}_\mathbf{w}(\TT_{t*}(\rho))={\cal H}_\mathbf{w}(g_t*{\cal S}_t(\rho))=-\partial_{\mathbf{w}}g_t*{\cal S}_t(\rho),$$
therefore Young's inequality (Proposition 3.2 in \cite{We84}) implies that
$$\|{\cal H}_\mathbf{w}(\TT_{t*}(\rho))\|_2 \leq \|\partial_{\mathbf{w}}g_t\|_1\|{\cal S}_t(\rho)\|_2.$$
Since the quantum Fourier transform is an isometry, one has that
$$\|{\cal S}_t(\rho)\|_2=\|\widehat{\rho}(e^{t\mathbf{Z}}\mathbf{z})\|_2 =\det(e^{-t\mathbf{Z}})\|\widehat{\rho}(\mathbf{z})\|_2=e^{-t\tr(\mathbf{Z})}\|\widehat{\rho}(\mathbf{z})\|_2.$$
On the other hand, since
$$\partial_{\mathbf{w}}g_t(\mathbf{x})=\frac{\langle \mathbf{w}, \mathbf{C}^{-1}_t(\mathbf{x}-\boldsymbol{\zeta}_t)\rangle}{2\pi \det(\mathbf{C}_t)}e^{-\frac{\langle \mathbf{x}-\boldsymbol{\zeta}_t,\mathbf{C}_t^{-1}(\mathbf{x}-\boldsymbol{\zeta}_t)\rangle}{2}},$$
one has
$$\|\partial_{\mathbf{w}}g_t\|_1=\mathbb{E}[|\langle \mathbf{w},\mathbf{C}_t^{-1/2}\mathbf{X}\rangle|] \leq \|\mathbf{w}\|\|\mathbf{C}_t^{-1/2}\|\mathbb{E}[\|\mathbf{X}\|],$$
where $\mathbf{X}$ is a standard multivariate Gaussian random vector and we used a simple change of variable formula in the first equality.
\end{proof}

\section{Proof of Proposition \ref{prop:Sobolev}} \label{app:Sobo}

\begin{proof}[Proof of Proposition \ref{prop:Sobolev}] Let us consider any $x \in W^{1}$, then
$$\|(\mathbf{1}+N)^{1/2}x(\mathbf{1}+N)^{1/2}\|_2 \leq \|(\mathbf{1}+N)^{1/2}x(\mathbf{1}+N)^{1/2}\|_1<+\infty.$$
For every field operator $R(\mathbf{z})$ one has that $D(N^{1/2}) \subseteq D(R(\mathbf{z}))$, hence if $x \in W^{1}$, then $x \in D({\cal K}_{\mathbf{w}})$ for every $\mathbf{w} \in \RR^{2d}$.

By contradiction, let us assume that there exists a GQMS $\TT$ which is $1$-Sobolev regularising and let us consider an initial state $\rho \notin D({\cal K}_{\overline{\mathbf{w}}})$ for some $\overline{\mathbf{w}} \in \RR^{2d}$. The regularisation assumption made on the semigroup implies that for every $t>0$ and $\mathbf{w} \in \RR^{2d}$, one has $\rho_t:=\TT_{t*}(\rho) \in D({\cal K}_{\mathbf{w}})$; consequently,
$$\hat{\rho}_t(\mathbf{z})=e^{-\frac{\|\mathbf{C}_t\mathbf{z}\|^2}{2}+i\langle\boldsymbol{\zeta}_t, \mathbf{z}\rangle}\hat{\rho}(e^{t\mathbf{Z}}\mathbf{z}) \in D(\widehat{\cal K}_{\mathbf{w}}) \text{ for all } \mathbf{w} \in \RR^{2d}.$$
Since $e^{-\frac{\|\mathbf{C}_t\mathbf{z}\|^2}{2}+i\langle\boldsymbol{\zeta}_t, \mathbf{z}\rangle}$ is smooth and $e^{t\mathbf{Z}}\mathbf{z}$ is smooth and bijective, this implies that $\hat{\rho}(\mathbf{z})$ belongs to the domain of $\widehat{{\cal K}}_{\mathbf{w}}$ for every $\mathbf{w}$, which contradicts the hypotheses.
\end{proof}

\section{Relationship between $\LL_1$, $\LL_2$ and $\mathbf{Z}$, $\mathbf{C}_{Z}$ $\boldsymbol{\zeta}$.} \label{app:l1l2}
$\mathbf{Z}$ and $\mathbf{C}_{Z}$ represents the drift and diffusion matrices in the phase space; the choice we made of Weyl operators, i.e.
$$W(z)=e^{i\sqrt{2}\sum_{j=1}^{d}\Im(z_j)Q_j-\Re(z_j)P_j}$$
corresponds to expressing a point in the quantum phase space in coordinates with respect to:
$-\sqrt{2}P_1,
\dots
-\sqrt{2}P_d,
\sqrt{2}Q_1,
\dots
\sqrt{2}Q_d$.
However, in this proof it will be more convenient to work choosing coordinates with respect to creation and annihilation operators: since $W(z)=e^{\sum_{j=1}^{d}z_ja_j^\dagger-\overline{z_j}a_j}$, $(\Re(z_1), \dots, \Re(z_d), \Im(z_1), \dots, \Im(z_d))$ becomes $(z_1,\dots, z_d, -\overline{z_1}, \dots, \overline{z_d})$ in the ``basis" $a_1^\dagger, \dots, a_d^\dagger, a_1, \dots, a_d.$ With this choice, an operator $A$ as in \eqref{eq:A1A2} reads
$$\mathbb{A}=\begin{pmatrix} A_1 & -A_2\\
-\overline{A}_2 & \overline{A_1} \end{pmatrix}=\frac{1}{2}\begin{pmatrix} \mathbf{1} & i \mathbf{1} \\
-\mathbf{1} & i \mathbf{1}\end{pmatrix}\mathbf{A} \begin{pmatrix} \mathbf{1} & - \mathbf{1} \\
-i\mathbf{1} & -i \mathbf{1}\end{pmatrix}.$$
The right hand side of the previous equation can be extended to any linear operator on $\CC^{2d}$, even if they do not originate as complexifications of linear operators acting on $\RR^{2d}$, e.g. $\mathbf{C}_{Z}$. 
Therefore,
\begin{align}
&\mathbb{Z}=\begin{pmatrix} (U^T\overline{U}-V^T\overline{V})/2+i\Omega & (V^TU-U^T V)/2 -i \kappa\\
(V^*\overline{U}-U^*\overline{V})/2+i\overline{\kappa} & (U^*U-V^*V)/2-i\overline{\Omega}\end{pmatrix}, \label{eq:bbZ}\\
&\mathbb{C}_{Z}=\sum_{\ell=1}^{m}\mathbb{K}_{\ell}, \quad \mathbb{K}_{\ell}=\begin{pmatrix}V^{(\ell)T}\\ -U^{(\ell)*}\end{pmatrix}\begin{pmatrix}\overline{V}^{(\ell)} &-U^{(\ell)}\end{pmatrix},\label{eq:bbCZ}\end{align}
where $U^{\ell}$ and $V^{\ell}$ stay for the $\ell$-th row of $U$ and $V$, respectively.

We will show how operators in equations \eqref{eq:bbZ} and \eqref{eq:bbCZ} appear from $\LL_1$ and $\LL_2$ when one looks at their formal action on complex linear combinations of creation and annihilation operators. If one expands the commutator with respect to jump operators in terms of commutators with respect to creation and annihilation operators, one obtains
$$
\partial_{L_\ell}^*\partial_{L_\ell}=\sum_{j,k=1}^{d}(v_{\ell j}\overline{v}_{\ell k}\partial^*_{a_j}\partial_{a_k}+v_{\ell j}u_{\ell k}\partial^*_{a_j}\partial^*_{a_k}+\overline{u}_{\ell j}  \overline{v}_{\ell k}\partial_{a_j}\partial_{a_k}+\overline{u}_{\ell j} u_{\ell k} \partial_{a_j}\partial^*_{a_k}),
$$
which, in a compact way, can be rewritten as
$$(\partial^*_{a_1}, \dots, \partial^*_{a_d}, -\partial_{a_1}, \dots, -\partial_{a_d}) \mathbb{K}_{\ell}\begin{pmatrix} \partial_{a_1}\\ \dots\\\partial_{a_d}\\ -\partial^*_{a_1}\\ \dots\\ -\partial^*_{a_d}\end{pmatrix}, \quad \mathbb{K}_{\ell}=\begin{pmatrix}V^{(\ell)T}\\ -U^{(\ell)*}\end{pmatrix}\begin{pmatrix}\overline{V}^{(\ell)} &-U^{(\ell)}\end{pmatrix},$$
therefore
$${\cal L}_2=-\frac{1}{2}(\partial^*_{a_1}, \dots, \partial^*_{a_d}, -\partial_{a_1}, \dots, -\partial_{a_d}) \mathbb{K}\begin{pmatrix} \partial_{a_1}\\ \dots\\\partial_{a_d}\\ -\partial^*_{a_1}\\ \dots\\ -\partial^*_{a_d}\end{pmatrix}.$$
Notice that $\partial_{a_1}, \dots, \partial_{a_d},-\partial_{a_1^\dagger}, \dots, -\partial_{a_d^\dagger}$ can be seen as a ``dual basis" with respect to $a_1^\dagger, \dots, a_d^\dagger, a_1, \dots, a_d,$ in that $\partial_{a_j}(a_k)=\partial_{a_j^\dagger}(a_k^\dagger)=0$ and $\partial_{a_j}(a^\dagger_k)=-\partial_{a_j^\dagger}(a_k)=\delta_{jk}\mathbf{1}.$ Concerning $\LL_1$, one can see that $i\partial_H$ encodes all the information regarding
$$\mathbb{H}=\begin{pmatrix} i\Omega & -i\kappa \\
i\overline{\kappa} & -i \overline{\Omega}\end{pmatrix} \text{ and } \boldsymbol{\zeta}.$$
Indeed, one has
\[\begin{split}i\partial_H&=i\sum_{k,j=1}^{d} \left ( \Omega_{jk} \partial_{a^\dagger_j a_k} + \frac{\kappa_{jk}}{2}\partial_{a_j^\dagger a_k^\dagger}+\frac{\overline{\kappa}_{jk}}{2}\partial_{a_j a_k}\right )+\frac{1}{2}\sum_{j=1}^{d} \zeta_j \partial^*_{a_j} + \overline{\zeta}_j \partial_{a_j}\\
&=i\sum_{k,j=1}^{d} \left (\Omega_{jk} (L_{a_j}^*\partial_{a_k}+R_{a_k}\partial_{a_j}^*)\right .\\
&\left. + \frac{\kappa_{jk}}{2}(L_{a_j}^*\partial^*_{a_k}+R^*_{a_k}\partial_{a_j}^*)+\frac{\overline{\kappa}_{jk}}{2}(L_{a_j}\partial_{a_k}+R_{a_k}\partial_{a_j}) \right )\\
&+\frac{1}{2}\sum_{j=1}^{d} \zeta_j \partial^*_{a_j} + \overline{\zeta}_j \partial_{a_j}\\
&=(L_{a_1}^*, \dots, L_{a_d}^*)i\Omega\begin{pmatrix} \partial_{a_1}\\ \dots\\
\partial_{a_d}\end{pmatrix}+(R_{a_1}, \dots, R_{a_d})(-i\overline{\Omega})\begin{pmatrix} -\partial^*_{a_1}\\ \dots\\
-\partial^*_{a_d}\end{pmatrix}\\
&+(L_{a_1}^*, \dots, L_{a_d}^*)\left (\frac{-i\kappa}{2}\right )\begin{pmatrix} -\partial^*_{a_1}\\ \dots\\
-\partial^*_{a_d}\end{pmatrix}+(R_{a_1}^*, \dots, R_{a_d}^*)\left (\frac{-i\kappa}{2}\right )\begin{pmatrix} -\partial^*_{a_1}\\ \dots\\
-\partial^*_{a_d}\end{pmatrix}
\end{split}\]
\[\begin{split}
&+(L_{a_1}, \dots, L_{a_d})\left (\frac{i\overline{\kappa}}{2}\right )\begin{pmatrix} \partial_{a_1}\\ \dots\\
\partial_{a_d}\end{pmatrix}+(R_{a_1}, \dots, R_{a_d})\left (\frac{i\overline{\kappa}}{2}\right )\begin{pmatrix} \partial_{a_1}\\ \dots\\
\partial_{a_d}\end{pmatrix}\\
&+\frac{1}{2}\sum_{j=1}^{d} \zeta_j \partial^*_{a_j} + \overline{\zeta}_j \partial_{a_j}.\end{split}\]

Moreover, ${\cal L}_1(\cdot)-i\partial_H$ can be expanded as:
\[\begin{split}&R_{L_\ell}\partial_{L_\ell}^*-R_{L^*_\ell}\partial_{L_\ell}=\\
&\sum_{j,k=1}^{d}(v_{\ell j}\overline{v}_{\ell k} R_{a_k}\partial_{a_j}^*+v_{\ell j}u_{\ell k} R_{a^*_k}\partial_{a_j}^*+R_{a_k}\overline{u}_{\ell j}\overline{v}_{\ell k} \partial_{a_j}+\overline{u}_{\ell j}u_{\ell k} R_{a^*_k}\partial_{a_j}+\\
&-u_{\ell j}\overline{u}_{\ell k} R_{a_k}\partial_{a_j}^*-u_{\ell j}v_{\ell k} R_{a^*_k}\partial_{a_j}^*-\overline{v}_{\ell j}\overline{u}_{\ell k} R_{a_k}\partial_{a_j}-\overline{v}_{\ell j}v_{\ell k} R_{a^*_k}\partial_{a_j}.\end{split}\]
This can be written in a more compact way as
$$(R^*_{a_1},\dots,R^*_{a_d},R_{a_1}, \dots, R_{a_d}) \widetilde{\mathbb{Z}}\begin{pmatrix} \partial_{a_1}\\ \dots\\ \partial_{a_d}\\-\partial^*_{a_1}\\ \dots\\ -\partial^*_{a_d}\end{pmatrix}, \quad \widetilde{\mathbb{Z}}:=\frac{1}{2}\begin{pmatrix} U^T\overline{U}-V^T\overline{V} & V^TU-U^T V\\
V^*\overline{U}-U^*\overline{V} & U^*U-V^*V\end{pmatrix}.$$
Since $\mathbb{Z}=\mathbb{H}+\widetilde{\mathbb{Z}}$, we just showed that $\LL_1$ and $\mathbb{Z}$, $\boldsymbol{\zeta}$ encode the same information.

\section{Proof of Theorem \ref{th:equivcond}} \label{app:C}

\begin{proof}[Proof of Theorem \ref{th:equivcond}]
\textbf{\eqref{eq:Gcond} $\Leftrightarrow$ \eqref{eq:CZcond}.} First of all, notice that \eqref{eq:CZcond} is equivalent to
\begin{center}
    there are no non-trivial $\mathbb{Z}$-invariant subspaces in the kernel of $\mathbb{K}$;
\end{center}
this is because the matrices involved are equivalent with respect to the same change of basis. It is convenient to translate linear algebraic statements in terms of superoperators: we noticed in \ref{app:C} that $\mathbb{Z}$ describes the action of the superoperator
$$\widetilde{{\cal L}}_1={\cal L}_1-\frac{i}{2}\partial_{R(\mathbf{J}\boldsymbol{\zeta})},$$
when formally applied to complex linear combinations of creation and annihilation operators in the basis $a^\dagger_1,\dots, a^\dagger_d, a_1, \dots, a_d$. Moreover, we also observed that the kernel of $\mathbb{K}$ corresponds to
$$B_0:=\bigcap_{\ell=1, \dots, m}\ker(\partial_{L_\ell}).$$

In the same way, $\mathbb{G}$ corresponds to the superoperator $\partial_{\widetilde{G}},$ where
$$\widetilde{G}:=G+\frac{i}{2}R(\mathbf{J}\boldsymbol{\zeta}).
$$
In order to keep the notation simple, without loss of generality, we can assume that $\boldsymbol{\zeta}=0$ and keep the notation without wiggles. \eqref{eq:Gcond} can be reformulated as
    \begin{equation} \label{eq:commutantG}
    B_\infty:=\bigcap_{k \geq 0}B_k=\{0\}, \text{ where } B_k:=\bigcap_{\ell=1,\dots, m}\ker(\partial_{\partial^k_G(L_\ell)}).\end{equation}
Notice that for every $k\geq 0$
\begin{equation} \label{eq:partialG}
\partial_{\partial^{k+1}_G(L_\ell)}=\partial_G\partial_{\partial^k_G(L_\ell)}-\partial_{\partial^k_G(L_\ell)}\partial_G,
\end{equation}
therefore condition \eqref{eq:commutantG} can be rephrased as
\begin{equation} \label{cond:Ginvariance}
    \text{there exists no non-trivial $\partial_G$-invariant subspaces in }B_0=\bigcap_{\ell=1, \dots, m}\ker(\partial_{L_\ell}).
\end{equation}
Indeed, by definition $B_\infty$ is contained in $B_0=\bigcap_{\ell=1, \dots, m}\ker(\partial_{L_\ell})$; let us show that it is also $\partial_G$-invariant: if $v \in B_\infty$, then using eq. \eqref{eq:partialG} one has that for every $k \geq 1$
$$\partial_{\partial^k_G(L_\ell)}\partial_G(v)=\partial_G\partial_{\partial^k_G(L_\ell)}(v)-\partial_{\partial^{k+1}_G(L_\ell)}(v)=0$$
and we are done.

Let us now prove the reverse implication: it is enough to prove that any $\partial_G$-invariant subspace in $B_0$ is contained in $B_\infty$. Let us consider a $\partial_G$-invariant linear space in $B_0$ that we denote by $A$. It is enough to show, that if $A \subseteq B_k$, then $A\subseteq B_{k+1}$: let us consider $v \in A$, then
$\partial_{\partial^{k+1}_G(L_\ell)}(v)=\partial_G\partial_{\partial^k_G(L_\ell)}(v)-\partial_{\partial^k_G(L_\ell)}\partial_G(v)=0,$
where we used that $\partial_G(A) \subseteq A \subseteq B_k$.

In order to conclude, we only need to show that $\partial_G$ and $\LL_1$ share the same invariant subspaces in $B_0$. Notice, that restricted to $B_0$, one has
\[\begin{split}
    \partial_G&=-i \partial_H-\frac{1}{2}\sum_{\ell=1}^{m}(L_{L_\ell}^*\partial_{L_\ell}+R_{L_\ell}\partial^*_{L_\ell})\\
    &=-i\partial_H-\frac{1}{2}\sum_{\ell=1}^{m}(R_{L_\ell}\partial_{L_\ell}^*-R_{L_\ell}^*\partial_{L_\ell})=-{\cal L}_1,
\end{split}\]
where we used the fact that $\partial_{L_\ell}$ is identically zero on $B_0$.

\bigskip 
$\eqref{eq:CZcond} \Leftrightarrow \eqref{eq:Horcond}.$ It is just a tedious computation that we report below for completeness.  One can check that the following commutation rules hold true:
    \begin{itemize}
    \item $[\partial_{a_j}, \partial_{a_k}]=[\partial^*_{a_j}, \partial_{a_k}]=[\partial^*_{a_j}, \partial^*_{a_k}]=0;$
    \item $[\partial_{a_j}, R_{a_k}]=[\partial^*_{a_j}, R^*_{a_k}]=[\partial_{a_j}, L_{a_k}]=[\partial^*_{a_j}, L^*_{a_k}]=0;$
    \item $[R^*_{a_k},\partial_{a_j}]=[L^*_{a_k},\partial_{a_j} ]=-\delta_{jk}{\rm Id}=-[R_{a_k},\partial^*_{a_j}]=-[L_{a_k},\partial^*_{a_j}];$
    \item $[\partial_{a_j^*a_k}, \partial_{a_m}]=-\delta_{jm}\partial_{a_k},$ $[\partial_{a_j^*a_k}, \partial^*_{a_m}]=\delta_{km}\partial_{a_j}^*$;
    \item $[\partial_{a_j^*a^*_k}, \partial_{a_m}]=-\delta_{jm}\partial^*_{a_k}-\delta_{km}\partial_{a_j}^*,$ $[\partial_{a_ja_k}, \partial^*_{a_m}]=\delta_{jm}\partial_{a_k}+\delta_{km}\partial_{a_j}.$
    \end{itemize}
    Let us compute $[{\cal L}_1,\sum_{m=1}^{d}\overline{v}_m\partial_{a_m}+u_m\partial_{a_m}^*].$ We will break the computations in several steps: one has
    \[\begin{split}
        &\left [i\sum_{j,k=1}^{d}\Omega_{jk} \partial_{a_j^*a_k},\sum_{m=1}^{d}\overline{v}_m\partial_{a_m}+u_m\partial_{a_m}^*\right ]=\\
        &i\sum_{j,k,m=1}^{d}\Omega_{jk}(\overline{v}_m[\partial_{a_j^*a_k},\partial_{a_m}]+u_m[\partial_{a_j^*a_k},\partial^*_{a_m}])=\\
        &-i\sum_{j,k=1}^{d}\Omega_{jk}\overline{v}_j\partial_{a_k}+i\sum_{j,k=1}^{d}\Omega_{jk}u_k\partial^*_{a_j},
    \end{split}\]
    hence the matrix expression of $ \left [i\sum_{j,k=1}^{d}\Omega_{jk} \partial_{a_j^*a_k},\cdot \right ]$ in the basis $\{ \partial_{a_1}, \dots, \partial_{a_d}, -\partial^*_{a_1}, \dots, -\partial_{a_d}^*\}$ is given by
    $$\begin{pmatrix} -i\overline{\Omega} & 0\\
    0 & i\Omega
        \end{pmatrix},$$
        where we used that $\Omega$ is Hermitian. Let us move to the rest of the Hamiltonian:
     \[\begin{split}
        &\left [\frac{i}{2}\sum_{j,k=1}^{d}\kappa_{jk} \partial_{a_j^*a^*_k}+\overline{\kappa}_{jk} \partial_{a_ja_k},\sum_{m=1}^{d}\overline{v}_m\partial_{a_m}+u_m\partial_{a_m}^*\right ]=\\
        &\frac{i}{2}\sum_{j,k,m=1}^{d}\kappa_{jk}(\overline{v}_m[\partial_{a_j^*a^*_k},\partial_{a_m}]+u_m[\partial_{a_j^*a^*_k},\partial^*_{a_m}])+\\
        &\frac{i}{2}\sum_{j,k,m=1}^{d}\overline{\kappa}_{jk}(\overline{v}_m[\partial_{a_ja_k},\partial_{a_m}]+u_m[\partial_{a_ja_k},\partial^*_{a_m}])=\\
        &-\frac{i}{2}\sum_{j,k}^{d}\kappa_{jk}(\overline{v}_j\partial_{a_k}^*+\overline{v}_k\partial^*_{a_j})+\\
        &\frac{i}{2}\sum_{j,k=1}^{d}\overline{\kappa}_{jk}(u_j\partial_{a_k}+u_k \partial_{a_j})
    \end{split}\]
    and the matrix expression of $ \left [\frac{i}{2}\sum_{j,k=1}^{d}\kappa_{jk} \partial_{a_j^*a^*_k}+\overline{\kappa}_{jk} \partial_{a_ja_k},\cdot \right ]$ is given by
    $$\begin{pmatrix} 0 & -i\overline{\kappa}\\
    i\kappa & 0
        \end{pmatrix},$$
        where we used that $\kappa$ is symmetric. One can immediately see that the term with $\zeta$ brings zero contribution. Therefore, the Hamiltonian part is represented by
        $$\begin{pmatrix} -i\overline{\Omega} & -i\overline{\kappa}\\
    i\kappa & i\Omega
        \end{pmatrix}=-\mathbb{H}^T.$$
    Let us now look at the remaining part:
    \begin{equation}\label{eq:ztildecomm}\begin{split}
        &\left [ \sum_{j,k=1}^{d} \widetilde{Z}^{11}_{jk}R^*_{a_j}\partial_{a_k}-\widetilde{Z}^{12}_{jk}R^*_{a_j}\partial^*_{a_k}+\right .\\
        &\left .\widetilde{Z}^{21}_{jk}R_{a_j}\partial_{a_k}-\widetilde{Z}^{22}_{jk}R_{a_j}\partial^*_{a_k},\sum_{m=1}^{d}\overline{v}_m\partial_{a_m}+u_m\partial_{a_m}^*\right ]=\\
        &\sum_{j,k,m=1}^{d} \widetilde{Z}^{11}_{jk}(\overline{v}_m[R_{a_j}^*\partial_{a_k},\partial_{a_m}]+u_m[R_{a_j}^*\partial_{a_k},\partial^*_{a_m}])+\\
        &\sum_{j,k,m=1}^{d} -\widetilde{Z}^{12}_{jk}(\overline{v}_m[R_{a_j}^*\partial^*_{a_k},\partial_{a_m}]+u_m[R_{a_j}^*\partial^*_{a_k},\partial^*_{a_m}])+\\
        &\sum_{j,k,m=1}^{d} \widetilde{Z}^{21}_{jk}(\overline{v}_m[R_{a_j}\partial_{a_k},\partial_{a_m}]+u_m[R_{a_j}\partial_{a_k},\partial^*_{a_m}])+\\
        &\sum_{j,k,m=1}^{d} -\widetilde{Z}^{22}_{jk}(\overline{v}_m[R_{a_j}\partial^*_{a_k},\partial_{a_m}]+u_m[R_{a_j}\partial^*_{a_k},\partial^*_{a_m}]).
    \end{split}\end{equation}
    Let us simplify the commutators appearing in the expression above using the following set of equations:
   \begin{align*}
       &[R_{a_j}^*\partial_{a_k},\partial_{a_m}]=[R_{a_j}^*,\partial_{a_m}]\partial_{a_k}=-\delta_{jm}\partial_{a_k}, \quad [R_{a_j}^*\partial_{a_k},\partial^*_{a_m}]=0,\\
    &[R_{a_j}^*\partial^*_{a_k},\partial_{a_m}]=[R_{a_j}^*,\partial_{a_m}]\partial^*_{a_k}=-\delta_{jm}\partial^*_{a_k}, \quad [R_{a_j}^*\partial^*_{a_k},\partial^*_{a_m}]=0,\\
    &[R_{a_j}\partial_{a_k},\partial_{a_m}]=0, \quad [R_{a_j}\partial_{a_k},\partial^*_{a_m}]=[R_{a_j},\partial^*_{a_m}]\partial_{a_k}=\delta_{jm}\partial_{a_k},\\
    &[R_{a_j}\partial^*_{a_k},\partial_{a_m}]=0, \quad [R_{a_j}\partial^*_{a_k},\partial^*_{a_m}]=[R_{a_j},\partial^*_{a_m}]\partial^*_{a_k}=\delta_{jm}\partial^*_{a_k}.
    \end{align*}
    Therefore the expression in eq. \eqref{eq:ztildecomm} becomes
    \[
     \sum_{j,k,1}^{d} -\widetilde{Z}^{11}_{jk}\overline{v}_j\partial_{a_k}+\sum_{j,k=1}^{d} +\widetilde{Z}^{12}_{jk}\overline{v}_j\partial^*_{a_k}+\sum_{j,k=1}^{d} \widetilde{Z}^{21}_{jk}u_j \partial_{a_k}+\sum_{j,k=1}^{d}- \widetilde{Z}^{22}_{jk}u_j\partial^*_{a_k}.
    \]

The matrix expression of this last part of the drift term is given by $-\mathbb{Z}^T.$ 

Summing up, expressing everything with respect to the basis $\partial_{a_1}, \dots, \partial_{a_d}, -\partial_{a_1}^*, \dots, -\partial_{a_d}^*$ one has that
$$\partial_{L_\ell} \Leftrightarrow B^{(\ell)}:=\begin{pmatrix}{\overline{V}^{(\ell)T}}\\ -U^{(\ell)T} \end{pmatrix}, \quad  [{\cal L}_1,\partial_{L_\ell}]\Leftrightarrow-\mathbb{Z}^TB^{(\ell)}, \quad [{\cal L}_1,[{\cal L}_1,\partial_{L_\ell}]]\Leftrightarrow(-\mathbb{Z})^{2T}B^{(\ell)}\dots $$

We deduced that \eqref{eq:Horcond} is equivalent to
$${\rm span}_{\CC}\{-\mathbb{Z}^{kT}B^{(\ell)}: k \geq 0\}=\CC^{2d},$$
which can be reformulated as
$${\rm span}_{\CC}\{B^{(\ell)T}\mathbb{Z}^{k}: k \geq 0\}=\CC^{2d}$$
and the latter form can easily be recast as \eqref{eq:CZcond}.
\end{proof}

\section{Proof of Theorem \ref{theo:main}} \label{app:maintheo}

\begin{proof}[Proof of Theorem \ref{theo:main}]
$1. \Rightarrow 2.$ By contradiction, let us assume that $\TT$ is not irreducible, then there exists a nontrivial subharmonic projection $p$; we recall that $p$ reduces $\TT$ in the sense that for any initial state $\rho$ such that ${\rm supp}(\rho) \subseteq {\rm supp}(p)$, and for every $t \geq 0$ one has
\begin{equation} \label{eq:reducing}{\rm supp}(\TT_{t*}(\rho)) \subseteq {\rm supp}(p).\end{equation}

There exists a nonzero Schwarz function $\varphi \in {\cal S}(\RR^d)$ such that $\ket{\psi}:=\TT_{t*}(\rho)\ket{\varphi} \neq 0$; otherwise, thanks to the density of Schwarz functions in $L^2(\RR^d)$, this would imply that $\TT_{t*}(\rho)=0$, which is false. Using Proposition \ref{prop:schwarz} and eq. \eqref{eq:reducing}, one gets that $\ket{\psi}$ belongs to ${\cal S}(\RR^d)\cap {\rm supp}(p)$. Let us consider the closed linear subspace $V$ generated by $\ket{\psi}$ and vectors of the form $q(a_j,a_k^\dagger) \ket{\psi},$ where $q(a_j,a_k^\dagger)$ is any monomial in $a_j$ and $a_k^\dagger$; we remark that it is well defined since $\ket{\psi} \in {\cal S}(\RR^d)$ and, consequently. Moreover, let us define $W$ as the closed linear space generated by 
\begin{equation} \label{eq:commutant}\ket{\psi}, \, \partial_G^{m_1}(L_{\ell_1}) \cdots \partial_G^{m_n}(L_{\ell_n})\ket{\psi} \quad n\geq 1,\, m_1, \dots, m_n \geq 0.
\end{equation}
It is immediate to see that $W \subseteq V$, since $\partial_G^{m_1}(L_{\ell_1}) \cdots \partial_G^{m_n}(L_{\ell_n})$ is a polynomial in $a_j$ and $a_k^\dagger$ with complex coefficients (this can be seen by expressing $L_\ell$s as linear combinations of creation and annihilation operators). However, the reverse inclusion is true as well, since we can express creation and annihilation operators as linear combinations of the identity operator and $\partial_G^{k}(L_\ell)$ for $\ell=1,\dots, m$ and $k \geq 0$ thanks to $\eqref{eq:Gcond}.$ Therefore, $V=W$. Now, one can easily see that $W \subseteq {\rm supp}(p)$: indeed,
$$P_{t_1}L_{\ell_1}P_{t_2-t_1}\cdots P_{t_{n}-t_{n-1}}L_{\ell_n}P_{t_n}\ket{\psi} \in {\rm supp}(p)$$
thanks to Theorem \ref{theo:subharmonic}. Then so does
$$\partial^{m_1}_{t_1}\cdots \partial_{t_n}^{m_n}P_{t_1}L_{\ell_1}P_{t_2-t_1}\cdots P_{t_{n}-t_{n-1}}L_{\ell_n}P_{t_n}\ket{\psi}|_{t_1=\cdots =t_n=0},$$
which is exactly equal to the term appearing in eq. \eqref{eq:commutant}. Finally, we can use the same idea as in the proof of Theorem 4 in \cite{AFP22b} in order to prove that $V=\hh\subseteq {\rm supp}(p) \subsetneq\hh$, getting to a contradiction.

\bigskip $2. \Rightarrow 1.$ Let us assume that there exists a nontrivial linear space $\boldsymbol{{\cal X}}$ of $\CC^{2d}$ which is $\mathbf{Z}$-invariant in $\ker(\mathbf{C}_{Z})$; we will show that this implies the existence of a nontrivial subharmonic projection for $\TT$. There are three possible cases:
\begin{enumerate}
\item $\mathbf{Z}$ admits a real eigenvector in $\boldsymbol{{\cal X}}$, i.e. there exists $\mathbf{v} \in \boldsymbol{{\cal X}}\cap \RR^{2d}$ such that $\mathbf{Z}\mathbf{v}=\alpha \mathbf{v}$ for some $\mathbf{v}$ and $\alpha \in \RR$. In this case, one has
$$\TT_{t}(W(u\mathbf{v}))=e^{iu\int_{0}^te^{\alpha s}ds\langle \boldsymbol{\zeta},\mathbf{v} \rangle }W(e^{\alpha t}u\mathbf{v}), \quad u \in \mathbb{R},$$
which means that if we denote by $\chi_{A}(R(\mathbf{v}))$ the spectral projection of $R(\mathbf{v})$ corresponding to the Borel set $A$, then
$$\TT_t(\chi_{A}(R(\mathbf{v}))=\chi_{e^{-\alpha t}A-\int_0^te^{-\alpha(t-s)}ds\langle \boldsymbol{\zeta},\mathbf{v} \rangle}(R(\mathbf{v})).$$
One can easily find nontrivial subharmonic projections of the form $\chi_{A}(R(\mathbf{v}))$ for every value of $\alpha$ and $\beta:=\langle \boldsymbol{\zeta},\mathbf{v} \rangle$:
\begin{itemize}
\item $\alpha<0$, one can pick $A=\{x \in \RR: \, |x+\beta/\alpha|<\epsilon\}$ for every $\epsilon>0$;
\item $\alpha=0$, one can pick $A=\{x \in \RR:\, x\beta  \geq 0\},$
\item $\alpha>0$, one can pick $A=\{x \in \RR: \, |x+\beta/\alpha|>\epsilon\}$ for every $\epsilon>0$.
\end{itemize}
The computations justifying this claim can be found for the more general two dimensional case in item 3. below.
\item $\mathbf{Z}$ does not admit any real eigenvector in $\boldsymbol{{\cal X}}$, but there exists a complex eigenvector $\mathbf{z} \in \boldsymbol{{\cal X}}$ such that $\mathbf{Z} \mathbf{z}=(\alpha+i\beta)\mathbf{z}$ and, if we define $\mathbf{v}:=\Re(\mathbf{z})$ and $\mathbf{w}=\Im(\mathbf{z})$, one has that $\langle \mathbf{v}, \mathbf{J}\mathbf{w} \rangle\neq 0$. Let us define ${\cal W}:={\rm span}_{\RR}\{\mathbf{v}, \mathbf{w}\}$. In this case, $\TT$ maps the Von Neumann algebra $\{W(\mathbf{s}): \, \mathbf{s} \in {\cal W}\}^{\prime\prime}$ into itself and its action on it is unitarily equivalent to that of a one-mode GQMS whose quantum diffusion matrix has a non-trivial kernel. In this case, Theorem 7 in \cite{AFP21} and Theorem \ref{th:equivcond} together ensure that there exists a nontrivial subharmonic projection $p \in \{W(\mathbf{s}): \, \mathbf{s} \in {\cal W}\}^{\prime\prime}$.
\item Finally, let us consider the case in which $\mathbf{Z}$ does not admit any real eigenvector in $\boldsymbol{{\cal X}}$ and for every complex eigenvector $\mathbf{z} \in \boldsymbol{{\cal X}}$, i.e. $\mathbf{Z} \mathbf{z}=(\alpha+i\beta)\mathbf{z}$, one has $\langle \mathbf{v}, \mathbf{J}\mathbf{w} \rangle= 0$, where $\mathbf{v}:=\Re(\mathbf{z})$ and $\mathbf{w}=\Im(\mathbf{z})$. In this case $R(\mathbf{v})$ and $R(\mathbf{w})$ admit a common spectral resolution and we denote the spectral projections by $\chi_{A}(R(\mathbf{v}),R(\mathbf{w})) $ for every Borel set $A \subseteq \RR^{2}$. Moreover, notice that
$$
0=\langle \mathbf{z},\mathbf{C}_{Z}\mathbf{z} \rangle=\langle \mathbf{z},\mathbf{C}\mathbf{z}\rangle -i\langle \mathbf{z},(\mathbf{Z}^T\mathbf{J}+\mathbf{J}\mathbf{Z})\mathbf{z} \rangle=\langle \mathbf{z},\mathbf{C}\mathbf{z}\rangle,$$
where we used that $\mathbf{z}=\mathbf{v}+i\mathbf{w}$, that $\boldsymbol{{\cal W}}:={\rm span}_{\RR}\{\mathbf{v}, \mathbf{w}\}$ is $\mathbf{Z}$-invariant and that for every $\mathbf{s}, \mathbf{r}\in \boldsymbol{{\cal W}}$ one has $\langle \mathbf{s}, \mathbf{J} \mathbf{r} \rangle=0$. Now, we can expand further and write
$$0=\langle \mathbf{z},\mathbf{C}\mathbf{z}\rangle=\langle \mathbf{v}, \mathbf{C}\mathbf{v} \rangle + \langle \mathbf{w}, \mathbf{C}\mathbf{w} \rangle \Leftrightarrow 0=\langle \mathbf{v}, \mathbf{C}\mathbf{v} \rangle = \langle \mathbf{w}, \mathbf{C}\mathbf{w} \rangle, $$
therefore we obtained that $\boldsymbol{{\cal W}} \subseteq \ker(\mathbf{C})$ and one has
$$\TT_t(W(\mathbf{s}))=e^{i \int_0^t\langle \boldsymbol{\zeta},e^{s\mathbf{Z}}\mathbf{s} \rangle ds}W(e^{t \mathbf{Z}}\mathbf{s}), \quad \mathbf{s} \in \boldsymbol{{\cal W}}, \, t \geq 0,$$
which means that the action on the commutative algebra generated by $\{\chi_{A}(R(\mathbf{v}), R(\mathbf{w})): A \text{ Borel subset of } \RR^2\}$ is unitarily equivalent to the one on $L^2(\RR^2)$ acting on characteristic functions in the following way:
$$\chi_A(x) \mapsto \chi_{A_t}(x), \quad A_t:=e^{-B^Tt}A-\int_0^te^{-B^T(t-s)}bds,$$
where $B=\begin{pmatrix} \alpha & -\beta\\
\beta & \alpha \end{pmatrix}$ and $b$ is some vector in $\RR^2.$
Therefore, the projections in $\{W(\mathbf{s}):\, \mathbf{s} \in \boldsymbol{{\cal W}}\}^{\prime\prime}$ corresponding to the following characteristic functions are nontrivial subharmonic projections:
\begin{itemize}
    \item if $\alpha<0$ or $\alpha=0$ and $\beta \neq 0$, one can pick the characteristic function corresponding to $A:=\{x \in \RR^2:\, \|x+B^{-T}b\|<\epsilon\}$;
    \item if $\alpha=\beta=0$, one can pick the characteristic function corresponding to $A:=\{x \in \RR^2:\, \langle x,b\rangle \geq 0\}$;
    \item if $\alpha >0$, one can pick the characteristic function corresponding to $A:=\{x \in \RR^2:\, \|x+B^{-T}b\|>\epsilon\}$.
\end{itemize}
Indeed, this can be easily checked using the fact that
$$x \in A_t \Leftrightarrow e^{tB^T}x+\int_0^t e^{sB^T}b ds=\begin{cases} x+tb & \text{if }\alpha=\beta=0,\\e^{tB^T}(x+B^{-T}b)-B^{-T}b & \text{otherwise}\end{cases}$$
(notice that $B^T$ is invertible unless $\alpha=\beta=0$) and observing that for every $x^\prime \in \RR^2$, $t \geq 0$
$$\|e^{tB^{T}}x^\prime\|\begin{cases} \leq \|x^\prime\| & \text{if } \alpha \leq 0,\\
>\|x^\prime\| & \text{otherwise.}\end{cases}$$
\end{enumerate}
\end{proof}

\bibliographystyle{abbrv}
\bibliography{biblio.bib}

\begin{thebibliography}{10}

\bibitem{AFP21}
J.~Agredo, F.~Fagnola, and D.~Poletti.
\newblock Gaussian quantum {M}arkov semigroups on a one-mode {F}ock space:
  irreducibility and normal invariant states.
\newblock {\em Open Syst. Inf. Dyn.}, 28(1):Paper No. 2150001, 39, 2021.

\bibitem{AFP22}
J.~Agredo, F.~Fagnola, and D.~Poletti.
\newblock The decoherence-free subalgebra of {G}aussian quantum {M}arkov
  semigroups.
\newblock {\em Milan Journal Of Mathematics}, 90(1):257--289, 2022.

\bibitem{AFP22b}
J.~Agredo, F.~Fagnola, and D.~Poletti.
\newblock The {K}ossakowski matrix and strict positivity of {M}arkovian quantum
  dynamics.
\newblock {\em Open Syst. Inf. Dyn.}, 29(2):Paper No. 2250005, 16, 2022.

\bibitem{AF08}
A.~Arnold, F.~Fagnola, and L.~Neumann.
\newblock Quantum {F}okker-{P}lanck models: the {L}indblad and {W}igner
  approaches.
\newblock In {\em Quantum probability and related topics}, volume~23 of {\em
  QP--PQ: Quantum Probab. White Noise Anal.}, pages 23--48. World Sci. Publ.,
  Hackensack, NJ, 2008.

\bibitem{BSFQ26}
J.~R. Bola\~{n}os Servin, F.~Fagnola, and R.~Quezada.
\newblock Gaussian quantum markov semigroups of weak coupling limit type.
\newblock {\em Infinite Dimensional Analysis, Quantum Probability and Related
  Topics}, 0(0):2550012, 0.

\bibitem{CJ20}
R.~Carbone and A.~Jen\v~cov\'a.
\newblock On period, cycles and fixed points of a quantum channel.
\newblock {\em Ann. Henri Poincar\'e}, 21(1):155--188, 2020.

\bibitem{CS08}
R.~Carbone and E.~Sasso.
\newblock Hypercontractivity for a quantum {O}rnstein-{U}hlenbeck semigroup.
\newblock {\em Probab. Theory Related Fields}, 140(3-4):505--522, 2008.

\bibitem{CFL00}
F.~Cipriani, F.~Fagnola, and J.~M. Lindsay.
\newblock Spectral analysis and {F}eller property for quantum
  {O}rnstein-{U}hlenbeck semigroups.
\newblock {\em Comm. Math. Phys.}, 210(1):85--105, 2000.

\bibitem{Co82}
M.~Courbage.
\newblock Mathematical problems of irreversible statistical mechanics for
  quantum systems. {I}. {A}nalytic continuation of the collision and
  destruction operators by spectral.
\newblock {\em J. Math. Phys.}, 23(4):646--651, 1982.

\bibitem{DFSU16}
J.~Deschamps, F.~Fagnola, E.~Sasso, and V.~Umanit\`{a}.
\newblock Structure of uniformly continuous quantum markov semigroups.
\newblock {\em Reviews in Mathematical Physics}, 28(01):1650003, 2016.

\bibitem{FG25}
F.~Fagnola and F.~Girotti.
\newblock Irreducibility of quantum {M}arkov semigroups, uniqueness of
  invariant states and related properties.
\newblock {\em Complex Anal. Oper. Theory}, 20(5):Paper No. 121, 30, 2026.

\bibitem{FL25}
F.~Fagnola and Z.~Li.
\newblock Spectral analysis for gaussian quantum markov semigroups, 2025.

\bibitem{FP22}
F.~Fagnola and D.~Poletti.
\newblock On irreducibility of {G}aussian quantum {M}arkov semigroups.
\newblock {\em Infin. Dimens. Anal. Quantum Probab. Relat. Top.}, 25(4):Paper
  No. 2240001, 19, 2022.

\bibitem{FPSU24}
F.~Fagnola, D.~Poletti, E.~Sasso, and V.~Umanità.
\newblock The spectral gap of a gaussian quantum markovian generator.
\newblock {\em Journal of Functional Analysis}, 289(10):111119, 2025.

\bibitem{FR02}
F.~Fagnola and R.~Rebolledo.
\newblock Subharmonic projections for a quantum {M}arkov semigroup.
\newblock {\em J. Math. Phys.}, 43(2):1074--1082, 2002.

\bibitem{GP26}
F.~Girotti and D.~Poletti.
\newblock Gaussian quantum {M}arkov semigroups on finitely many modes admitting
  a normal invariant state.
\newblock {\em J. Math. Anal. Appl.}, 556(1):Paper No. 130150, 37, 2026.

\bibitem{GMR24}
P.~Gondolf, T.~M{\"{o}}bus, and C.~Rouz{\'{e}}.
\newblock Energy preserving evolutions over {B}osonic systems.
\newblock {\em {Quantum}}, 8:1551, Dec. 2024.

\bibitem{KKW16}
M.~Keyl, J.~Kiukas, and R.~F. Werner.
\newblock Schwartz operators.
\newblock {\em Rev. Math. Phys.}, 28(3):1630001, 60, 2016.

\bibitem{Ho82}
A.~S. A.~S. Kholevo.
\newblock {\em Probabilistic and statistical aspects of quantum theory / A.S.
  Holevo.}
\newblock North-Holland series in statistics and probability ; v. 1.
  North-Holland, Amsterdam ; Oxford, 1982.

\bibitem{KP04}
C.~K. Ko and Y.~M. Park.
\newblock Construction of a family of quantum {O}rnstein-{U}hlenbeck
  semigroups.
\newblock {\em J. Math. Phys.}, 45(2):609--627, 2004.

\bibitem{Li25}
Z.~Li.
\newblock On the existence of the kms spectral gap in gaussian quantum markov
  semigroups.
\newblock {\em arXiv preprint arXiv:2512.23414}, 2025.

\bibitem{LHB24}
D.~Lonigro, A.~Hahn, and D.~Burgarth.
\newblock On the liouville–von neumann equation for unbounded hamiltonians.
\newblock {\em Open Systems \& Information Dynamics}, 31(04):2450018, 2024.

\bibitem{LMP20}
A.~Lunardi, G.~Metafune, and D.~Pallara.
\newblock The {O}rnstein-{U}hlenbeck semigroup in finite dimension.
\newblock {\em Philos. Trans. Roy. Soc. A}, 378(2185):20200217, 15, 2020.

\bibitem{MTG09}
S.~Meyn, R.~L. Tweedie, and P.~W. Glynn.
\newblock {\em Markov Chains and Stochastic Stability}.
\newblock Cambridge Mathematical Library. Cambridge University Press, 2
  edition, 2009.

\bibitem{Nu13}
D.~Nualart.
\newblock {\em The Malliavin Calculus and Related Topics}.
\newblock Probability and Its Applications. Springer New York, NY, 1 edition,
  2013.

\bibitem{NY17}
H.~I. Nurdin and N.~Yamamoto.
\newblock {\em Linear Dynamical Quantum Systems: Analysis, Synthesis, and
  Control}.
\newblock Springer Publishing Company, Incorporated, 1st edition, 2017.

\bibitem{PO22}
D.~Poletti.
\newblock Characterization of gaussian quantum markov semigroups.
\newblock {\em Infinite Dimensional Analysis, Quantum Probability and Related
  Topics}, 25(03):2250014, 2022.

\bibitem{PSU26}
D.~Poletti, E.~Sasso, and V.~Umanit\`a.
\newblock Symmetric {G}aussian quantum {M}arkov semigroups.
\newblock {\em J. Math. Phys.}, 67(2):Paper No. 022102, 2026.

\bibitem{RGM25}
P.~C. Rico, P.~Gondolf, and T.~Möbus.
\newblock Instantaneous sobolev regularization for dissipative bosonic
  dynamics, 2025.

\bibitem{SP23}
R.~Schubert and T.~Plastow.
\newblock Decoherence time scales and the h\"ormander condition, 2023.

\bibitem{SXZ26}
L.~Sun, Z.~Xu, and H.~Zhang.
\newblock Hypercontractivity for a family of quantum ornstein-uhlenbeck
  semigroups, 2026.

\bibitem{We84}
R.~Werner.
\newblock Quantum harmonic analysis on phase space.
\newblock {\em J. Math. Phys.}, 25(5):1404--1411, 1984.

\bibitem{Ya14}
N.~Yamamoto.
\newblock Decoherence-free linear quantum subsystems.
\newblock {\em IEEE Transactions on Automatic Control}, 59(7):1845--1857, 2014.

\bibitem{DZ22}
G.~Zhang and Z.~Dong.
\newblock Linear quantum systems: A tutorial.
\newblock {\em Annual Reviews in Control}, 54:274--294, 2022.

\end{thebibliography}

\end{document}